\definecolor{darkgreen}{rgb}{0,0.5,0}
\crefname{theorem}{Theorem}{Theorems}
\Crefname{lemma}{Lemma}{Lemmas}
\Crefname{invariant}{Invariant}{Invariants}
\Crefname{claim}{Claim}{Claims}
\Crefname{observation}{Observation}{Observations}
\Crefname{algorithm}{Algorithm}{Algorithms}
\Crefname{figure}{Figure}{Figures}
\Crefname{challenge}{Challenge}{Challenges}
\newtheorem{theorem}{Theorem}[section]
\newtheorem{lemma}[theorem]{Lemma}
\newtheorem{corollary}[theorem]{Corollary}
\newtheorem{claim}[theorem]{Claim}
\newtheorem{observation}[theorem]{Observation}
 \newtheorem{result}{Result}
\title{Parallel Set Cover and Hypergraph Matching\\ via Uniform Random Sampling
}
\author{
    Laxman Dhulipala \\ Google Research
    \and Michael Dinitz \\ Johns Hopkins University
    \and Jakub Łącki \\ Google Research, New York
    \and Slobodan Mitrovi\' c \\ UC Davis
}
\newcommand{\eqdef}{\stackrel{\text{\tiny\rm def}}{=}}
\newcommand{\sample}[0]{\textsc{Sample}}
\newcommand{\ssp}[0]{\textsc{SSP}\xspace}
\newcommand{\setcover}[0]{\textsc{SetCover}\xspace}
\newcommand{\matching}[0]{\textsc{Matching}\xspace}
\newcommand{\hm}[0]{\textsc{HypergraphMatching}\xspace}
\newcommand{\vcover}[0]{\textsc{VertexCover}\xspace}
\newcommand{\poly}[0]{\mathrm{poly}}
\newcommand{\E}[1]{\mathbb{E}[#1]}
\newcommand{\prob}[1]{\Pr \left[ #1 \right]}
\newcommand{\ceil}[1]{\left \lceil #1 \right \rceil}
\newcommand{\floor}[1]{\left \lfloor #1 \right \rfloor}
\newcommand{\rb}[1]{\left( #1 \right)}
\newcommand{\cb}[1]{\left\{ #1 \right\}}
\newcommand{\eps}{\epsilon}
\newcommand{\bD}{\bar{D}}
\newcommand{\bh}{\bar{h}}
\newcommand{\tN}{\tilde{N}}
\newcommand{\tO}{\tilde{O}}
\newcommand{\manis}[0]{\textsc{MaNIS}\xspace}
\newcommand{\hdelta}[0]{\ensuremath{H_{\Delta}}}
\newcommand{\ho}{\hat{O}}
\newcommand{\mdnoteinline}[1]{\todo[inline, size=\normalsize, color=green!40]{Mike's Note: #1}}
\newcommand{\stodo}[1]{\todo[color=purple!10,inline]{Slobodan: #1}}
\newcommand{\kuba}[1]{{\color{blue} Kuba: #1}}
\newcommand{\slobo}[1]{{\color{purple} Slobodan: #1}}
\newcommand{\laxman}[1]{{\color{orange} Laxman: #1}}
\newcommand{\lnote}[1]{\todo[inline, size=\normalsize, color=brown!40]{Laxman: #1}}
\begin{document}

\maketitle
\begin{abstract}
The \textsc{SetCover} problem has been extensively studied in many different models of computation, including parallel and distributed settings.
From an approximation point of view, there are two standard guarantees: an $O(\log \Delta)$-approximation (where $\Delta$ is the maximum set size) and an $O(f)$-approximation (where $f$ is the maximum number of sets containing any given element).

In this paper, we introduce a new, surprisingly simple, model-independent approach to solving \textsc{SetCover} in unweighted graphs. We obtain multiple improved algorithms in the MPC and CRCW PRAM models.
First, in the MPC model with sublinear space per machine, our algorithms can compute an $O(f)$ approximation to \textsc{SetCover} in $\hat{O}(\sqrt{\log \Delta} + \log f)$ rounds\footnote{We use the $\hat{O}(x)$ notation to suppress $\mathrm{poly} \log x$ and $\mathrm{poly} \log \log n$ terms.} and a $O(\log \Delta)$ approximation in $O(\log^{3/2} n)$ rounds.
Moreover, in the PRAM model, we give a $O(f)$ approximate algorithm using linear work and $O(\log n)$ depth.
All these bounds improve the existing round complexity/depth bounds by a $\log^{\Omega(1)} n$ factor.

Moreover, our approach leads to many other new algorithms, including improved algorithms for the \textsc{HypergraphMatching} problem in the MPC model, as well as simpler \textsc{SetCover} algorithms that match the existing bounds.
\end{abstract}

\thispagestyle{empty}

\newpage

\tableofcontents

\thispagestyle{empty}

\newpage


\clearpage\pagenumbering{arabic}

\section{Introduction}

There is perhaps no more central and important problem in the area of approximation algorithms than \setcover.
It has been a testbed for various algorithmic techniques that have become central in the field: greedy algorithms, deterministic and randomized rounding, primal-dual, dual fitting, etc.  Due to its importance, ubiquity, and the fact that many different algorithmic techniques can be used, it is widely considered a ``textbook problem'' and, for example, has been used to illustrate the very basics of approximation algorithms~\cite[Chapter 1]{WS11}.  There are essentially two standard approximation bounds, both of which can be achieved through a number of different algorithms: an $f$-approximation, where $f$ is the \emph{frequency} (the maximum number of sets containing any given element), and an $H_{\Delta} = O(\log \Delta)$-approximation, where $\Delta$ is the maximum set size and $H_k$ is the $k$'th harmonic number.\footnote{This result is often given as an $O(\log n)$-approximation since $\Delta \leq n$, but $H_{\Delta}$ is a technically stronger bound.}

Unsurprisingly, \setcover has also received significant attention in parallel and distributed models of computation.  However, the simple sequential algorithms for \setcover are not ``obviously'' parallelizable, so new algorithms have been developed for these models.  These lines of work range from classical complexity-theoretic models (e.g., showing that it can be approximated well in $\mathsf{NC}$~\cite{berger1994efficient}), classical parallel models such as PRAMs~\cite{berger1994efficient,rajagopalan1998primal,blelloch2011linear}, classical distributed models such as LOCAL~\cite{kuhn2006price,koufogiannakis2011distributed}, and modern models such as MapReduce and Massively Parallel Computation (MPC)~\cite{stergiou2015set,bateni2018optimal}. Much of this work has been model-focused rather than model-independent, and ideas and techniques from one model can only sometimes be transferred to a different model.

In this paper, we introduce a new, simple, and model-independent technique for solving unweighted \setcover in parallel settings. 
Our technique, which involves careful independent random sampling of either the sets or elements, yields both a $(1+\epsilon)f$-approximation and a $(1+\epsilon) \hdelta$-approximation and can be efficiently instantiated in multiple models of computation, including the MPC and PRAM models.
Moreover, it can also be extended to solve the approximate \hm problem in unweighted graphs.
By applying our technique, we obtain efficient algorithms for \setcover and \hm in MPC and PRAM models, which either improve upon or (essentially) match state-of-the-art algorithms for the problems.
Importantly, our technique provides a unified and model-independent approach across \hm and two variants of \setcover, and can be efficiently implemented in two fundamental models of parallel computation.

Our algorithms are obtained by parallelizing two classic $f$- and $O(\log \Delta)$-approximate \setcover algorithms.
The $f$-approximate algorithm repeatedly picks an uncovered element and adds all sets containing it to the solution.
The $O(\log \Delta)$-approximate in each step simply adds to the solution the set that covers the largest number of uncovered points.

Even though our parallelization of these algorithms is surprisingly direct, to the best of our knowledge, it has not been analyzed prior to our work.
At a high level, our algorithms perform independent random sampling to find a collection of sets to be added to the solution, remove all covered elements from the instance, and then repeat. By combining the random sampling-based approach with modern techniques in parallel algorithms, we are able to give state-of-the-art bounds.

\subsection{Our Contribution}
We now present the main contributions of the paper.
We study the unweighted version of \setcover.
To formulate the bounds we obtain, we assume the \setcover problem is represented by a bipartite graph, in which vertices on one side represent the sets, and vertices on the other side represent elements to be covered.
Edges connect elements with all sets that they belong to.
We use $\Delta$ to denote the maximum degree of a vertex representing a set, and $f$ to denote the maximum degree of a vertex representing an element.
We use $n$ to denote the number of vertices in the graph (equal to the number of sets plus the number of elements) and $m$ to denote the number of edges (the total size of all sets).

We start by presenting our results in the 
Massively Parallel Computation (MPC) model~\cite{karloff2010model,goodrich2011sorting,beame2017communication,andoni2014parallel}.
MPC computation proceeds in \emph{synchronous} rounds over $M$ machines. 
We assume that the input to the computation is partitioned arbitrarily across all machines in the first round.
Each machine has a local space of $\eta$ bits. In one round of computation, a machine first performs computation on its local data. Then, the machines can communicate by sending messages: each machine can send messages to any other machine.
The messages sent in one round are delivered at the beginning of the next round.
Hence, within a round, the machines, given the messages received in this round, work entirely independently.
Importantly, the total size of the messages sent or received by a machine in a given round is at most $\eta$ bits.

In the context of graph algorithms, there are three main regimes of MPC defined with respect to the relation of the available space on each machine $\eta$ to the number of vertices of the graph $n$.
In the \emph{super-linear} regime, $\eta = n^{1+c}$ for a constant $0 < c < 1$.
The \emph{nearly-linear} regime requires  $\eta = n \;\poly \log n$. Finally, the most restrictive and challenging \emph{sub-linear} regime requires $\eta = n^c$.
In all the regimes, we require that the total space of all machines is only a $\poly \log n$ factor larger than what is required to store the input.

In our definition of the \setcover problem, the number of vertices is the number of sets plus the number of elements.
We note that some \setcover algorithms in the linear space regime (both prior and ours) only require space near-linear in the number of sets plus sublinear in the number of elements, but we use a single parameter for simplicity.

\begin{table*}[t!]
    \centering
    \begin{tabular}{*6c}
        \toprule
        Ref. & Space/Machine & Total Space & Approx. Factor & Det. & Round Complexity \\
        \midrule
        \cite{blelloch2011linear} &  $O(n^{\delta})$ & $O(m)$ & $(1+\epsilon)\hdelta$ & No & $O(\log^2 n)^{\ddag}$ \\
        Here &  $O(n^{\delta})$ & $\tilde{O}(m)$ & ${(1+\epsilon)\hdelta}^{\ddag}$ & No & $\ho(\log \Delta \cdot \sqrt{\log f} )^{\ddag}$ \\
        Here & $O(n^{\delta})$ &  $\tilde{O}(m)$ & $(1+\epsilon)f^{\ddag}$ & No & $\ho(\sqrt{\log \Delta} )^{\ddag}$\\
        ~\cite{ben2019optimal} & $O(n^{\delta})$ & $O(m)$ & $f + \epsilon$ & Yes & $O(\log \Delta / \log\log \Delta)$ \\
        ~\cite{even2018distributed} & $O(n^{\delta})$ & $O(m)$ & $(1+\epsilon)f$ & Yes & $O(\log (f\Delta) / \log\log (f\Delta))$ \\
        \midrule
        \cite{bateni2018optimal} & $\tilde{O}(n)$ & $\tilde{O}(m)$  & $O(\log n)^{\ddag}$ & No & $O(\log n)^{\ddag}$ \\
        Here & $\tilde{O}(n)$ & $\tilde{O}(m)$ & ${(1+\epsilon)\hdelta}^{\ddag}$ & No & $O(\log \Delta)^{\ddag}$ \\
        \midrule
        ~\cite{harvey2018greedy} & $O(f n^{1+c})$ & $O(m)$ & $f$ & No & $O((\phi/ c)^2)$ \\
        \bottomrule
    \end{tabular}
    \caption{Round complexity of \setcover algorithms in the Massively Parallel Computation model.
    We use $n$ to denote the number of vertices in the graph (which is equal to the number of sets plus the number of elements) and $m$ to denote the number of edges (the total size of all sets).
    $\delta \in (0, 1)$ is a constant.
    In~\cite{harvey2018greedy} $\phi \in (0, 1]$ is any value satisfying $m \leq n^{1+\phi}$
    and $c < \phi$ controls the amount of space per machine.
    We use $^{\ddag}$ to denote
    that a bound holds with high probability.
    }\label{table:relworkpram-2}
\end{table*}

\paragraph*{\setcover in MPC}
Our first result is a set of improved MPC algorithms for \setcover.  

\begin{result}[\cref{thm:MPC-log-approx-linear,thm:log-approx-MPC-sub-linear,thm:f-approx-MPC} rephrased]
\label{result:MPC-setcover}
Let $\eps \in (0, 1/2)$ be a constant. Denote by $f$ the maximum number of sets an element appears in and by $\Delta$ the largest set size. 
Then, \setcover can be solved in MPC with the following guarantees: 
\begin{itemize}
  \item $(1+\eps) \hdelta$-approximation~in $\ho\left(\poly(1/\epsilon) \cdot \log \Delta \cdot \sqrt{\log f} \right)$ rounds in the sub-linear regime,
    \item $(1+\eps) \hdelta$-approximation~in $O(\log \Delta)$ rounds in the nearly-linear regime,
    \item $(1+\eps) f$-approximation~in $\ho\left(\poly(1/\epsilon) \cdot \rb{\sqrt{\log \Delta} + \log f} \right)$ rounds in the sub-linear regime.
\end{itemize}
The algorithms use $\tilde{O}(m)$ total space, and the round complexities hold with high probability.
\end{result}

Before our work, the best-known round complexity for the $(1+\eps) H_{\Delta}$ \setcover in the sub-linear regime was $O(\log \Delta \cdot \log f)$; this complexity is implicit in \cite{berger1994efficient}. 
Our algorithm improves this bound by a $\sqrt{\log f}$ factor.
In the nearly-linear space regime, it is possible to achieve $O(\log n)$-approximation in $O(\log n)$ rounds by building on \cite{bateni2018optimal}. It is unclear how to transfer this approach to the sub-linear regime.
We improve the approximation ratio to $\hdelta$, which is better, especially when $\Delta \ll n$.

In terms of $(1+\eps) f$-approximation, the most efficient \setcover algorithm in MPC follows by essentially a direct adaption of the \textsc{Congest}/\textsc{Local} $O(\log \Delta / \log \log \Delta)$ round algorithms in \cite{ben2019optimal,even2018distributed} to MPC.
Hence, for $f \le 2^{O(\sqrt{\log n})}$, our work improves the MPC round complexity nearly quadratically.

\paragraph*{\setcover in PRAM}
Since our main algorithmic ideas are model-independent, they also readily translate to the PRAM setting, giving a new result for $(1+\epsilon)f$-approximate \setcover that improves over the state-of-the-art, and a streamlined $(1+\epsilon)\hdelta$-approximation algorithm for \setcover~\cite{blelloch2011linear}.

\begin{table*}[t!]
    \centering
    \begin{tabular}{*6c}
        \toprule
        Ref. & Approx. Factor & Det. & Work & Depth & Notes \\
        \midrule
        \cite{khuller1994primal} & $(1+\epsilon)f$ & Yes & $O(f m)$ & $O(f \log^{2} n)$ &  \\
        \cite{koufogiannakis2011distributed} & $2$ & No & $O(m)^{*}$ & $O(\log n)^{\ddag}$ & \footnotesize{For weighted instances with $f=2$.}\\
        Here & $(1+\epsilon)f^{*}$ & No & $O(m)$ & $O(\log n)$ &  \\
        \bottomrule
    \end{tabular}
    \caption{Parallel cost bounds (work and depth) of $f$-approximate \setcover algorithms in the CRCW PRAM.
    $m$ denotes the sum of the sizes of all sets (or the number of edges in the bipartite
    representation of \setcover), $n$ denotes the number of elements, $f$ denotes the maximum number of sets any element is contained in, and $\epsilon \in (0, 1/2)$ is an arbitrary constant.
    We use $^{*}$ to denote that a bound holds in expectation, and $^{\ddag}$ to denote
    that a bound holds with high probability.
    }\label{table:relworkpram}
\end{table*}

\begin{result}[\cref{thm:f-approx-PRAM,thm:PRAM-log-approx-setcover} rephrased]
\label{result:PRAM-setcover}
Let $\epsilon \in (0, 1/2)$ be an absolute constant.
Let $f$ be the maximum number of sets an element appears in, and let $\Delta$ be the largest set size. Then, \setcover can be solved in CRCW PRAM with the following guarantees:
\begin{itemize}
    \item $(1+\epsilon)f$-approx.~in expectation with \emph{deterministic}  $O(n + m)$ work and $O(\log n)$ depth.
    \item $(1+\epsilon)\hdelta$-approx.~in expectation with \emph{deterministic} $O(n + m)$ work and $O(\log^2 n \log\log n)$ depth.
\end{itemize}
\end{result}

In the context of $(1+\eps)f$-approximation, our result improves the state-of-the-art~\cite{khuller1994primal} total work by $f$ while depth is improved by an $f \log n$ factor.
For $(1+\epsilon)\hdelta$-approximation, 
our result obtaining deterministic $O(\log^2 n \log\log n)$ depth and providing the approximation guarantee in expectation should be compared to the state-of-the-art PRAM algorithm of Blelloch, Peng, and Tangwongsan~\cite{blelloch2011linear}, which provides a depth guarantee in expectation and a worst-case guarantee for the approximation ratio.
While the expected depth bound reported in \cite{blelloch2011linear} is $O(\log^3 n)$, we believe it can be improved to $O(\log^2 \log \log n)$ using some of the implementation ideas in our PRAM algorithm (see the full version for more discussion).
A more detailed comparison between the prior and our results in PRAM is given in \cref{table:relworkpram}.\footnote{Our algorithms provide approximation in expectation. Nevertheless, this can be lifted to ``with high probability'' guarantees by executing $O(\log n / \eps)$ independent instances of our algorithm and using the smallest set cover. It incurs an extra $O(\log n / \eps)$ factor in the total work while not affecting the depth asymptotically.}

\paragraph*{\hm in MPC}
Finally, we also obtain an improved MPC algorithm for finding hypergraph matchings, i.e., for finding matchings in graphs where an edge is incident to (at most) $h$ vertices.

\begin{result}[\cref{thm:hypermatching-MPC} rephrased]
\label{result:hypermatching-MPC}
    Let $\eps \in (0, 1/2)$ be an absolute constant. There is an MPC algorithm that, in expectation, computes a $(1-\eps) / h$ approximate maximum matching in a rank $h$ hypergraph in the sub-linear space regime. 
    This algorithm succeeds with high probability, runs in $\ho\rb{\poly(1/\epsilon) \cdot \rb{h^4 + h\cdot \sqrt{\log \Delta}}}$ MPC rounds and uses a total space of $\tO\rb{m}$. 
\end{result}
Prior work~\cite{hanguir2021distributed} shows how to solve \hm in rank $h$ hypergraphs in $O(\log n)$ rounds in the \emph{nearly-linear} space regime.
So, for $h \in O(1)$, \cref{result:hypermatching-MPC} improves quadratically over the known upper bound and, in addition, extends to the sub-linear space regime at the cost of slightly worsening the approximation ratio. For simple graphs, i.e., when $h = 2$, the work~\cite{ghaffari2019sparsifying} already provides $\tO(\sqrt{\log \Delta})$ round complexity algorithm for computing $\Theta(1)$-approximate, and also maximal, matching. 
Nevertheless, our approach is arguably simpler than the one in \cite{ghaffari2019sparsifying} and, as such, lands gracefully into the MPC world.

\subsection{Further Related Work} \label{sec:related}




\paragraph*{\setcover in the MPC Model}
Both  \setcover and \vcover, i.e., \setcover with $f=2$, have been extensively studied in the MPC model.
Stergiou and Tsioutsiouliklis~\cite{stergiou2015set} studied the \setcover problem in MapReduce and provided an empirical evaluation.
Their main algorithm is based on bucketing sets to within a $(1+\epsilon)$ factor with respect to the set sizes and then processing all the sets within the same bucket on one machine.
Their algorithm, when translated to the MPC model, runs in $O(\log \Delta)$  iterations, but does not come with a bound on the required space per machine, which in the worst case can be linear in the input size.

Harvey, Liaw, and Liu~\cite{harvey2018greedy} studied weighted \vcover and \setcover in the MPC model and obtained results for both $f$ and $(1+\epsilon)\hdelta$-approximation. 
Their results exhibit a tradeoff between the round complexity and the space per machine.
For $f$-approximation, they gave a $O((\phi / c)^2)$ round algorithm with space per machine $O(f n^{1 + c})$ by applying filtering~\cite{lattanzi2011filtering} to a primal dual algorithm. When the space per machine is nearly-linear, i.e., $c = O(1/\log n)$, this approach results in $O(\phi^2 \log^2 n)$ rounds, which is quadratically slower than our algorithm.

Bateni, Esfandiari, and Mirrokni~\cite{bateni2018optimal} developed a MapReduce algorithm for the $k$-cover problem that uses $\tO(n)$ space per machine. 
In this problem, one is given an integer $k$ and is asked to choose a family of at most $k$ sets that cover as many elements as possible. 
The problem, since it is a submodular maximization under cardinality constraint, admits a $\Theta(1)$-approximation. 
Their algorithm can be turned into an $O(\log n)$-approximate one for \setcover that uses $O(\log n)$ MPC rounds as follows.
Assume that $k$ is the minimum number of sets that covers all the elements; this assumption can be removed by making $O(\log n / \eps)$ guesses of the form $k = (1+\eps)^i$. 
Then, each time \cite{bateni2018optimal} is invoked, it covers a constant fraction of the elements. So, repeating that process $O(\log n)$ times covers all the elements using $O(k \cdot \log n)$ many sets. Our result provides tighter approximation and, when $\Delta \ll n$, also lower round complexity.

Since $k$-cover is a submodular maximization problem, the work \cite{liu2019submodular} yields $O(\log n)$ MPC round complexity and $O(\log n)$ approximation for \setcover. In the context of $k$-cover or \setcover, it is worth noting that the algorithm of \cite{liu2019submodular} sends $\Theta(\sqrt{n k})$ sets to a machine. 
It is unclear whether all those sets can be compressed to fit in $O(n)$ or smaller memory.


Ghaffari and Uitto~\cite{ghaffari2019sparsifying} developed a $\tO(\sqrt{\log \Delta})$ round complexity algorithm for \vcover in the sub-linear space regime. They first compute a maximal independent set, which is then used to obtain a maximal matching in the corresponding line graph. Finally, by outputting the endpoints of the edges in that maximal matchings, the authors provide a $2$-approximate \vcover. Our algorithm has a matching round complexity while, at the same time, it is arguably simpler.
For both $f$-approximation and $\hdelta$-approximation, we are unaware of any MPC algorithms that run in the sub-linear space regime. However, we note that the PRAM algorithm of Blelloch, Peng, and Tangwongsan~\cite{blelloch2011linear} can be simulated in this setting to obtain a round complexity of $O(\log^2 n)$ with $O(m)$ total space.

\paragraph*{$f$-Approximate \setcover{} in PRAM}
The first $f$-approximation algorithms for \setcover in the sequential setting are due to Hochbaum~\cite{hochbaum1982approximation}.
In the unweighted case, we can sequentially obtain an $f$-approximation in $O(m)$ work by picking any element, adding all of $\leq f$ sets containing it to the cover, and removing all newly covered elements.
For parallel algorithms aiming for $f$-approximation, Khuller, Vishkin, and Young~\cite{khuller1994primal} gave the first parallel $(1+\epsilon)f$-approximation for weighted \setcover that runs in $O(f m\log(1/\epsilon))$ work and $O(f \log^2 n \log(1/\epsilon))$ depth.
Their method uses a deterministic primal-dual approach that in each iteration raises the dual values $p(e)$ on every uncovered element $e$ until the primal solution, which is obtained by rounding every set $s$ where $\sum_{e \in s} p(e) \geq (1-\epsilon)w(s)$, is a valid set cover.
Their work analysis bounds the total number of times an element is processed across all $O(f\log n)$ iterations by $m$, giving a total work of $O(f\cdot m)$, which is not work-efficient. Their algorithm also has depth linear in $f$, which means that the number of iterations of their algorithm can be as large as $O(\log^2 n)$ for $f \leq \log n$, and the depth therefore as large as $O(\log^3 n)$.

For weighted \vcover, Koufogiannakis and Young~\cite{koufogiannakis2011distributed} gave an elegant $2$-approximation that runs in $O(m)$ work in expectation and $O(\log n)$ depth.
They generalize their algorithm to work for $f$-approximate weighted \setcover in the {\em distributed} setting using Linial-Saks decomposition~\cite{panconesi1996complexity}; however, this does not imply an $\mathsf{NC}$ or $\mathsf{RNC}$ algorithm when $f > 2$.\footnote{$\mathsf{NC}$ contains all problems that admit log-space uniform circuits of polynomial size and poly-logarithmic depth and is the primary complexity class of interest when designing parallel algorithms. 
$\mathsf{RNC}$ extends $\mathsf{NC}$ by allowing the circuit access to randomness. By known simulation results~\cite{karp1988survey}, polynomial work and poly-logarithmic depth (randomized) PRAM algorithms also imply membership in $\mathsf{NC}$ ($\mathsf{RNC}$).}

Unlike the deterministic $(1+\epsilon)f$-approximation of Khuller et al.~\cite{khuller1994primal}, our algorithm is randomized and produces a set cover with the same approximation guarantees in expectation.
By contrast, our algorithm is easy to understand, analyze (with \cref{lem:main-sampling} as a given) and argue correctness.
Our algorithm is well suited for implementation and has small constant factors, since every element set or element and their incident edges are processed {\em exactly once} when the element is sampled or when the set is chosen.
We note that our algorithm also implies that $(1+\epsilon)f$-approximate \setcover is in $\mathsf{RNC}^1$ for any $f$; the work of~\cite{khuller1994primal} only implies this result for $f = O(1)$.


\paragraph*{\matching and \hm in the Massively Parallel Computation Model}
The study of approximate matchings in MPC was initiated by Lattanzi et al.~\cite{lattanzi2011filtering}, who developed an $O(1)$ round algorithm for finding a maximal matching when the space per machine is $n^{1+\mu}$, for any constant $\mu > 0$. In the linear space regime, a line of work~\cite{czumaj2018round,ghaffari2018improved,assadi2019coresets,behnezhad2019exponentially} culminated in $O(\log \log n)$ MPC round complexity. In the sublinear space  regime, Ghaffari and Uitto~\cite{ghaffari2019sparsifying} developed a method that finds a maximal matching in $\tO(\sqrt{\log n})$ rounds.  When each machine has at least $O(n r)$ space, Hanguir and Stein~\cite{hanguir2021distributed} show how to find a maximal matching in $r$-hypergraph in $O(\log n)$ MPC rounds.
Their approach follows the filtering idea developed in \cite{lattanzi2011filtering}.
Our work does not only provide nearly quadratically lower round complexity compared to \cite{hanguir2021distributed}, but it also extends to the sub-linear space regime.

\paragraph*{$\hdelta$-approximate \setcover{} in PRAM}
Sequentially, $\hdelta$-approximate \setcover{} can be solved in $O(n + m)$ work by repeatedly selecting the set incident to the largest number of uncovered elements.
%
The first parallel approximation algorithm for \setcover was due to Berger, Rompel and Shor~\cite{berger1994efficient}, who gave a $(1+\epsilon)\hdelta$-approximation that runs in $O(m\log^5 n)$ work and $O(\log^5 n)$ depth whp.
Their algorithm buckets the sets based on their sizes into $O(\log \Delta)$ buckets.
It then runs $O(\log f)$ subphases, where the $j$-th subphase ensures that all elements have degrees at most $(1+\epsilon)^{j}$ (the subphases are run in decreasing order).
Each subphase performs $O(\log n)$ steps that work by either selecting sets that cover a constant fraction of certain large edges or otherwise independently sampling the remaining sets with probability $(1+\epsilon)^{-j}$.
Our approach also uses independent sampling but does not require handling two cases separately. As a result, our approach can be implemented efficiently by fixing the random choices upfront (see \cref{sec:fixing}).
%
Subsequent work by Rajagopalan and Vazirani~\cite{rajagopalan1998primal} improved the work and depth, obtaining a parallel primal-dual algorithm with $O(m\log^3 n)$ work and $O(\log^3 n)$ depth with high probability, but a weaker approximation guarantee of $2(1+\epsilon)H_{\Delta}$. 


More recently, Blelloch, Peng and  Tangwongsan~\cite{blelloch2011linear} revisited parallel approximate \setcover with the goal of designing work-efficient algorithms.
Their algorithm achieves a $(1+\epsilon)\hdelta$-approximation in $O(m)$ expected work and $O(\log^3 n)$ depth with high probability on the CRCW PRAM.
%
They propose a general primitive inspired by the approach of~\cite{rajagopalan1998primal} called a Maximal Nearly-Independent Set (\manis{}), which, given a collection of sets chooses a subset of them while ensuring that the chosen sets are (1) nearly independent and thus do not have significant overlap, and (2) maximal, so that any unchosen sets have significant overlap with chosen ones.
%
Blelloch, Simhadri, and Tangwongsan~\cite{blelloch2012parallel} later studied the algorithm in the Parallel Cache Oblivious model, and provided an efficient parallel implementation. 

Compared to this prior work, we obtain a streamlined $(1+\epsilon)\hdelta$-approximate algorithm that shares some ideas with the previously discussed algorithms.
We also bucket the sets by size, and like \cite{rajagopalan1998primal, blelloch2011linear} each round finds a subset of sets with low overlap; the main difference is that our method is arguably simpler.
Our algorithm is also potentially very efficient in practice, since after we fix the randomness up-front (see \cref{sec:fixing}), we process every set in a bucket exactly once, unlike other implementations of \manis{} which can process a set within a bucket potentially many times~\cite{dhulipala2017julienne}.
Overall, our algorithm is work-efficient and runs in $O(\log^2 n \log\log n)$ depth on the CRCW PRAM.
Although this is an improvement over known depth bounds for PRAM algorithms, one can obtain similar bounds (in expectation) for the algorithm of~\cite{blelloch2011linear} by applying similar PRAM techniques.
We also note that both algorithms achieve $O(\log^3 n)$ depth in the binary-forking model~\cite{blelloch2020optimal}, and no parallel $\hdelta$-approximate algorithms exist with $o(\log^3 n)$ depth in this model.
Experimentally comparing our algorithm with existing implementations of~\cite{blelloch2011linear} is an interesting direction for future work.





\subsection{Outline}
The rest of the paper is organized as follows.
In \cref{sec:prelim} we introduce notation that we use in the paper.
\cref{sec:overview} contains a technical overview of our results. In particular, it describes our algorithms and outlines how they can be analyzed and efficiently implemented in the MPC and PRAM models.
Then, in \cref{sec:approximation}, we provide the approximation analysis of our basic algorithms.
\cref{sec:sampling-lemma} contains the analysis of a set sampling process, a key technical ingredient behind our algorithms.
Finally, we describe and analyze our MPC algorithms in \cref{sec:MPC-algorithms} and our PRAM algorithms in \cref{sec:pram}.

\section{Preliminaries}\label{sec:prelim}
In the \setcover problem, we have a collection of \emph{elements} $T$ and a family of \emph{sets} $S$, which we can use to cover elements of $T$.
We represent an instance of the problem with a bipartite graph $G=((S \cup T), E)$, where $st \in E$ if and only if element $t$ belongs to the set $s$.
For a vertex $x \in S \cup T$ we use $N(x)$ to denote the set of its neighbors.
Since $G$ is bipartite, $x \in S$ implies $N(x) \subseteq T$ and $x \in T$ implies $N(x) \subseteq S$.
In particular, for $x \in S$, $|N(x)|$ is the size of the set $x$.\footnote{Technically, $|x|$ is also the size of the set $x$. However, in our algorithms, we repeatedly remove some elements from $T$ (together with their incident edges), and so we use $|N(x)|$ to make it clear that we refer to the \emph{current} size of the set.}
We use $\Delta$ to denote the maximum set size (i.e., the maximum degree of any vertex in $S$) and $f$ to denote the largest number of sets that contain any element (the maximum degree of any vertex in $T$).
Note that some of our algorithms modify the input graphs along the way, but we assume $\Delta$ and $f$ to be constant and refer to the corresponding quantities in the input graph.

The \vcover problem is defined as follows.
The input is an undirected graph $G=(V,E)$ and the goal is to find the smallest set $C \subseteq E$ such that each edge has at least one endpoint in $C$.
We note that this problem is equivalent to the \setcover problem in which each element belongs to exactly $2$ sets, except that the graph representing an instance is constructed a bit differently.

In the \hm problem, the input is a hypergraph $G$ consisting of a set of vertices $V$ and a set of edges $E$.
Each edge is a nonempty subset of $V$.
The \emph{rank} of a hypergraph $G$ is the maximum size of any edge.
In the \hm problem the goal is to find a subset $M \subseteq E$ which contains pairwise disjoint edges and has maximum possible size.
As opposed to \setcover, this is a maximization problem, and thus we say that the solution $M$ to the \hm problem is $\alpha$-approximate, for $\alpha \in (0, 1]$, when $|M| \geq \alpha \cdot |\textsc{OPT}|$, where $\textsc{OPT}$ is an optimal solution to the \hm problem.
The \matching problem is the \hm problem in simple graphs, i.e., in graphs with all edges of size $2$. 

\subparagraph*{Notation.}
We use $\tO(x)$ to hide logarithmic factor in $x$, i.e., $\tO(x)$ denotes $O(x \cdot \poly \log x)$. Throughout this paper, we use $n$ to refer to the number of vertices and $m$ to refer to the number of edges of an input graph. When it is stated that a guarantee holds ``with high probability'', or whp for short, it means that it holds with probability $1 - 1/n^c$, where $c$ is a constant. In our proofs with whp guarantees, $c$ can be made arbitrarily large by paying a constant factor in the round, space, total work, or depth complexity. Hence,  we often omit the exact value of $c$.

\subparagraph*{Probability tools.} In our analysis, we extensively apply the following well-known tool from probability.
\begin{theorem}[Chernoff bound]\label{lemma:chernoff}
	Let $X_1, \ldots, X_k$ be independent random variables taking values in $[0, 1]$. Let $X \eqdef \sum_{i = 1}^k X_i$ and $\mu \eqdef \E{X}$. Then,
	\begin{enumerate}[(A)]
		\item\label{item:delta-at-most-1-le} For any $\delta \in [0, 1]$ it holds $\prob{X \le (1 - \delta) \mu} \le \exp\rb{- \delta^2 \mu / 2}$.
		\item\label{item:delta-at-most-1-ge} For any $\delta \in [0, 1]$ it holds $\prob{X \ge (1 + \delta) \mu} \le \exp\rb{- \delta^2 \mu / 3}$.
	\end{enumerate}
\end{theorem}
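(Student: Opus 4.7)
The plan is to prove both tail bounds via the standard Chernoff--Cram\'er moment generating function (MGF) method, i.e., by applying Markov's inequality to $e^{tX}$ for a carefully chosen $t$. For part (B) we pick $t>0$ and write $\Pr[X \ge (1+\delta)\mu] = \Pr[e^{tX} \ge e^{t(1+\delta)\mu}] \le e^{-t(1+\delta)\mu}\,\mathbb{E}[e^{tX}]$; for part (A) we use $t<0$ together with the symmetric estimate $\Pr[X \le (1-\delta)\mu] = \Pr[e^{tX} \ge e^{t(1-\delta)\mu}] \le e^{-t(1-\delta)\mu}\,\mathbb{E}[e^{tX}]$ (the inequality direction flips since $t<0$).

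The central step is to bound $\mathbb{E}[e^{tX}]$ using the hypothesis $X_i \in [0,1]$. Convexity of $x \mapsto e^{tx}$ on $[0,1]$ gives the pointwise bound $e^{tX_i} \le 1 + X_i(e^t - 1)$; taking expectations and applying the elementary inequality $1+y \le e^y$ yields $\mathbb{E}[e^{tX_i}] \le \exp\!\bigl(p_i(e^t - 1)\bigr)$ where $p_i \eqdef \mathbb{E}[X_i]$. Independence of the $X_i$'s then gives $\mathbb{E}[e^{tX}] = \prod_i \mathbb{E}[e^{tX_i}] \le \exp\!\bigl(\mu(e^t - 1)\bigr)$, using only that $\sum_i p_i = \mu$.

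Combining these, $\Pr[X \ge (1+\delta)\mu] \le \exp\!\bigl(\mu(e^t - 1) - t(1+\delta)\mu\bigr)$; optimizing the exponent in $t$ yields $t = \ln(1+\delta) > 0$, and analogously $t = \ln(1-\delta) < 0$ for the lower tail. This produces the classical raw Chernoff bounds
\[
\Pr[X \ge (1+\delta)\mu] \le \left(\frac{e^{\delta}}{(1+\delta)^{1+\delta}}\right)^{\mu}, \qquad \Pr[X \le (1-\delta)\mu] \le \left(\frac{e^{-\delta}}{(1-\delta)^{1-\delta}}\right)^{\mu}.
\]
The only genuinely non-mechanical step, and what I expect to be the main obstacle, is the pair of elementary numerical inequalities $(1+\delta)\ln(1+\delta) \ge \delta + \delta^2/3$ and $(1-\delta)\ln(1-\delta) \ge -\delta + \delta^2/2$ on $\delta \in [0,1]$, which convert the raw bounds into the clean forms $e^{-\delta^2\mu/3}$ and $e^{-\delta^2\mu/2}$ in the statement. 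I would verify each by defining $f(\delta)$ as the difference between the two sides, observing $f(0)=0$, and checking $f'(\delta) \ge 0$ on $[0,1]$ via one more differentiation. The asymmetric denominators ($2$ vs.\ $3$) simply reflect the fact that the lower-tail numerical inequality is tighter than the upper-tail one on this range of $\delta$; the restriction $\delta \le 1$ is exactly what makes these inequalities hold.
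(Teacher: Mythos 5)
Your proposal is correct, and there is nothing in the paper to compare it against: the paper states this Chernoff bound in its preliminaries as a well-known probability tool and never proves it, so any standard derivation suffices. Your MGF route is exactly the classical one, and the two places that need care are handled properly: the extension from Bernoulli to general $[0,1]$-valued variables via the chord bound $e^{tX_i} \le 1 + X_i(e^t-1)$, and the flip of the inequality direction when using $t<0$ for the lower tail. One small caveat on the final numerical step: for the upper-tail inequality $(1+\delta)\ln(1+\delta) \ge \delta + \delta^2/3$, the derivative $g'(\delta) = \ln(1+\delta) - 2\delta/3$ is not monotone on $[0,1]$ (its second derivative changes sign at $\delta = 1/2$), so you cannot conclude $g' \ge 0$ purely by "one more differentiation"; you also need to check $g'(0)=0$ and $g'(1) = \ln 2 - 2/3 > 0$, whence $g'$ is nonnegative on all of $[0,1]$ because it increases on $[0,1/2]$ and decreases to a positive value on $[1/2,1]$. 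With that endpoint check added, the argument is complete; the lower-tail inequality $(1-\delta)\ln(1-\delta) \ge -\delta + \delta^2/2$ goes through exactly as you describe since there $f'(\delta) = -\ln(1-\delta) - \delta \ge 0$ directly.
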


\subparagraph*{Work-Depth Model.}
We study our algorithms in the shared-memory setting using the concurrent-read concurrent-write (CRCW) parallel random access machine model (PRAM).
We state our results in terms of their work and depth. 
The {\em work} of a PRAM algorithm is equal to the total number of operations required, and the {\em depth} is equal to the number of time steps required~\cite{jaja1992parallel}.
Algorithms with work $W$ and depth $D$ can be scheduled to run in $W/P + O(D)$ time~\cite{jaja1992parallel, blumofe1999scheduling} on a $P$ processor machine.
The main goal in parallel algorithm design is to obtain work-efficient algorithms with low (ideally poly-logarithmic) depth. A {\em work-efficient} algorithm asymptotically requires the same work as the fastest sequential algorithm.
Since the number of processors, $P$, is still relatively small on modern multicore machines, minimizing $W$ by designing work-efficient algorithms is critical in practice.
Our algorithms make use of several PRAM primitives, including parallel prefix sum~\cite{jaja1992parallel}, parallel integer sort~\cite{rajasekaran1989optimal}, and approximate prefix sums~\cite{hochbaum1982approximation}.

\section{Technical Overview}
\label{sec:overview}
In this section we demonstrate the main ideas behind our results.
We start by presenting our sequential algorithms for the \setcover problem.
For any set $X$ and probability $p \in [0, 1]$ we write $\sample{(X, p)}$ to denote a procedure that returns a random subsample of $X$ in which each element of $X$ is included independently with probability $p$.
Our algorithms work by repeatedly sampling sets or elements independently using a sequence of probabilities $p_i$, which is defined as follows for any $\epsilon > 0$.
\begin{align}\label{eq:define-p_i}
    b & \eqdef \lceil \log(2+2\epsilon) / \epsilon \rceil \\
    p_i & \eqdef (1+\epsilon)^{-\lceil i/b \rceil} \qquad \textrm{for any } i \in \mathbb{N}.\label{eq:p}
\end{align}
The intuition behind choosing $p_i$ and $b$ is discussed in \cref{sec:intuition-of-choosing-pi-and-b}.
Throughout the paper, we use $\log$ to denote the natural logarithm function.
We can now present our algorithms for \setcover, which are given as \cref{alg:f-apx} and \cref{alg:log-apx}.

\cref{alg:f-apx} is a natural parallelization of the sequential $f$-approximate algorithm.
Instead of picking one element at a time, we sample multiple elements at random and add the sets containing them to the solution.
The sampling probability is slowly increased in each step (or, more precisely, every $O(1/\epsilon)$ steps).
\cref{alg:log-apx} in turn parallelizes the $O(\log \Delta)$ approximate algorithm.
The outer loop iterates over different set sizes (rounded to the power of $1+\epsilon$) starting from the largest ones.
For a fixed set size, the inner loop adds to the solution a uniformly random sample of sets, again slowly increasing the sampling probability.

\begin{algorithm}[t]
\caption{$(1+\epsilon)f$-approximate algorithm for \setcover}\label{alg:f-apx}
\begin{algorithmic}[1]

\Function{SetCover}{$G, \epsilon$} \Comment{$G = (S \cup T, E)$}
    \State{$C \gets \emptyset$}
    \For{$i = b \lceil \log_{1+\epsilon} (\Delta / \epsilon)\rceil$ \textbf{down to} $0$}
        \State{$D \gets \sample(T, p_i)$}
        \State{$C \gets C \cup N(D)$}
        \State{Remove from $G$ all sets in $N(D)$ and all elements they cover}
    \EndFor
    \State \Return $C$
\EndFunction

\end{algorithmic}
\end{algorithm}

\begin{algorithm}[t]
\caption{$(1+\epsilon)\hdelta$-approximate algorithm for \setcover}\label{alg:log-apx}
\begin{algorithmic}[1]

\Function{SetCover}{$G, \epsilon$}\Comment{$G = (S \cup T, E)$}
    \State{$C \gets \emptyset$}
    \For{$j = \lfloor \log_{1+\epsilon}\Delta \rfloor$ \textbf{down to} $0$ \label{line:log-n-alg-loop-j}}
    \For{$i = b \lceil \log_{1+\epsilon} (f / \epsilon) \rceil$ \textbf{down to} $0$ \label{line:log-n-alg-loop-i}}
        \State{$D \gets \sample(\{s \in S \mid |N(s)| \geq (1+\epsilon)^j\}, p_i)$}
        \State{$C \gets C \cup D$}
        \State{Remove from $G$ all sets in $D$ and all elements they cover}
    \EndFor
    \EndFor
    \State \Return{C}
\EndFunction

\end{algorithmic}
\end{algorithm}

We start by analyzing \cref{alg:f-apx}.
Clearly, the algorithm runs in $O(\log n)$ iterations.
Iteration $i$ samples each element independently with probability  $p_i$ and adds all sets covering the sampled elements to the solution.
Then, all chosen sets and elements that became covered are removed.
Crucially, the sampling probability in the first step is at most $(1+\eps)^{-\lceil \log_{1+\epsilon} (\Delta / \epsilon)\rceil} \leq \epsilon / \Delta$, which implies that the expected number of elements sampled within each set is at most $\epsilon$.
Moreover, the sampling probability increases very slowly, as it increases by a $1+\epsilon$ factor every $b$ steps.

Let us now present the main ideas behind the analysis of the approximation ratio of the algorithm.
For simplicity of presentation, let us consider the case when each element is contained in exactly two sets (which implies $f=2$).
In other words, we consider the \vcover problem.
Specifically, since the degree of each vertex of $T$ is $2$, we can dissolve vertices of $T$ (equivalently, contract each such vertex into its arbitrary neighbor) and obtain a graph $H=(V,E)$ (where $V = S$) on which we would like to solve the \vcover problem.
We note that the solution and analysis of \cref{alg:f-apx} for \vcover generalizes easily to the case of arbitrary $f$.

If we translate \cref{alg:f-apx} to an algorithm running on $H$, we see that it repeatedly samples a set of \emph{edges} of $H$, and for each sampled edge $e$ adds both endpoints of $e$ to the solution, and removes both endpoints of $e$ from $H$ together with their incident edges.
In order to prove the approximation guarantee, we show the following.

\begin{lemma}\label{lem:aux-degree}
Let $D$ be the subset of $T$ picked across all iterations of \cref{alg:f-apx}.
For each vertex $v$, $\E{\deg_D(v))} \leq 1+O(\epsilon)$.
\end{lemma}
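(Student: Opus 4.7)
The plan is to fix a set $v \in S$ and decompose $\deg_D(v)$ based on the unique iteration $i^*$ in which $v$ is first added to $C$ (if ever). First I will establish the structural observation that all elements of $N(v) \cap D$ must be sampled together in the same iteration, namely $i^*$: in any iteration $i > i^*$ that $v$ survives, no element of $N(v)$ was sampled (otherwise $v$ would have been killed earlier), and in any iteration $i < i^*$, $v$ and its remaining neighbors are already gone from the graph. Hence $\deg_D(v)$ equals $0$ when $v$ is never killed, and otherwise equals a $\mathrm{Binomial}(d_{i^*}, p_{i^*})$ random variable conditioned on being $\ge 1$, where $d_{i^*} \eqdef |N_{i^*}(v)|$ is $v$'s degree at the start of iteration $i^*$.

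Second, I will bound the conditional expectation. Given $d_{i^*}$ and $p_{i^*}$, and that $v$ is killed in iteration $i^*$, the expectation equals $\frac{d_{i^*} p_{i^*}}{1 - (1-p_{i^*})^{d_{i^*}}}$; using the elementary inequalities $(1-p)^{d} \le e^{-pd}$ and $\frac{x}{1 - e^{-x}} \le 1 + x$ for $x \ge 0$ (which is equivalent to $(1+x)e^{-x} \le 1$), this is at most $1 + d_{i^*} p_{i^*}$. Taking outer expectations gives
\[
\E{\deg_D(v)} \;\le\; \prob{v\text{ is killed}} + \E{d_{i^*} p_{i^*} \cdot \mathbb{1}[v\text{ killed}]} \;\le\; 1 + \E{d_{i^*} p_{i^*} \cdot \mathbb{1}[v\text{ killed}]},
\]
so it remains to prove the ``excess'' expectation on the right is $O(\epsilon)$.

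For the excess term, I will group iterations into \emph{levels} of $b$ consecutive iterations sharing the same sampling probability $(1+\epsilon)^{-k}$. Let $d \eqdef |N(v)|$ and $k^* \eqdef \lceil \log_{1+\epsilon}(d/\epsilon)\rceil$; at the ``threshold'' level $k^*$ one has $d p_{k^*} \le \epsilon$, and at each earlier level $k < k^*$ (the ``danger'' regime) $d p_k$ grows by a factor of $1+\epsilon$ per level. The key quantitative fact is that, because $b\cdot\epsilon \ge \log(2+2\epsilon)$ by the definition of $b$, the probability of $v$ surviving a whole level in which $d p_k \gtrsim \epsilon(1+\epsilon)^c$ is at most $(2+2\epsilon)^{-(1+\epsilon)^c}$, which I will derive from the Chernoff-style bound $(1-p_k)^{bd} \le e^{-b d p_k}$ combined with the pessimistic substitution $d_i \le d$. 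This doubly-exponential decay of the survival probability, weighted against the merely geometric growth of $d p_k$, makes $\sum_k \prob{v\text{ killed at level }k}\cdot d p_k$ a rapidly-convergent geometric-type series bounded by $O(\epsilon)$, completing the proof.

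The main obstacle will be cleanly handling \emph{competition}: an edge of $v$ can be removed before iteration $i^*$ because the other endpoint of that edge is killed by a sample elsewhere in the graph, so that $d_i$ is a random quantity correlated with $v$'s survival history (less competition means faster killing but larger $d_i$, and vice versa). I plan to rely on the pessimistic bound $d_i \le d$ uniformly in the level-based survival analysis; this is loose because competition would normally lower $d_i$ and slow killing further, but the doubly-exponential decay I obtain past the threshold leaves ample slack to absorb the loss, while in the conditional-expectation bound $1 + d_{i^*} p_{i^*}$ the same substitution only makes things tighter. Verifying that these two uses of $d_i \le d$ combine consistently to yield a $1 + O(\epsilon)$ bound---and in particular that the $O(\epsilon)$ ``excess'' is not inflated by trajectories in which $v$ survives atypically long into the danger regime---is the central technical step of the argument.
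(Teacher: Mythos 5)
Your first two steps are sound and in fact coincide with the paper's argument: the observation that all of $N(v)\cap D$ is sampled in the single iteration $i^*$ at which $v$ is killed, and the bound $\E{\deg_D(v)\mid \text{killed at } i^*,\, d_{i^*}} \le 1 + d_{i^*}p_{i^*}$, are exactly \cref{cl:condition-nonempty} and \cref{cl:pini} (the paper phrases them inside the abstract process \ssp, with $n_z = d_{i^*}$, $p_z = p_{i^*}$). The genuine gap is in your plan for the excess term $\E{d_{i^*}p_{i^*}}$. Your ``pessimistic substitution $d_i \le d$'' points in the wrong direction for the survival analysis: the per-step survival probability is $(1-p_i)^{d_i}$, which is \emph{decreasing} in $d_i$, so replacing $d_i$ by the initial degree $d$ gives $e^{-b\,d\,p_k}$ as a \emph{lower} bound on the level-survival probability, not an upper bound. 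Consequently the claimed doubly-exponential decay $(2+2\epsilon)^{-(1+\epsilon)^c}$ past the threshold level $k^*=\lceil\log_{1+\epsilon}(d/\epsilon)\rceil$ is false, and the two uses of $d_i\le d$ do not combine: consider the scenario where competition removes all but one element of $N(v)$ in the very first iterations. Then $v$ survives far past $k^*$ (all the way to constant sampling probabilities) with probability about $e^{-\Theta(1/\epsilon^2)}$, \emph{independent of $d$}, while your excess bound at those levels is $d\cdot p_{i^*}$, so the proposed estimate degenerates to roughly $d\,e^{-\Theta(1/\epsilon^2)}$, which is unbounded in $d$ rather than $O(\epsilon)$. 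The issue you flagged as ``the central technical step'' is thus not merely unverified; as described it fails.

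The repair is to measure danger by the \emph{residual} quantity $p_i d_i$ rather than $p_i d$, which is what the paper's \ssp analysis does: if $p_i n_i \ge \epsilon(1+\epsilon)$ at the current step, then for each of the $b$ preceding steps $j$ one has $p_j n_j \ge \epsilon$ because $n_j \ge n_i$ (monotonicity of the residual neighborhood) and $p_j \ge p_i/(1+\epsilon)$ (\cref{obs:slow-increase}); this yields $P(z\le i) \le P(z\le i+b)/(2+2\epsilon)$ (\cref{lem:reaching-probability}) and, by induction, only a \emph{geometric} decay $P(z \le i) \le (2+2\epsilon)^{-c}$ when $p_i n_i \ge \epsilon(1+\epsilon)^c$ (\cref{lem:boundpini}) --- which is nonetheless enough, since the excess at level $c$ is only $\epsilon(1+\epsilon)^{c+1}$, giving $\E{p_z n_z} \le 4\epsilon$. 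In addition, because $d_i$ is correlated with $v$'s survival history (the point you raise about competition), one needs a justification that the per-step samples can be treated as independent of the removal pattern; the paper handles this by modelling the removals as an adaptive adversary and reducing, without loss of generality, to a fixed nonadaptive removal sequence (\cref{obs:nonadaptive}), after which the steps are genuinely independent. Your write-up would need an analogous reduction (or a careful conditioning argument) before any survival computation of the kind you propose is legitimate.
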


Here $\deg_D(v)$ denotes the number of elements of $D$ contained in $v$.
We prove this lemma formally in \cref{sec:approximation} (see \cref{lem:fapx-properties}).

Notice that when an element $x \in T$ is sampled to $D$ in \cref{alg:f-apx}, \emph{all} the sets containing $x$ are added to $C$.
So if $w$ sampled elements belong to the same set, then the algorithm could add $\Theta(w f)$ many sets, although only one of the sets suffices to cover all the $w$ sampled elements.
Intuitively, \cref{lem:aux-degree} states that the value of $w$ is at most $1 + O(\epsilon)$ in expectation, which we turn into an approximation guarantee in \cref{lem:f-apx}.

To prove \cref{lem:aux-degree}, we model the sampling process in the algorithm as follows.
Fix a vertex $v \in V$.
Let $A$ be the set of edges incident to $v$ in $G$.
\cref{alg:f-apx} runs a sequence of $b \lceil \log_{1+\epsilon} (\Delta / \epsilon)\rceil+1$ steps, indexed by $b \lceil \log_{1+\epsilon} (\Delta / \epsilon)\rceil, \ldots, 0$.
Note that the step indices are \emph{decreasing}.
Moreover, $\Delta \geq |A|$, since $\Delta$ is the maximum vertex degree in $G$.
In step $i$, each element of $A$ is sampled independently with probability $p_i$.
As soon as at least one element of $A$ is sampled, $v$ is added to the cover. When this happens all elements of $A$ are deleted and the random process stops.
Moreover, even if no element of $A$ is sampled, due to other random choices of the algorithm some elements of $A$ may get deleted. In particular, when a neighbor $w$ of $v$ is added to the set cover, the edge $wv$ is deleted from $A$.
Hence, it is possible that all elements of $A$ are deleted before any of them is sampled.

To analyze this we introduce the following \emph{set sampling process}, henceforth denoted as \ssp, which gives a more abstract version of the above sampling.  The analysis of this process forms our core sampling lemma, which will be useful not just for \cref{lem:aux-degree} but to analyze all of our algorithms. 
Let $A$ be a fixed set and $k$ be an integer.
The process proceeds in $k+1$ steps indexed $k, k-1, \ldots, 0$ and constructs a family of sets $A = A_k \supseteq A_{k-1} \supseteq \dots \supseteq A_1 \supseteq A_0$ as well as a family $R_k, \ldots, R_0$, such that $R_i \subseteq A_i$.
In each step $i$ ($k \geq i \geq 0$) we first construct a set $A_i$.
We have that $A_k = A$, and for $i \in [0, k)$ the set $A_i \subseteq A_{i+1}$ is constructed by a (possibly randomized) \emph{adversary}, who is aware of the sets $A_j$ and $R_j$ for $j > i$.
Our analysis of \ssp aims to argue that certain guarantees hold regardless of what the adversary does.
After the adversary constructs $A_i$, we sample $R_i = \sample(A_i, p_i)$.

We note that we assume that the updates are adversarial to simplify the overall proof.
This makes our claims about \ssp more robust, and analyzing \ssp with an adversary does not introduce significant complications. 

For $i \in [0, k]$, we define $n_i \eqdef |A_i|$.
Whenever we apply \ssp we have that $k \geq b \lceil \log_{1+\epsilon}(n_k / \epsilon)\rceil$, and for simplicity we make this assumption part of the construction.
Note that this condition simply ensures that the initial sampling probability is at most $\epsilon / n_k$.
Finally, we let $z$ be the maximum index such that $R_z \neq \emptyset$.
We stress that the \ssp steps are indexed in {\em decreasing order}, and hence $z$ is the index of the \emph{first} step such that $R_z$ is nonempty.
If all $R_i$ are empty, we set $z = -1$ and assume $R_{-1} = \emptyset$.
We say that $z$ is the step when the \ssp stops.

Observe that to analyze the properties of the set of sampled edges in \cref{alg:f-apx}, it suffices to analyze the properties of the set $R_z$.
Our main lemma analyzing \ssp is given below.
It captures the single property of \ssp which suffices to prove the approximation ratio of both \cref{alg:f-apx} and \cref{alg:log-apx}.
In particular, it directly implies \cref{lem:aux-degree}.

\begin{restatable}{lemma}{mainsampling}\label{lem:main-sampling}
Consider the \ssp using any adversary and $\epsilon > 0$. Then, $\E{|R_{z}|} \leq 1+4\epsilon$.
\end{restatable}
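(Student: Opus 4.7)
The plan is to express $\E{|R_z|}$ as a sum over steps, bound each summand by a multiple of $\prob{z = i+1}$, and then telescope. First I would derive the identity $\E{|R_z|} = \sum_{i=0}^{k} \E{n_i p_i \cdot \mathbf{1}[E_i]}$, where $E_i$ denotes the event $\{R_j = \emptyset \text{ for every } j > i\}$. This follows by writing $|R_z| = \sum_i |R_i|\cdot \mathbf{1}[z = i] = \sum_i |R_i|\cdot \mathbf{1}[E_i]$ (since $|R_i| \cdot \mathbf{1}[R_i \ne \emptyset] = |R_i|$) and invoking the tower property: conditionally on the history through the adversary's choice of $A_i$, the set $R_i$ is $\mathrm{Binomial}(n_i, p_i)$ with conditional mean $n_i p_i$, while $\mathbf{1}[E_i]$ is already determined.

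The heart of the proof is the inequality, valid for $0 \le i < k$, that $\E{n_i p_i \cdot \mathbf{1}[E_i]} \le (p_i/p_{i+1}) \prob{z = i+1} \le (1+\eps)\prob{z = i+1}$. To see it, note $n_i \le n_{i+1}$ (because $A_i \subseteq A_{i+1}$) and $\mathbf{1}[E_i] = \mathbf{1}[E_{i+1}] \cdot \mathbf{1}[R_{i+1} = \emptyset]$. Conditioning on the history up to and including the adversary's choice of $A_{i+1}$ but before $R_{i+1}$ is drawn, and using $\prob{R_{i+1} = \emptyset \mid \text{hist.}} = (1-p_{i+1})^{n_{i+1}}$, one obtains
\[
\E{n_i p_i \cdot \mathbf{1}[E_i]} \;\le\; p_i \cdot \E{n_{i+1}(1-p_{i+1})^{n_{i+1}}\cdot \mathbf{1}[E_{i+1}]}.
\]
Now I apply the elementary inequality $np(1-p)^n \le 1 - (1-p)^n$ -- which rearranges to $(1+np)(1-p)^n \le 1$ and follows from $1+x \le e^x$ combined with $(1-p)^n \le e^{-np}$ -- pointwise with $n = n_{i+1}$, $p = p_{i+1}$; this gives $n_{i+1}(1-p_{i+1})^{n_{i+1}} \le (1-(1-p_{i+1})^{n_{i+1}})/p_{i+1}$. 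Substituting and identifying $\E{\mathbf{1}[E_{i+1}] \cdot (1 - (1-p_{i+1})^{n_{i+1}})} = \prob{R_{i+1} \ne \emptyset,\, E_{i+1}} = \prob{z = i+1}$ gives the claim; the ratio $p_i/p_{i+1} = (1+\eps)^{\lceil(i+1)/b\rceil - \lceil i/b \rceil}$ lies in $\{1, 1+\eps\}$ by construction of the $p_i$.

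For the boundary case $i = k$, the event $E_k$ is vacuously true, so $\E{n_k p_k \cdot \mathbf{1}[E_k]} = n_k p_k \le \eps$, where I use the assumption $k \ge b\lceil \log_{1+\eps}(n_k/\eps) \rceil$, which forces $p_k \le \eps/n_k$. Summing,
\[
\E{|R_z|} \;\le\; \eps + (1+\eps)\sum_{i=0}^{k-1}\prob{z = i+1} \;\le\; \eps + (1+\eps) \;\le\; 1 + 2\eps \;\le\; 1 + 4\eps,
\]
since $\sum_{j \ge 1}\prob{z = j} \le 1$.

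The main obstacle I anticipate is interchanging expectations cleanly in the presence of the randomised, history-dependent choice of $A_i$: $n_i$ is a complicated function of previously revealed samples. The argument relies on selecting the correct conditioning point -- so that $n_{i+1}$ and $\mathbf{1}[E_{i+1}]$ are frozen before one averages over the fresh randomness of $R_{i+1}$ -- after which the adversary drops out of the picture, leaving only the Bernoulli-type inequality and a telescoping sum.
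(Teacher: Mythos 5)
Your proof is correct, and it takes a genuinely different route from the paper's. The paper first reduces to a deterministic, nonadaptive adversary (\cref{obs:nonadaptive}), then bounds $\E{|R_z|} \leq 1 + \E{p_z n_z}$ (\cref{cl:condition-nonempty,cl:pini}) and controls $\E{p_z n_z}$ by showing that the process is unlikely to survive into steps where $p_i n_i$ is large, via the $b$-step block argument of \cref{lem:reaching-probability,lem:boundpini} and a geometric sum. You instead decompose $\E{|R_z|} = \sum_{i} \E{n_i p_i \mathbf{1}[E_i]}$ and bound each term for $i<k$ by $(p_i/p_{i+1})\prob{z=i+1} \leq (1+\eps)\prob{z=i+1}$, using the pointwise inequality $n p (1-p)^n \leq 1-(1-p)^n$ together with $n_i \leq n_{i+1}$ and the factorization $\mathbf{1}[E_i]=\mathbf{1}[E_{i+1}]\mathbf{1}[R_{i+1}=\emptyset]$; the boundary term is $n_k p_k \leq \eps$ by the assumption on $k$. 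This buys several things: the conditioning argument handles randomized adaptive adversaries directly, so the reduction of \cref{obs:nonadaptive} is not needed; the constant improves to $1+2\eps$; and, notably, the argument uses only that consecutive probabilities differ by at most a $(1+\eps)$ factor and that $p_k n_k \leq \eps$, so this particular lemma would hold even with $b=1$ (the slow, $b$-step growth of $p_i$ is what makes the paper's block/geometric argument go through, and it is still used elsewhere, e.g., in the proof of \cref{lem:single-sample}). The paper's route, on the other hand, yields the intermediate bound $\E{p_z n_z} = O(\eps)$ and the survival estimates of \cref{lem:boundpini}, which are reused in the analysis of the matching algorithm; your telescoping argument does not produce those byproducts, but as a proof of \cref{lem:main-sampling} itself it is complete and slightly sharper.
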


Let us now describe the intuition behind the proof of \cref{lem:main-sampling}.
To simplify the presentation, let us assume that the sets $A_0, \ldots, A_k$ are fixed upfront (i.e., before any set $R_i$ is sampled).
We show in \cref{obs:nonadaptive} that if we are interested in analyzing the properties of $R_z$, this can be assumed without loss of generality.  Observe that as long as the process executes steps where $p_i \cdot n_i \leq \epsilon$, the desired property holds.
Indeed, with this assumption we have that $\E{|R_i| \mid R_i \neq \emptyset} \leq 1+\epsilon$.
This is because even if one element is sampled, the expected size of the sample among all remaining elements is at most $\epsilon$ (for a formal proof, see \cref{cl:condition-nonempty}).

In order to complete the proof, we show that reaching a step where $p_i \cdot n_i \gg \epsilon$ is unlikely.
Specifically, the value of $p_i \cdot n_i$ can increase very slowly in consecutive steps, as $p_i$ increases only by $(1+\epsilon)$ factor every $b$ steps, and $n_i$ can only decrease.
By picking a large enough value of $b$, we can ensure that the process most likely stops before $p_j \cdot n_j$ becomes large, i.e., the expected value of $p_z \cdot n_z$ is $O(\epsilon)$.
Indeed, in each step where $p_i \cdot n_i \geq \epsilon$, the process stops with probability $\Omega(\epsilon)$.
Hence, if we repeat such a step roughly $1/\epsilon$ times (which can be achieved by tweaking $b$), the process will stop with constant probability (independent of $\epsilon$).
In the end we fix $b$, such that the probability of $p_i \cdot n_i$ increasing by a factor of $1+\epsilon$ is at most $1/(2+2\epsilon)$.
As a result, thanks to a geometric sum argument, the expected value of $p_z \cdot n_z$ is $O(\epsilon)$, which implies \cref{lem:main-sampling}.

\subsection{Fixing the Random Choices Upfront}\label{sec:fixing}

In order to obtain efficient implementations of our algorithms, we reformulate them into equivalent versions where the sampling happens upfront.
Specifically, consider the main loop of \cref{alg:f-apx}.
Observe that each element is sampled at most once across all iterations, since as soon as an element is sampled it is removed from further consideration.
A similar property holds for each set across all iterations of the inner for loop of \cref{alg:log-apx}.
Moreover, in both cases, the probability of being sampled in a given iteration is fixed upfront and independent of the algorithm's actions in prior iterations.
It follows easily that we can make these per-element or per-set random choices upfront.
Specifically, let $k = b \lceil \log_{1+\epsilon} (\Delta / \epsilon)\rceil$. 
Then, \cref{alg:f-apx} executes $k+1$ iterations indexed  $k, k-1, \ldots, 0$.
We can randomly partition the input elements into $k+1$ \emph{buckets} $B_k, \ldots, B_0$ using a properly chosen distribution and then in iteration $i$ consider the elements of $B_i$ which have not been previously removed as the sample to be used in this iteration.

Observe that since $p_0 = 1$ (see \cref{eq:p}), each element that is not removed before the last step is sampled.
Specifically, let $\tilde{p_0}, \ldots, \tilde{p_k}$ be a probability distribution such that $\tilde{p_i} = p_i \cdot \prod_{j=i+1}^k (1-p_j)$.
Observe that $\tilde{p_i}$ is the probability that an element should be put into bucket $B_i$.

\cref{alg:f-apx-fixed} shows a version of \cref{alg:f-apx} in which the random choices are made upfront.
It should be clear that \cref{alg:f-apx,alg:f-apx-fixed} produce the same output.
Moreover, an analogous transformation can be applied to the inner loop of \cref{alg:log-apx}.
The benefit of making the random choices upfront is twofold.
In the MPC model, we use the sampling to simulate $r$ iterations of the algorithms in $O(\log r)$ MPC rounds.
The efficiency of this simulation crucially relies on the fact that we only need to consider the edges sampled within the phase and we can determine (a superset of) these edges upfront.

In the PRAM model, the upfront sampling allow us to obtain an improved work bound: instead of tossing a coin for each element separately in each iteration, we can bucket the elements initially and then consider each element in exactly one iteration.
In order to bucket the elements efficiently we can use 
the following lemma.

\begin{lemma}[\cite{walker1974new}]
\label{lem:aliasmethod}
Let $r_0, r_1, \ldots, r_k$ be a sequence of nonnegative real numbers which sum up to $1$.
Let $X \rightarrow [0, k]$ be a discrete random variable, such that for each $i \in [k]$, $P(X = i) = r_i$.
Then, there exists an algorithm which, after preprocessing in $O(k)$ time, can generate independent samples of $X$ in $O(1)$ time.
\end{lemma}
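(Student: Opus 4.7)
The plan is to prove this via Walker's alias method, which builds a table of size $k+1$ in which each bucket stores two indices together with a threshold, and then sampling is reduced to one uniform draw plus one biased coin flip.

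First I would set up the data structure. Let $q_i \eqdef (k+1) r_i$, so the $q_i$'s are nonnegative reals summing to $k+1$, with average $1$. The preprocessing will produce arrays $\mathrm{prim}[0\dots k]$, $\mathrm{alias}[0\dots k]$, and $t[0\dots k] \in [0,1]$ with the following invariant: for every index $j \in [0,k]$,
\[
r_j \;=\; \frac{1}{k+1} \cdot \sum_{i=0}^{k} \bigl( \mathbf{1}[\mathrm{prim}[i]=j] \cdot t[i] + \mathbf{1}[\mathrm{alias}[i]=j] \cdot (1-t[i]) \bigr).
\]
Given this invariant, the sampling procedure is immediate: draw $i$ uniformly from $\{0,\dots,k\}$ in $O(1)$ using one call to a uniform random source, then draw a uniform $u \in [0,1]$ and return $\mathrm{prim}[i]$ if $u \le t[i]$ and $\mathrm{alias}[i]$ otherwise. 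By the invariant, the total probability of returning $j$ is exactly $r_j$, and each step is $O(1)$.

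The core of the proof is the preprocessing. I would maintain two lists $L$ (``small'', indices with current $q_i \le 1$) and $H$ (``large'', indices with current $q_i > 1$), which can be built in $O(k)$ time by a single scan. While both are nonempty, pop $s$ from $L$ and $h$ from $H$: set $\mathrm{prim}[s] \gets s$, $t[s] \gets q_s$, $\mathrm{alias}[s] \gets h$, then update $q_h \gets q_h - (1 - q_s)$ (i.e., $h$ donates the mass needed to fill bucket $s$ to $1$), and re-classify $h$ into $L$ or $H$ based on the new $q_h$. Each iteration assigns one bucket permanently and does $O(1)$ work, so the total time is $O(k)$. Any index remaining at the end (because it is exactly at $1$) gets $\mathrm{prim}[i] \gets i$, $t[i] \gets 1$.

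The main obstacle is verifying two things simultaneously: termination with every bucket assigned, and correctness of the invariant. For termination, I would carry the loop invariant $\sum_{i \in L \cup H} q_i = |L| + |H|$, which holds initially (total mass $k+1$) and is preserved by each donation step, so $L$ and $H$ empty out together; up to floating-point issues, the ``remaining exactly at $1$'' case covers any residual. For correctness, I would check that the contribution of each donation step to the right-hand side of the invariant matches what is removed from $q_h$: bucket $s$ contributes $q_s/(k+1)$ to $r_s$ (from the $\mathrm{prim}$ slot) and $(1-q_s)/(k+1)$ to $r_h$ (from the $\mathrm{alias}$ slot), which is precisely the mass removed from $h$. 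Summing over all donation steps, together with the final ``self-pointing'' buckets, recovers $r_j$ for every $j$, establishing the invariant and completing the proof.
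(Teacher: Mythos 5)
Your construction is correct: it is precisely Walker's alias method (the small/large donation scheme with the mass invariant $\sum_{i \in L \cup H} q_i = |L| + |H|$), and the accounting argument you give does establish both $O(k)$ preprocessing and exactness of the sampling distribution. The paper does not prove this lemma at all --- it simply cites Walker (1974) --- so your write-up is a correct, self-contained proof of exactly the result being invoked, with no divergence in approach to report.
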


\begin{algorithm}[t]
\caption{$(1+\epsilon)f$-approximate algorithm for \setcover}\label{alg:f-apx-fixed}
\begin{algorithmic}[1]

\Function{SetCover}{$G, \epsilon$}\Comment{$G = (S \cup T, E)$}
    \State{$C \gets \emptyset$}
    \State{$k \gets b \lceil \log_{1+\epsilon}
    (\Delta / \epsilon)\rceil$}
    \State{$B_i \gets \emptyset$ for all $i \in [0, k]$}
    \For{\textbf{each} element $t \in T$}
        \State{Sample $X_t \in [0, k]$, where $P(X_t = i) = \tilde{p_i}$ and add $t$ to $B_{X_t}$}
    \EndFor
    
    \For{$i = k$ \textbf{down to} $0$}
        \State{$D \gets $ all elements of $B_i$ which are not marked}
        \State{$C \gets C \cup N(D)$}
        \State{Remove from $G$ all sets in $N(D)$ and mark all elements they cover}
    \EndFor
    \State \Return $C$
\EndFunction

\end{algorithmic}
\end{algorithm}

\subsection{MPC Algorithms}

Simulating \cref{alg:f-apx-fixed} in the sub-linear MPC model is a relatively straightforward application of the graph exponentiation technique~\cite{lenzen2010brief,ghaffari2019sparsifying,ghaffari2019improved,chang2019complexity,brandt2021breaking}.
For simplicity, let us again consider the \vcover problem.
We will show how to simulate $r = O(\sqrt{\log n})$ iterations of the for loop in only $O(\log r) = O(\log \log n)$ MPC rounds.
Let us call these $r$ iterations of the algorithm a \emph{phase}.
We first observe that to execute a phase we only need to know edges in the buckets corresponding to the iterations within the phase.
Let us denote by $G_r$ the graph consisting of all such edges.
Moreover, let $p$ be the sampling probability used in the first iteration of the phase.
The crucial observation is that the maximum degree in $G_r$ is $2^{\tilde{O}(\sqrt{\log n})}$ with high probability.
This can be proven in three steps.
First, we show that by the start of the phase the maximum degree in the original graph $G$ drops to $O(1/p \cdot \log n)$ with high probability.
Indeed, for any vertex $v$ with a higher degree the algorithm samples an edge incident to $v$ with high probability, which causes $v$ to be removed.
Second, we observe that the sampling probability increases to at most $p \cdot 2^{O(\sqrt{\log n})}$ within the phase, and so the expected number of edges incident to any vertex of $G_r$ is at most $O(1/p \cdot \log n) \cdot p \cdot 2^{O(\sqrt{\log n})} = 2^{\tilde{O}(\sqrt{\log n})}$.
Third, we apply a Chernoff bound.

At this point, it suffices to observe that running $r$ iterations of the algorithm can be achieved by computing for each vertex $v$ of $G_r$ a subgraph $S_v$ consisting of all vertices at a distance $O(r)$ from $v$ and then running the algorithm separately on each $S_v$.
In other words, running $r$ iterations of the algorithm is a $O(r)$ round \textsc{Local} algorithm.
Computing $S_r$ can be done using graph exponentiation in $\log r = O(\log \log n)$ MPC rounds using $2^{O(\sqrt{\log n}) \cdot r} = n^{\alpha}$ space per machine and $n^{1+\alpha}$ total space, where $\alpha > 0$ is an arbitrary constant.

The space requirement can also be reduced to $\tilde{O}(m)$.
We now sketch the high-level ideas behind this improvement.
We leverage the fact that if we sample each edge of an $m$-edge graph independently with probability $p$, then only $O(p \cdot m)$ vertices have an incident sampled edge, and we can ignore all the remaining vertices when running our algorithm.
Hence, we only need to run the algorithm for $O(p \cdot m)$ vertices and thus have at least $S = m / O(p \cdot m) = \Omega(1/p)$ available space per vertex, even if we assume that the total space is $O(m)$.
As argued above, with space per vertex $S$, we can simulate roughly $\sqrt{\log S}$ steps of the algorithm.
In each of these steps, the sampling probability increases by a constant factor, so overall, it increases by a factor of $2^{\Omega(\sqrt{\log S})}$ across the $\sqrt{\log S}$ steps that we simulate.
After repeating this simulation $t = \sqrt{\log S} = O(\sqrt{\log \Delta})$ times, the sampling probability increases by a factor of at least $2^{\Omega(t \cdot \sqrt{\log S})} = 2^{\Omega(\log S)} = S^{\Omega(1)}$.
Overall, after roughly $O(\sqrt{\log \Delta})$ repetitions the space per vertex reduces from $S$ to $S^{1-\Omega(1)}$.
Similarly, the sampling probability increases from $p$ to $p^{1+\Omega(1)}$.
Hence, it suffices to repeat this overall process $\log \log \Delta$ times to simulate all $O(\log \Delta)$ steps.
We delve into details of this analysis in \cref{sec:MPC-algorithms}.

\subsection{PRAM algorithms}

\cref{alg:f-apx-fixed} also almost immediately yields a work-efficient algorithm with $O(\log n)$ depth in the CRCW PRAM.
Obtaining a work-efficient and low-depth implementation of \cref{alg:log-apx} is only a little more involved.
One challenge is that the set sizes change as elements get covered.
Since we run $O(\log n)$ steps per round, we can afford to exactly compute the sizes at the start of a round, but cannot afford to do so on every step without incurring an additional $O(\log n)$ factor in the depth.
We first use the randomness fixing idea described in \cref{sec:fixing} to identify the step in the algorithm when a set will be sampled.
Then, in every step, for the sets sampled in this step, we approximate the set sizes up to a $(1+\delta)$ factor, which can be done deterministically and work-efficiently in $O(\log\log n)$ depth and use these estimates in our implementation of \cref{alg:log-apx}.
The resulting algorithm still obtains a $(1+\epsilon)\hdelta$-approximation in expectation while deterministically ensuring work efficiency and $O(\log^2 n \log\log n)$ depth.

\begin{algorithm}[t]
\caption{Algorithm for \hm}\label{alg:approx-matching}
\begin{algorithmic}[1]

\Function{\hm}{$G, \epsilon$}\Comment{$G=(V,E)$}
    \State $\Delta \gets $ the maximum degree in $G$
    \State $C \gets \emptyset$
    \For{$i = b \lceil \log_{1+\epsilon} (\Delta / \epsilon)\rceil$ \textbf{down to} $0$}
        \State{$D \gets \sample(E(G), p_i )$}
        \State{$C \gets C \cup D$}
        \State{Remove from $G$ all endpoints of edges in $D$}
        
    \EndFor
    \State \Return edges independent in $C$

\EndFunction

\end{algorithmic}
\end{algorithm}

\subsection{\hm in MPC}
We show that our techniques for solving \setcover can be further applied to solve approximate \hm.
For the purpose of this high-level overview, we consider the special case of approximate \matching in simple graphs, i.e., hypergraphs in which each edge has exactly 2 endpoints.
Generalizing our approach to arbitrary hypergraphs does not require any additional ideas.
Our algorithm for \hm is shown as \cref{alg:approx-matching}, and
works similarly to \cref{alg:f-apx}.
Specifically, if we consider the simple graph setting and the \vcover problem, \cref{alg:f-apx} samples a set of edges of the graph and then returns the set of endpoints of these edges as the solution.
\cref{alg:approx-matching} also samples a set of edges, but the difference is in how it computes the final solution.
Namely, it returns all sampled edges which are independent, i.e., not adjacent to any other sampled edge.
Clearly, the set of edges returned this way forms a valid matching.
To argue about its cardinality, we show that the number of edges returned is a constant factor of all edges sampled.
To this end, we show a second fact about the \ssp, which says that any sampled element is \emph{not} sampled by itself with only a small constant probability.

\begin{restatable}{lemma}{singlesample}\label{lem:single-sample}
Let $a \in A_k$ and let $\epsilon \leq 1/2$.
Then $P(|R_z| > 1 \mid a \in R_z) \leq 6\epsilon$.
\end{restatable}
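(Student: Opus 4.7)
My plan closely parallels the proof of \cref{lem:main-sampling}. First, by \cref{obs:nonadaptive} I may assume the chain $A_k \supseteq \cdots \supseteq A_0$ is fixed upfront; in addition, since $\{a \in R_z\}$ forces $z \ge i^\ast(a) := \min\{j : a \in A_j\}$, I restrict all sums below to $j \ge i^\ast(a)$. Writing $Q_j := \prod_{i>j}(1-p_i)^{|A_i|}$, per-step independence of the samplings gives
\[
P(a\in R_z)=\sum_j p_j Q_j, \qquad P(|R_z|>1,\,a\in R_z)\le \sum_j p_j^{\,2}(|A_j|-1)Q_j,
\]
where the second line uses that conditional on $z=j$ and $a\in R_j$ the set $R_j\setminus\{a\}$ is $\mathrm{Binom}(|A_j|-1,p_j)$, together with $1-(1-p_j)^{|A_j|-1}\le (|A_j|-1)p_j$. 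Since $|R_z|\ge 1$ whenever $a\in R_z$, the task reduces to bounding the ratio of these two sums, i.e.\ $E[p_z(|A_z|-1) \mid a \in R_z]$, by $6\epsilon$.

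Next, I will use the pointwise inequality $p_j|A_j|\le \epsilon(1+\epsilon)^{T-t_j}$, which follows from the setup hypothesis $p_k|A_k|\le \epsilon$ together with $|A_j|\le |A_k|$ and the block structure of $p$ (namely $p_j/p_k = (1+\epsilon)^{T-t_j}$, where $t_j := \lceil j/b\rceil$ and $T := \lceil k/b\rceil$). Plugging in and factoring out an $\epsilon$, the claim becomes the moment-generating-type inequality
\[
E\!\left[(1+\epsilon)^{T-t_z}\,\bigm|\,a\in R_z\right]\;\le\;6.
\]

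The main obstacle is this conditional MGF bound, which is the analog under conditioning of the tail estimate that drives \cref{lem:main-sampling}. I would group the sums by block and exploit the identity $(1+\epsilon)/(2+2\epsilon)=1/2$ guaranteed by the choice of $b$, together with the fact that any ``good'' block $s$ whose smallest live set satisfies $q_s\cdot\min_{j\in B_s}|A_j|\ge \epsilon$ contributes a factor at most $e^{-b\epsilon}\le 1/(2+2\epsilon)$ to the no-sample probability. Letting $t^\ast$ be the largest such good block and $U_t := \sum_{j\in B_t} p_j Q_j$, the decay $Q_{tb}\le (1/(2+2\epsilon))^{t^\ast-t}$ for $t\le t^\ast$ combined with $q_t=q_{t^\ast}(1+\epsilon)^{t^\ast-t}$ yields $U_t\le b\,q_{t^\ast}Q_{t^\ast b}\cdot(1/2)^{t^\ast-t}$ on the late side and $U_t\le b\,q_{t^\ast}Q_{t^\ast b}\cdot(1+\epsilon)^{-(t-t^\ast)}$ on the early side, making $(1+\epsilon)^{T-t}U_t$ a two-sided geometric sequence peaked at $t=t^\ast$ with ratios $(1+\epsilon)/2<1$ (late) and $1/(1+\epsilon)<1$ (early) whenever $\epsilon\le 1/2$. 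Summing these tails controls the numerator; the delicate part, which I expect to require the most care, is producing a matching lower bound on the denominator $\sum_tU_t$ so that the $(1+\epsilon)^{T-t^\ast}$ factor carried by the peak of the numerator is absorbed rather than propagated. The term $U_{t^\ast}$ itself serves as a concrete lower bound, and the same two-sided geometric structure of the $U_t$'s allows one to match the peak bookkeeping between numerator and denominator, closing the ratio at an absolute constant no larger than $6$.
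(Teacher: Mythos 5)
Your opening steps coincide with the paper's: fixing the chain via \cref{obs:nonadaptive}, restricting to indices $j \ge m_a$, writing $P(a\in R_z)=\sum_j p_j Q_j$ with $Q_j = P(z\le j)$, and bounding $P(|R_z|>1,\,a\in R_z)\le\sum_j p_j^2(|A_j|-1)Q_j$ are exactly the decomposition used in the paper. The divergence comes next, and it is where the argument breaks. Replacing $p_j|A_j|$ by the data-independent bound $\epsilon(1+\epsilon)^{T-t_j}$ (via $|A_j|\le|A_k|$) and reducing the lemma to $\mathbb{E}\bigl[(1+\epsilon)^{T-t_z}\mid a\in R_z\bigr]\le 6$ is not merely a step that needs care: that inequality is false. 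Consider the adversary that keeps $A_j=A_k$ (with $p_k n_k=\epsilon$, $n_k=n$ large) until a step $j_0$ chosen so that $Q_{j_0}\approx 1/n$, and then sets $A_j=\{a\}$ for all $j\le j_0$. Conditioned on reaching $j_0$, the lone element $a$ is eventually sampled with probability $1$ (since $p_0=1$), and it is typically sampled only once $\sum p_j$ reaches order one, i.e., at a step with $(1+\epsilon)^{T-t_z}=\Theta(n/b)$. Since $P(a\in R_z)=O(\log n/n)$ overall while the late branch contributes $\approx 1/n$, the conditional expectation of $(1+\epsilon)^{T-t_z}$ is $\Omega\bigl(n/(b\log n)\bigr)$, unbounded in $n$. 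The lemma itself survives this example only because $|R_z|=1$ there, i.e., because $p_z(n_z-1)=0$ even though $p_z$ is large; your pointwise bound discards exactly this shrinkage of $|A_j|$, which is the quantity doing the work.

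The fix is to keep the data-dependent products $p_i n_i$ in the sums, which is what the paper does: it splits indices into $I=\{i: p_in_i<\epsilon(1+\epsilon)\}$ and its complement, and charges each summand outside $I$ to the nearest index of the form $i, i+b, i+2b,\dots$ lying in $I$, gaining a factor $1/(2+2\epsilon)$ per block from \cref{lem:reaching-probability} (whose hypothesis is the \emph{actual} inequality $p_in_i\ge\epsilon(1+\epsilon)$, valid along the whole charging path) against the $(1+\epsilon)$ growth of $p_i$ and of $p_in_i$ from \cref{obs:slow-increase}; the terms indexed by $I$ are then compared directly to the matching denominator terms $Q_ip_i$, giving the ratio $2\epsilon(1+\epsilon)/(1-\epsilon)\le 6\epsilon$. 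A secondary problem with your sketched two-sided geometric bound is that taking $t^\ast$ to be the largest ``good'' block does not give $Q_{tb}\le(2+2\epsilon)^{-(t^\ast-t)}$ for all $t\le t^\ast$: goodness is not monotone (the adversary can shrink $\min_{j\in B_s}|A_j|$ faster than $q_s$ grows), so intermediate blocks need not contribute the $e^{-b\epsilon}$ factor. The paper's charging scheme avoids this by only ever walking backwards through blocks that are provably above the threshold.
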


The high-level idea behind the proof of \cref{lem:single-sample} is similar to the proof of \cref{lem:aux-degree}: in the steps where the expected number of sampled elements is $\leq \epsilon$, the property follows in a relatively straightforward way.
Moreover, we are unlikely to reach any step where the expected number of sampled elements is considerably larger, and so to complete the proof we also apply a geometric sum-based argument.
With the above Lemma, the analysis of \cref{alg:approx-matching} becomes straightforward and shows that the approximation ratio of the algorithm is $\frac{1 - h \cdot 6\epsilon}{h}$ (see \cref{lemma:approx-matching}), where $h$ is the rank of the hypergraph.


\cref{alg:approx-matching} can be seen as a simplification of the ``warm-up'' algorithm of \cite{ghaffari2019sparsifying}, which alternates between sampling edges incident to high-degree vertices and peeling high-degree vertices. 
Our algorithm simply samples from \emph{all edges} and does not peel vertices. This makes the proof of the approximation ratio trickier since there is less structure to leverage. 
However, the simplification results in a straightforward application of round compression and enables extending the algorithm to hypergraphs.

\subsection{An intuition behind choosing $p_i$ and $b$}
\label{sec:intuition-of-choosing-pi-and-b}
To see why we choose $p_i$ and $b$ as described by \cref{eq:p}, imagine the following process. 
Initially, there are $n$ items, and we process them in rounds. 
In round $i$, we sample each item independently with probability $p_i$. 
After each round, some items disappear. Once we successfully sample at least one item, the process stops. We aim to choose a not-too-long and non-decreasing sequence of probabilities $p_i$, so that the expected number of items chosen when the process stops is at most $1 + \epsilon$.

Intuitively, we get into a bad situation in this process when the number of remaining items times $p_i$ is more than $\epsilon$. In particular, the desired bound holds if we never get into a bad situation.

A natural starting point is to start with a very low $p_i$, e.g., $\epsilon / n$, and increase $p_i$ by a factor of $(1+\epsilon)$ every step. 
However, it turns out that this growth is too fast, and it is quite likely that we will get into a bad situation. 
Hence, we slow the growth of $p_i$ and increase it by a $(1+\epsilon)$ factor every $b$ steps. Let us now discuss how to choose $b$.

Let us assume that we are getting \emph{close} to a bad situation, that is, $p_i$ times the number of remaining items is a bit more than, say, $\epsilon / 10$. Then, if we set $b = 1/\epsilon$, one of three things may happen in one round:
\begin{itemize}
    \item Many items have been removed, so we are no longer in a bad situation.
    \item Otherwise, not many items are removed, and so we repeat the sampling step $b = 1/\epsilon$ times. In each of these steps, the expected number of sampled items is $\Theta(\epsilon)$. Thus, with constant probability (independent of $\epsilon$), we successfully sample an item, and the process finishes; since we have not gotten into a bad situation yet, this is a good outcome.
    \item Otherwise, we do not sample any items in $b$ steps. By tweaking $b$, we can ensure this happens with, at most, say, $1/100$ probability. After these $b$ steps, $p_i$ increases by a factor of $(1+\epsilon)$, so we get closer to a bad situation. However, since this only happens with a probability of $1/100$, the probability of actually reaching a bad situation can be upper-bounded.
\end{itemize}

\section{Approximation Analysis of the Algorithms}\label{sec:approximation}

\begin{figure}[t]
\begin{minipage}[b]{0.45\linewidth}
\centering
\begin{align*}
\textrm{minimize}\qquad & \sum_{s \in S} x_s & \\
\textrm{subject to} \qquad & \sum_{s \ni t} x_s \geq 1 & \textrm{for } t \in T\\
& x_s \geq 0 & \textrm{for } s \in S
\end{align*}
\end{minipage}
\hfill
\begin{minipage}[b]{0.45\linewidth}
\centering
\begin{align*}
\textrm{maximize} \qquad & \sum_{t \in T} y_t & \\
\textrm{subject to} \qquad & \sum_{t \in s} y_t \leq 1 & \textrm{for } s \in S\\
& y_t \geq 0 & \textrm{for } t \in T
\end{align*}
\end{minipage}
\caption{LP relaxation of the \setcover LP (left) and its dual (right).}\label{lp}
\end{figure}

In this section, we analyze the approximation ratio and analysis of our algorithms.
We note that the correctness of all algorithms is essentially immediate.
Specifically, in \cref{alg:f-apx} and \cref{alg:log-apx} we only remove an element when it is covered, and in the last iteration of the inner \textbf{for} loop in both algorithms (which in \cref{alg:f-apx} is the only loop) the sampling probability is $1$, so we add all remaining sets (in \cref{alg:log-apx}, limited to the large enough size) to the solution.
Similarly, \cref{alg:approx-matching} clearly outputs a valid matching, thanks to the final filtering step in the \textbf{return} statement.

\begin{lemma}\label{lem:fapx-properties}
Let $\bar{D}$ be the union of all elements picked in all iterations of \cref{alg:f-apx}.
For each set $s \in S$ we have $\E{|s \cap \bar{D}|} \leq 1+4\epsilon$.
\end{lemma}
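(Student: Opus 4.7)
The plan is to fix an arbitrary set $s \in S$ and couple the part of \cref{alg:f-apx} that concerns elements of $s$ with an instance of \ssp, so that \cref{lem:main-sampling} directly yields the bound.

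Concretely, I would take $A_k = N(s)$ with $k = b\lceil \log_{1+\epsilon}(\Delta/\epsilon)\rceil$ and, for each $i = k, k-1, \ldots, 0$, define $A_i$ to be the set of elements of $s$ still present in $G$ at the start of iteration $i$ of the algorithm, and $R_i$ to be the subset of $A_i$ that gets sampled into $D$ in that iteration. Since in iteration $i$ each surviving element of $T$ is sampled independently with probability $p_i$, and the samples drawn inside $s$ are independent of those drawn outside of $s$, the conditional distribution of $R_i$ given $A_i$ is exactly $\sample(A_i, p_i)$. Moreover $|A_k| \leq \Delta$, so the required precondition $k \geq b\lceil \log_{1+\epsilon}(|A_k|/\epsilon)\rceil$ is satisfied by the choice of $k$.

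Next I would verify that the transitions $A_i \to A_{i-1}$ are faithfully captured by a randomized \ssp-adversary. An element of $A_i$ is deleted between iterations $i$ and $i-1$ for one of two reasons: either $R_i \neq \emptyset$, in which case $s$ itself enters $C$ and all of $A_i$ is removed (so $A_{i-1} = \emptyset$); or $R_i = \emptyset$ but some element $e \in A_i$ happens to lie in another set $s' \neq s$ that is added to $C$ during iteration $i$ because some element of $s' \setminus \{e\}$ was sampled. In both cases $A_{i-1} \subseteq A_i$ is determined by the prefix $(A_j, R_j)_{j \geq i}$ together with randomness arising from the sampling of elements outside of $s$, which is independent of $R_i$; this is precisely what a randomized \ssp-adversary is permitted to observe and use.

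Finally, I would observe that $|s \cap \bar{D}| = |R_z|$, where $z$ is the largest index with $R_z \neq \emptyset$ (with the convention $R_{-1} = \emptyset$ if no such index exists). Indeed, the first iteration in which an element of $s$ is sampled causes $s$ to be added to $C$ and every surviving element of $s$ to be removed, so no element of $s$ can be sampled in any later iteration. Applying \cref{lem:main-sampling} to this \ssp instance then gives $\E{|s \cap \bar{D}|} = \E{|R_z|} \leq 1 + 4\epsilon$. The main delicate point to handle carefully in a formal write-up is the independence claim used in case (b): conditionally on $A_i$, the external deletions of elements of $s$ depend only on which elements of $T \setminus A_i$ are sampled in iteration $i$, a quantity independent of $R_i$, so that the adversary can legitimately produce $A_{i-1}$ using only information and randomness permitted by the \ssp{} model.
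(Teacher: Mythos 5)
Your proposal is correct and takes essentially the same route as the paper: the paper's proof is a one-line application of \cref{lem:main-sampling} to the set $s$, with the coupling of the algorithm's dynamics to an \ssp{} instance (elements of $s$ as $A_i$, deletions by other sets' choices as the randomized adaptive adversary) being exactly the reduction sketched in the paper's technical overview. Your additional care about why the external randomness constitutes a legitimate adversary and why $|s \cap \bar{D}| = |R_z|$ just makes explicit what the paper leaves implicit.
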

\begin{proof}
This follows from \cref{lem:main-sampling} applied to the set $s$.
\end{proof}

\begin{lemma}\label{lem:f-apx}
\cref{alg:f-apx}, called with $e' = \epsilon / 4$, computes an $(1+\epsilon)f$-approximate solution to \setcover.
\end{lemma}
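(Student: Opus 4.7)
The plan is to bound $\E{|C|}$ in terms of $|\mathrm{OPT}|$ by combining the ``low expected overlap'' guarantee from \cref{lem:fapx-properties} with a simple charging argument against a fixed optimal cover.

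First, I would note that with $\epsilon' = \epsilon/4$, the bound of \cref{lem:fapx-properties} becomes $\E{|s \cap \bar{D}|} \leq 1 + 4\epsilon' = 1 + \epsilon$ for every set $s \in S$. Here $\bar{D}$ is the union over all iterations of the sampled element set. The first structural observation is that the final cover $C$ returned by the algorithm equals $N(\bar{D})$ (in every iteration we add exactly $N(D)$ to $C$, and $\bar{D} = \bigcup_i D$). Since every element of $T$ lies in at most $f$ sets, we immediately get the deterministic bound $|C| \leq \sum_{t \in \bar{D}} |N(t)| \leq f \cdot |\bar{D}|$.

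Next, I would relate $\E{|\bar{D}|}$ to $|\mathrm{OPT}|$ via OPT itself. Fix an optimal solution $O \subseteq S$ of size $|\mathrm{OPT}|$, and for each element $t \in T$ charge $t$ to an arbitrary set $s_t \in O$ containing it. Since $\bar{D} \subseteq T$ and the sets $\{t : s_t = s\}_{s \in O}$ partition $T$, we have
\begin{equation*}
|\bar{D}| \;=\; \sum_{s \in O} |\{t \in \bar{D} : s_t = s\}| \;\leq\; \sum_{s \in O} |\bar{D} \cap s|.
\end{equation*}
Taking expectations and applying \cref{lem:fapx-properties} to each $s \in O$ yields
\begin{equation*}
\E{|\bar{D}|} \;\leq\; \sum_{s \in O} \E{|\bar{D} \cap s|} \;\leq\; (1+\epsilon)\,|\mathrm{OPT}|.
\end{equation*}

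Combining the two inequalities, $\E{|C|} \leq f \cdot \E{|\bar{D}|} \leq (1+\epsilon)\,f\,|\mathrm{OPT}|$, as desired. The heavy lifting is all done by \cref{lem:fapx-properties}, and the rest is a one-line charging argument; there is no significant obstacle in this step. The only thing to be careful about is the substitution $\epsilon' = \epsilon/4$ so that the $(1+4\epsilon')$ factor supplied by \cref{lem:fapx-properties} matches the $(1+\epsilon)f$ approximation statement of the lemma.
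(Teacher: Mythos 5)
Your proof is correct, and it takes a somewhat different route from the paper's. The paper proves \cref{lem:f-apx} by dual fitting: it sets $\bar{y_t} = 1/(1+4\epsilon)$ for $t \in \bar{D}$ (and $0$ otherwise), takes $y_t = \E{\bar{y_t}}$, uses \cref{lem:fapx-properties} to verify that these $y_t$ form a feasible solution to the dual LP in \cref{lp} (this requires the overlap bound for \emph{every} $s \in S$), and then invokes weak LP duality to compare $\E{|C|} \leq f(1+4\epsilon)\sum_t y_t$ with the optimum. You instead charge the sampled elements directly to a fixed optimal cover $O$: from $C \subseteq N(\bar D)$ you get $|C| \leq f|\bar D|$, and from the partition of $\bar D$ induced by the assignment $t \mapsto s_t \in O$ you get $\E{|\bar D|} \leq \sum_{s\in O}\E{|\bar D \cap s|} \leq (1+4\epsilon')|O|$, needing \cref{lem:fapx-properties} only for the sets of $O$. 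The two arguments are morally the same counting — your charging is essentially an explicit proof of the weak-duality step for this particular dual solution — but yours is more elementary (no LP machinery) and slightly more economical in what it assumes, while the paper's LP formulation additionally certifies the bound against the fractional optimum and keeps the analysis stylistically parallel to the dual-fitting proof of \cref{lemma:set-cover-log-approx}. Your handling of the parameter substitution $\epsilon' = \epsilon/4$, and the observation that sampled elements still present have all their original sets intact (so $|N(t)| \leq f$ applies), are both sound.
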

\begin{proof}
The key property that we utilize in the analysis is stated in \cref{lem:fapx-properties}.
The proof is a relatively simple generalization of the dual fitting analysis of the standard $f$-approximate \setcover algorithm.
The generalization needs to capture two aspects: the fact that the property stated in \cref{lem:fapx-properties} holds only in expectation and allows for a slack of $4\epsilon$.

We use the relaxation of the \setcover IP and its dual given in \cref{lp}.
Let $\bar{D}$ be the union of all elements picked in all iterations of \cref{alg:f-apx}.
We construct a dual solution that corresponds to $\bD$ as follows.
First, for each $t \in \bar{D}$, we set $\bar{y_t} = 1 / (1 + 4\epsilon)$, and for $t \not\in \bar{D}$ we set $\bar{y_t} = 0$.
Recall that \cref{alg:f-apx} returns a solution of size $|C|$.
For any run of the algorithm we have $|C| \leq f \sum_{t\in T} \bar{y_t} (1+4\epsilon)$.

We now define a set dual of variables by setting $y_t = \E{\bar{y_t}}$ for each $t \in T$.
This set forms a feasible dual solution, since for every $s \in S$ we have
\[
\sum_{t \in s} y_t = \sum_{t \in s} \E{\bar{y_t}} = \E{|s \cap \bar{D}| /(1+4\epsilon)} \leq 1,
\]
where in the last inequality we used \cref{lem:fapx-properties}.
Moreover, we have
\[
\E{|C|} \leq f \cdot \sum_{t \in T} y_t (1+4\epsilon),
\] 
which implies that the solution's expected size is at most $f(1+4\epsilon)$ times larger than a feasible dual solution.
Hence, the lemma follows from weak LP duality.
\end{proof}

Now let us consider \cref{alg:log-apx}.

\begin{lemma}\label{lem:logn-properties}
\cref{alg:log-apx} adds sets to the solution in batches.
When a batch of sets $D$ is added to the solution we have that (a) the residual size of each set in $D$ is at most $(1+\epsilon)$ smaller than the maximum residual size of any set at that moment, and (b) for each newly covered element $t$, the expected number of sets in a batch that cover it is at most $(1+4\epsilon)$.
\end{lemma}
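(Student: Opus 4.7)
The plan is to handle the two parts independently: part (a) follows from a direct structural observation about \cref{alg:log-apx}, while part (b) will be obtained by modeling the inner loop as an instance of \ssp and then invoking \cref{lem:main-sampling}.

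For part (a), I would first note that $p_0 = 1$ by \cref{eq:p}, so the final iteration of the inner loop at $i = 0$ in any outer iteration $j$ samples \emph{every} set with $|N(s)| \geq (1+\epsilon)^j$, adds it to the cover, and removes it. Arguing inductively on the outer loop index (which decreases), this implies that when the algorithm enters outer iteration $j$, every remaining set $s$ satisfies $|N(s)| < (1+\epsilon)^{j+1}$. Since residual set sizes can only shrink during outer iteration $j$, this upper bound continues to hold throughout. On the other hand, any set sampled into the batch $D$ during outer iteration $j$ must, by the definition of the sampling pool, satisfy $|N(s)| \geq (1+\epsilon)^j$ at the moment of sampling. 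Therefore, the ratio between the maximum residual size at that moment and the residual size of any $s \in D$ is strictly less than $(1+\epsilon)$, as claimed.

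For part (b), I would fix an element $t \in T$ that is newly covered during some outer iteration $j$ and model the inner loop of that outer iteration as an \ssp instance over $A = \{s \in S : t \in s\}$. Indexing inner iterations $k, k-1, \ldots, 0$ with $k = b \lceil \log_{1+\epsilon}(f/\epsilon) \rceil$, I would take $A_i$ to be the subset of $A$ whose sets still remain in the graph and have residual size at least $(1+\epsilon)^j$ at the start of inner iteration $i$. Since the algorithm samples each eligible set independently with probability $p_i$, the set $R_i$ of sampled sets from $A_i$ matches exactly what \ssp prescribes. The condition $k \geq b \lceil \log_{1+\epsilon}(n_k/\epsilon) \rceil$ is automatic because $n_k = |A_k| \leq f$. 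If $z$ denotes the first inner iteration (in the decreasing order) at which $R_z \neq \emptyset$, then $R_z$ is precisely the collection of sets in the batch that cover $t$, and $|R_z| = 0$ on the event that $t$ is not newly covered in this outer iteration. \cref{lem:main-sampling} then yields $\E{|R_z|} \leq 1 + 4\epsilon$, which is exactly the bound claimed in (b).

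The step requiring the most care will be verifying that the shrinkage $A_{i+1} \to A_i$ fits the SSP adversary model. The only way a set in $A_{i+1} \setminus R_{i+1}$ can fail to be in $A_i$ is that its residual size has dropped below $(1+\epsilon)^j$ because of elements being removed by sampling of sets \emph{not} containing $t$ in iteration $i+1$. This external randomness is independent of the coins used to sample $R_{i+1}$ from $A_{i+1}$, so it can be folded into the private randomness of the SSP adversary, whose choice of $A_i$ may depend on both the revealed tuples $(A_j, R_j)_{j > i}$ and any amount of independent external randomness. Once this is spelled out, \cref{lem:main-sampling} applies as stated and the proof is complete.
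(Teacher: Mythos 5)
Your proposal matches the paper's own proof: part (a) is established by the same induction on the outer loop (using $p_0=1$ to argue every surviving set has residual size below $(1+\epsilon)^{j+1}$ when iteration $j$ begins, while sampled sets have size at least $(1+\epsilon)^j$), and part (b) is exactly the paper's one-line reduction of the sets containing $t$ within an outer iteration to an \ssp instance followed by an application of \cref{lem:main-sampling}. In fact you supply more detail than the paper does for (b) — the eligibility-based definition of the $A_i$, the check that $n_k \le f$ meets the initial-probability requirement, and folding the external randomness into the (randomized, adaptive) \ssp adversary — so the argument is correct at the same level of rigor as, and along the same route as, the paper's.
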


\begin{proof}
Observe that for $i = 0$ and the current value of $j$, each of the remaining sets of size $(1 + \eps)^j$ or more is included in $D$.
By applying this observation inductively, we see that each iteration of the outer loops starts with the maximum set size being less than $(1+\epsilon)^{j+1}$ and results in all sets of size at least $(1+\epsilon)^j$ being either added to the solution or removed from the graph.
This implies claim (a).
Claim (b) follows directly from \cref{lem:main-sampling}.
\end{proof}

To bound the approximation guarantee of \cref{alg:log-apx}, we show the following, which, similarly to the proof of \cref{lem:f-apx}, uses a dual-fitting analysis.

\begin{lemma}\label{lemma:set-cover-log-approx}
Any algorithm that computes a valid \setcover solution and satisfies the property of \cref{lem:logn-properties}, computes an $(1+\epsilon)(1+4\epsilon) \hdelta$-approximate (in expectation) solution to \setcover.
\end{lemma}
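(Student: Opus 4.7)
The plan is to adapt the classical dual-fitting analysis for greedy set cover. For each element $t \in T$, let $D_t$ denote the batch in which $t$ is first covered, let $c(t) \eqdef \abs{\cb{s \in D_t : s \ni t}}$, and let $|N(s)|_t$ denote the residual size of a set $s \in D_t$ at the start of $D_t$. I define the random price
$$p(t) \eqdef \sum_{s \in D_t,\, s \ni t} \frac{1}{|N(s)|_t}.$$
Reversing the order of summation shows that each $s \in C$ contributes exactly $1$ to $\sum_t p(t)$, since $s$ is charged $1/|N(s)|_t$ by each of the $|N(s)|_t$ elements it newly covers. Hence $\sum_{t \in T} p(t) = \abs{C}$ in every realization, and so $\sum_t \E{p(t)} = \E{\abs{C}}$.

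The next step is to bound $\E{p(t)}$ using the two properties of \cref{lem:logn-properties}. Let $M_t$ denote the maximum residual set size at the start of $D_t$. Property (a) gives $|N(s)|_t \geq M_t/(1+\eps)$ for every $s \in D_t$, so deterministically $p(t) \leq c(t)(1+\eps)/M_t$. To handle the coupling between $c(t)$ and $M_t$, I condition on the history of the algorithm up to the sampling inside $D_t$; this history determines $M_t$. Since property (b) comes from \cref{lem:main-sampling}, whose adversarial formulation of \ssp allows the sets $A_i$ to be chosen as a function of all prior randomness, the bound $\E{c(t) \mid \text{history}} \leq 1+4\eps$ holds. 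Consequently
$$\E{p(t)} \leq (1+\eps)\,\E{c(t)/M_t} \leq (1+\eps)(1+4\eps)\,\E{1/M_t}.$$

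For a fixed set $s^* \in S$, I bound $\sum_{t \in s^*} \E{1/M_t}$ by the standard harmonic argument, carried out pointwise. In any realization, order the elements of $s^*$ by their covering batch as $t_1,\ldots,t_{|s^*|}$ (breaking ties arbitrarily). At the start of $t_i$'s batch, the elements $t_i,\ldots,t_{|s^*|}$ are still uncovered and $s^*$ is still in the graph (otherwise $s^*$ would have been added in some earlier batch, covering $t_i$ before $D_{t_i}$, a contradiction), so the residual size of $s^*$ is at least $|s^*|-i+1$ and hence $M_{t_i} \geq |s^*|-i+1$. Summing, $\sum_{i} 1/M_{t_i} \leq H_{|s^*|} \leq \hdelta$ in every realization, so $\sum_{t \in s^*} \E{p(t)} \leq (1+\eps)(1+4\eps)\hdelta$.

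Finally, setting $y_t \eqdef \E{p(t)}/((1+\eps)(1+4\eps)\hdelta)$, we have $y_t \geq 0$ and $\sum_{t \in s^*} y_t \leq 1$ for every $s^* \in S$, so $y$ is feasible for the dual LP in \cref{lp}. Weak LP duality gives $\sum_t y_t \leq \mathrm{OPT}$, and therefore
$$\E{\abs{C}} = (1+\eps)(1+4\eps)\hdelta \cdot \sum_t y_t \leq (1+\eps)(1+4\eps)\hdelta \cdot \mathrm{OPT}.$$
The main subtlety in this plan is the conditioning step: one must verify carefully that the bound inherited from \cref{lem:main-sampling} continues to apply after conditioning on the history preceding $D_t$, by checking that the \ssp restricted to the steps within $D_t$'s inner phase still fits the hypotheses of \cref{lem:main-sampling}. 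Everything else is routine bookkeeping once the pricing scheme is in place.
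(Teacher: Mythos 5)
Your overall architecture is the same as the paper's: the same per-element pricing scheme, the same use of properties (a) and (b) of \cref{lem:logn-properties}, the same harmonic-sum bound over the elements of a fixed set, and the same conclusion via weak LP duality, with matching constants. The one step where you deviate is the inequality $\E{c(t)/M_t} \le (1+4\eps)\,\E{1/M_t}$, which you justify by asserting that $\E{c(t)\mid \mathrm{history}}\le 1+4\eps$. This is a genuine gap: \cref{lem:main-sampling} bounds only the \emph{unconditional} expectation $\E{|R_z|}$, averaged over all the sampling randomness, and the adversary's adaptivity is not a substitute for conditioning, since conditioning changes the measure on the sampling randomness itself. Conditioning on the history up to the batch that covers $t$ in particular fixes the event that no set containing $t$ was sampled in any earlier step and pins down the number $n_i$ of sets containing $t$ still in the pool; on that event the conditional expectation of $c(t)$ is, by the computation in \cref{cl:condition-nonempty}, roughly $1+p_i n_i$, which exceeds $1+4\eps$ on ``bad'' histories where the process has survived to a step with $p_i n_i \gg \eps$. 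The content of \cref{lem:main-sampling} is precisely that such histories are unlikely \emph{on average}; it gives no pointwise-over-histories guarantee, so the step as justified does not go through (and $c(t)$ and $M_t$ are exactly the kind of quantities that can be correlated through which batch ends up covering $t$).

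The repair is simply to reorder your two bounds, which is what the paper's proof does: for a fixed set $s^*$, first use your pointwise observation $M_{t_i}\ge |s^*|-i+1$ (valid in every realization), so that $p(t_i)\le (1+\eps)\,c(t_i)/(|s^*|-i+1)$ holds with a \emph{deterministic} denominator, and only then take expectations and invoke property (b) of \cref{lem:logn-properties}, i.e., $\E{c(t_i)}\le 1+4\eps$, by linearity. This eliminates any need for a conditional version of \cref{lem:main-sampling}, and with that reordering your argument coincides with the paper's proof of \cref{lemma:set-cover-log-approx}.
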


\begin{proof}
Similarly to the proof of \cref{lem:f-apx}, we use the standard dual-fitting analysis of \setcover and based on the properties of the algorithm captured in \cref{lem:logn-properties}.
The standard analysis does not work out of the box due to additional slack in both claims stated in the lemma (i.e. the fact that $\epsilon > 0$) and the expected guarantee in claim (b).

We use the LP relaxation of the \setcover IP and its dual shown in \cref{lp}.
Our goal is to define a feasible set of dual variables.
We first define a \emph{price} $p_t$ of each element $t$ as follows.
Initially, the price of each element is $0$.
When a set $s$ with $m$ uncovered elements $t_1, \ldots, t_m$ is added to the solution, we add $1/m$ to the prices of $t_1, \ldots, t_m$.
Note that we can increase each price multiple times this way, as the sets added to the solution in our algorithm may (typically only slightly) overlap.
Clearly, the sum of the prices defined this way is precisely the size of the \setcover solution.

Fix a set $s$.
Let $t_1, \ldots, t_{|s|}$ be the elements of $s$, ordered in the same way as they got covered in the algorithm and $p_1, \ldots, p_{|s|}$ be their corresponding prices.
In particular, just before $t_1$ was covered $s$ had $|s|$ uncovered elements.
Moreover, let $X_i$, for $i \in [1, |s|]$ denote the number of sets covering $t_i$ that have been added to the solution when $t_i$ was first covered.

\begin{claim}\label{cl:prices}
It holds that
$\sum_{i=1}^{|s|} \E{p_i} \leq (1+4\epsilon)(1+\epsilon) H_{|s|}$.

\end{claim}
\begin{proof}
By the ordering of elements, when element $t_i$ was first covered there were still $|s|-i+1$ uncovered elements of $|s|$, and so the algorithm was free to choose and add to the solution a set of at least this size.
Thus, by \cref{lem:logn-properties} the algorithm covered $t_i$ by a set of size at least $(|s|-i+1) / (1+\epsilon)$.
Hence, by the definition of the price, we have $p_i \leq \frac{X_i \cdot (1+\epsilon)}{|s|-i+1}$. Moreover, by \cref{lem:logn-properties}, $\E{X_i} \leq 1+4\epsilon$.
By using these two bounds, we obtain
\[
\sum_{i=1}^{|s|} \E{p_i} \leq \sum_{i=1}^{|s|} \frac{\E{X_i} (1+\epsilon)}{|s| - i + 1} \leq \sum_{i=1}^{|s|} \frac{(1+4\epsilon)(1+\epsilon)}{|s| - i + 1} \leq (1+4\epsilon)(1+\epsilon) H_{|s|}.
\]
\end{proof}

We now define the dual variable $y_t$ of each element $t$ as 
\[
y_t = \frac{\E{p_t}}{(1+4\epsilon)(1+\epsilon) H_{\Delta}}.
\]
By \cref{cl:prices}, the sum of the dual variables of elements in each set is at most $1$ (since $\Delta$ is the size of the largest set), and each dual variable is nonnegative.
As a result, these dual variables give a feasible solution to the dual of the \setcover problem.
Moreover, the expected number of sets in the solution is exactly $(1+4\epsilon)(1+\epsilon) H_{\Delta}$ factor larger than the sum of the dual variables (which is the objective of the dual LP).
Hence, the lemma follows from weak duality.
\end{proof}

\begin{lemma}
\label{lemma:approx-matching}
\cref{alg:approx-matching} ran on a rank $h$ hypergraph in expectation computes a $\frac{1 - h \cdot 6\epsilon}{h}$ approximate matching.
\end{lemma}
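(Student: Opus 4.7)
The plan is to combine \cref{lem:single-sample} with a charging argument against an optimal matching $M^*$ of size $\textsc{OPT}$. For any edge $e \in E(G)$, let $A_e \eqdef \{g \in E(G) : g \cap e \neq \emptyset\}$. I would view the algorithm's sampling, restricted to $A_e$, as an instance of \ssp. The crucial structural observation is that, conditional on $e$ being sampled at some iteration $i_e$, every other sampled edge of $A_e$ is also sampled at iteration $i_e$: any such edge sampled at an earlier-in-time iteration $j > i_e$ (recall indices decrease in time) would remove its shared vertex with $e$, eliminating $e$ from the graph before iteration $i_e$; and any such edge sampled at a later iteration $j < i_e$ cannot exist in the graph at iteration $j$, since all vertices of $e$ (and hence all edges in $A_e$) were already removed at iteration $i_e$. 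Consequently, $e$ is ``independent'' in $C$ (i.e., $e \in M$) if and only if $|R_z^{(e)}| = 1$, where $R_z^{(e)}$ denotes the first nonempty sample of \ssp applied to $A_e$. Applying \cref{lem:single-sample} with $a = e$ (valid since $\epsilon \leq 1/2$),
\[
\Pr\bigl[e \in M \mid e \text{ sampled}\bigr] \;=\; \Pr\bigl[|R_z^{(e)}| = 1 \mid e \in R_z^{(e)}\bigr] \;\geq\; 1 - 6\epsilon.
\]

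To lower bound the expected number of sampled edges, I would use the following deterministic fact: for each $e^* \in M^*$, at least one edge of $A_{e^*}$ is sampled by the algorithm. Indeed, $e^* \in A_{e^*}$, so if $e^*$ survives to the final iteration it is sampled since $p_0 = 1$; otherwise $e^*$ was removed at some earlier iteration, which can only happen when a sampled edge is incident to a vertex of $e^*$ and hence lies in $A_{e^*}$. Summing over $e^* \in M^*$ and exchanging the order of summation,
\[
\textsc{OPT} \;\leq\; \sum_{e^* \in M^*} \mathbb{E}\bigl[|\{g \in A_{e^*} : g \text{ sampled}\}|\bigr] \;=\; \sum_{g \in E(G)} \Pr[g \text{ sampled}] \cdot |\{e^* \in M^* : g \in A_{e^*}\}| \;\leq\; h \sum_{g \in E(G)} \Pr[g \text{ sampled}],
\]
where the last inequality uses that $M^*$ is a matching, so each vertex of $g$ lies in at most one edge of $M^*$, bounding $|\{e^* : g \in A_{e^*}\}|$ by $|g| \leq h$.

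Combining the two steps via linearity of expectation,
\[
\mathbb{E}[|M|] \;=\; \sum_{e \in E(G)} \Pr[e \in M] \;\geq\; (1 - 6\epsilon) \sum_{e \in E(G)} \Pr[e \text{ sampled}] \;\geq\; \frac{1 - 6\epsilon}{h}\,\textsc{OPT} \;\geq\; \frac{1 - h \cdot 6\epsilon}{h}\,\textsc{OPT},
\]
where the last inequality holds since $h \geq 1$. The main subtlety I anticipate is the careful justification of the single-iteration property used to identify the event $\{e \in M\}$ with $\{|R_z^{(e)}| = 1\}$; this depends on the decreasing order of iteration indices together with the endpoint-removal rule. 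Once that identification is in hand, the remaining argument is a direct combination of \cref{lem:single-sample} with a simple double-counting step.
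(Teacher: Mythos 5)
Your identification of the event $\{e \in M\}$ with $\{|R_z^{(e)}| = 1\}$ (conditional on $e$ being sampled) is sound, and your charging argument against $M^*$ is a correct alternative to the paper's route, which instead observes that $V(C')$ is a vertex cover and hence $|M^*| \le |V(C')| \le h|C'|$. The genuine gap is in the first step: you apply \cref{lem:single-sample} with ground set $A_e = \{g : g \cap e \neq \emptyset\}$, but the \ssp guarantees are proved under the standing assumption $k \geq b \lceil \log_{1+\epsilon}(n_k/\epsilon)\rceil$, i.e., the initial sampling probability is at most $\epsilon/|A|$; this is exactly what makes \cref{obs:low-initial} hold, which in turn is needed inside the proof of \cref{lem:single-sample} (the charging map $f$ there is well defined only because $(k-b,k] \subseteq I$). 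In \cref{alg:approx-matching} the number of iterations is $k = b\lceil\log_{1+\epsilon}(\Delta/\epsilon)\rceil$ with $\Delta$ the maximum \emph{vertex} degree, so the initial probability is only guaranteed to be at most $\epsilon/\Delta$, while $|A_e|$ can be as large as roughly $h\Delta$. Thus $p_k \cdot |A_e|$ can be about $h\epsilon$ and the precondition of the lemma fails for $h \ge 2$, so the bound $\Pr\bigl[|R_z^{(e)}| > 1 \mid e \in R_z^{(e)}\bigr] \le 6\epsilon$ does not follow from the lemma as stated. A telltale sign is that your argument, if valid, would establish the strictly stronger ratio $(1-6\epsilon)/h$, with the factor $h$ entering only through the cosmetic last inequality $\frac{1-6\epsilon}{h} \ge \frac{1-h\cdot 6\epsilon}{h}$; the factor $h$ in the lemma statement is not an artifact but the price of the size of the neighborhood of $e$.

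The repair is the paper's per-vertex decomposition: for each vertex $v \in e$, apply \cref{lem:single-sample} with $A$ equal to the star of $v$ (size at most $\Delta$, so the precondition holds, exactly as the paper sets up the correspondence), concluding that $v$ is incident to more than one sampled edge with probability at most $6\epsilon$, and then union bound over the at most $h$ vertices of $e$ to obtain $\Pr[e \notin M \mid e \text{ sampled}] \le h \cdot 6\epsilon$. Feeding this weaker conditional bound into your double-counting step (or into the vertex-cover bound) recovers precisely the claimed $(1 - h\cdot 6\epsilon)/h$ guarantee. Alternatively, one could re-derive a variant of \cref{lem:single-sample} under the relaxed assumption $p_k n_k \le h\epsilon$, but that requires reworking the thresholds in its proof and is not the statement you are citing.
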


\begin{proof}
Let $G = (V, E)$ be input to \cref{alg:approx-matching}, and let $C'$ be the set $C$ after executing the for-loop. Observe that $V(C')$ is a vertex cover of $G$: all the edges not covered by the time we reach $i = 0$ are included in $D$ and, so, in $C$.

We also want to lower-bound the size of independent edges in $C'$. 
Fix an edge $e$ and consider a vertex $v \in e$. Once $e$ is included in $C$, all the endpoints of $e$ are removed from $G$. Hence, if $e$ is not independent in $C'$, then it is the case because, in the same iteration, an edge $e'$ adjacent to $e$ is also included in $D$.
To upper-bound the probability of $e'$ and $e$ being included in $D$, we use \cref{lem:single-sample}. How do we use \cref{lem:single-sample} in the context of \cref{alg:approx-matching}?
For a fixed vertex $v$, $A_i$ is the set of edges incident to $v$ at the $i$-th iteration of the for-loop of \cref{alg:approx-matching}. In particular, the set $A_k = A$ defined in \cref{sec:sampling-lemma} equals all the edges of the input graph $G$ containing $v$.

By \cref{lem:single-sample}, the probability of $v$ being incident to more than one sampled edge is at most $6\epsilon$.
Thus, by union bound, $e$ and an edge adjacent to $e$ are included in $D$ with probability at most $h \cdot 6\epsilon$. Therefore, with probability $1 - h \cdot 6 \eps$ at least, a fixed edge in $C'$ is independent. This implies that in expectation $|C'| (1 - h \cdot 6 \eps)$ edges in $C'$ are independent and, so, \cref{alg:approx-matching} outputs a matching that in expectation has size at least $|C'| (1 - h \cdot 6 \eps)$. Since there is a vertex cover of size $|V(C')| \le h |C'|$ at most, it implies that \cref{alg:approx-matching} in expectation produces a $\frac{1 - h \cdot 6\epsilon}{h}$-approximate maximum matching.
\end{proof}

\section{Analysis of the Set Sampling Process}
\label{sec:sampling-lemma}

In this section, we prove two properties of the \ssp, which are key in analyzing our algorithms. Recall the set sampling process SSP: initially, $A_k = A$, and $R_k$ is obtained by including each element of $A_k$ independently with probability $p_k$.  Then, for every $i$ from $k-1$ down to $0$, an adversary chooses $A_i \subseteq A_{i+1}$ (possibly with randomization) and then $R_i$ is obtained by including every element of $A_i$ independently with probability $p_i$.

Note two things about this process.  First, the adversary can be randomized.  Second, the adversary can be \emph{adaptive}: its choice of $A_i$ can depend on $A_j$ and $R_j$ for $j > i$.  
Recall that after running this process, $z$ is the largest index such that $R_z$ is nonempty.  Our goal in this section is to prove the following lemma:

\mainsampling*

To get some intuition for why~\cref{lem:main-sampling} might be true, observe that the sampling probability $p_i$ increases very slowly; specifically, it increases by a $(1+\epsilon)$ factor every $b$ steps. So the algorithm gets many chances at each (low) probability to obtain a non-empty $R_i$, and so it is not very likely to get more than $1$ element in $R_z$.

To prove this lemma, we start with a simple but extremely useful observation: we may assume without loss of generality that the adversary is \emph{nonadaptive}: its choice of $A_i$ does not depend on $R_j$ for $j > i$ (it can still depend on $A_j$ for $j > i$).
In other words, a nonadaptive adversary must pick the entire sequence of $A_i$'s before seeing the results of any of the $R_i$'s.
Moreover, we may assume that the adversary is \emph{deterministic}.

\begin{observation} \label{obs:nonadaptive}
    Without loss of generality, the adversary is nonadaptive and deterministic, i.e., it is a single fixed sequence $A_k, A_{k-1}, \dots, A_0$.
\end{observation}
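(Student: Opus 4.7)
The plan is to argue in two stages: (i) randomization of the adversary does not help, and (ii) adaptivity of the adversary does not help. Combining them yields the claim.

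For stage (i), we view any randomized adversary strategy $\sigma$ as a distribution over deterministic strategies $\{\sigma_r\}$ indexed by the adversary's internal randomness $r$. Since the distribution of $R_z$ depends on both the adversary's randomness and the sampling randomness, linearity of expectation gives
\[
    \E{|R_z| \text{ under } \sigma} \;=\; \E_r\bigl[\E{|R_z| \text{ under } \sigma_r}\bigr],
\]
so any upper bound of the form $\E{|R_z|} \le 1+4\epsilon$ that holds uniformly for deterministic adversaries automatically transfers to randomized ones. Hence it suffices to bound $\E{|R_z|}$ over deterministic (possibly adaptive) adversaries.

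For stage (ii), fix a deterministic adaptive adversary $\sigma$. Define a deterministic nonadaptive adversary $\tilde\sigma$ by simulating $\sigma$ along the ``all-empty'' branch: that is, $\tilde\sigma$ outputs the unique sequence $A_k, A_{k-1}, \dots, A_0$ that $\sigma$ would produce if at every step $i<k$ the observed samples $R_{i+1}, \dots, R_k$ were all empty. This sequence is well-defined and computable before any sampling occurs, so $\tilde\sigma$ is nonadaptive. The goal is to show that $R_z$ has the same distribution under $\sigma$ and under $\tilde\sigma$.

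The main step, and the one requiring the most care, is the coupling. Run both processes with the same Bernoulli sampling bits: for each element $a \in A$ and each index $i$, use the same realization of ``is $a$ sampled at step $i$ (conditional on $a \in A_i$).'' Let $z^\sigma$ be the largest index at which $R_i \ne \emptyset$ in $\sigma$'s run. For every $i \ge z^\sigma$, the samples $R_{i+1}, \dots, R_k$ observed by $\sigma$ are all empty by definition of $z^\sigma$, so $\sigma$'s choice of $A_i$ matches what it would choose on the all-empty branch, which is precisely $\tilde\sigma$'s $A_i$. An easy downward induction on $i$ from $k$ to $z^\sigma$ then shows that $A_i$ and $R_i$ agree in the two coupled runs throughout this range. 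In particular $R_{z^\sigma}$ is identical in both runs, and no nonempty $R_i$ appears at a higher index in $\tilde\sigma$'s run either, so $z^{\tilde\sigma} = z^\sigma$ and $R_{z^{\tilde\sigma}} = R_{z^\sigma}$. This gives equality in distribution of $R_z$ under the two adversaries, completing the reduction.

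The main obstacle I anticipate is articulating the coupling cleanly, since the adaptive adversary's choices for indices $i < z^\sigma$ may genuinely differ from $\tilde\sigma$'s choices; the point is simply that those choices are irrelevant to $R_z$ because $z$ has already been determined by step $z^\sigma$. Once this is phrased carefully, the observation follows.
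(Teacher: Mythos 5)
Your proposal is correct and takes essentially the same route as the paper: the key step in both is to define the nonadaptive adversary by simulating the adaptive one along the all-empty branch and to observe that the two runs coincide up to and including step $z$, so $R_z$ is unaffected, while randomization is handled by averaging over the adversary's internal coins. You merely swap the order of the two reductions (derandomize first, then remove adaptivity) and make the paper's informal ``the runs behave identically until just after the first nonempty $R_i$'' explicit via a coupling, which is a slightly stronger but equivalent way to phrase the same argument.
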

\begin{proof}
    We begin by showing that the adversary is nonadaptive without loss of generality.  To see this, suppose there is some adaptive adversary $P$.  Then let $P'$ be the nonadaptive adversary obtained by simply running $P$ under the assumption that every $R_i = \emptyset$.  Clearly, this gives a (possibly randomized) sequence $A_k, A_{k-1}, \dots, A_0$ without needing to see the $R_i$'s, and so is nonadaptive. Clearly, $P$ and $P'$ behave identically until $z-1$, i.e., until just \emph{after} the first time that some $R_i$ is nonempty (since $P'$ is just $P$ under the assumption that all $R_i$'s are empty).  But indices $z-1$ down to $0$ make no difference in \cref{lem:main-sampling}!  Hence if \cref{lem:main-sampling} holds for nonadaptive adversaries, it also holds for adaptive adversaries.

    So we assume that the adversary is nonadaptive, i.e., the adversarial choice is simply a distribution over sequences $A_k, A_{k-1}, \dots, A_0$.  This means that the expectation in \cref{lem:main-sampling} is taken over both the adversary's random choices and the randomness from sampling the $R_i$'s once the $A_i$'s are fixed.
    These are intermixed for an adaptive adversary but for a nonadaptive adversary, which we may assume WLOG, we can separate these out by first choosing the random $A_i$'s and then subsampling to get the $R_i$'s.  So we want to prove that
    \[
        \mathbb{E}_{A_k, \dots, A_0} \left[\mathbb{E}_{R_k, \dots R_0} [|R_z|]\right] \leq 1 + 4\epsilon.
    \]
    Suppose we could prove \cref{lem:main-sampling} for a deterministic nonadaptive adversary, i.e., for a fixed $A_k, A_{k-1}, \dots, A_0$.  In other words, suppose that $\mathbb{E}_{R_k, \dots, R_0}{|R_z|} \leq 1 + 4\epsilon$ for all sequences $A_k, A_{k-1}, \dots, A_0$.  Then clearly 
    \[
        \mathbb{E}_{A_k, \dots, A_0} \left[ \mathbb{E}_{R_k, \dots R_0} [|R_z|]\right] \leq \mathbb{E}_{A_k, \dots, A_0} [1+4\epsilon] = 1 + 4\epsilon.
    \]
    Thus if we can prove~\cref{lem:main-sampling} against a nonadaptive deterministic adversary, we have proved~\cref{lem:main-sampling} against an adaptive and randomized adversary, as desired.
\end{proof}

So from now on, we may assume that the family $A_k, A_{k-1}, \dots, A_0$ is \textbf{fixed}.
Note that in this setting, the sets $R_i$ are independent of each other; this holds as the sets $A_i$ are fixed, and the randomness used to obtain $R_i$ is independent of the randomness used to sample other $R_j$ sets.  Before proving~\cref{lem:main-sampling}, we first show several auxiliary observations (all of which are in the setting where $A_k, A_{k-1}, \dots, A_0$ are fixed).

Our first observation is that for the first $b$ rounds of the \ssp, the expected number of sampled elements is small.
\begin{observation}\label{obs:low-initial}
Assume that $k \geq b \lceil \log_{1+\epsilon}(n_k / \epsilon)\rceil$.
Then, for each $j \in (k-b, k]$, $p_j \cdot n_j \leq \epsilon$.
\end{observation}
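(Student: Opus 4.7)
The plan is to reduce the claim to a short inequality about ceilings. First, I would use monotonicity of the \ssp: since $A_j \subseteq A_k$ for $j \leq k$, we have $n_j \leq n_k$, and it suffices to prove $p_j \cdot n_k \leq \epsilon$. Unfolding $p_j = (1+\epsilon)^{-\lceil j/b \rceil}$, this is equivalent to $\lceil j/b \rceil \geq \log_{1+\epsilon}(n_k/\epsilon)$. Since the left-hand side is an integer, that inequality is in turn equivalent to $\lceil j/b \rceil \geq L$, where $L \eqdef \lceil \log_{1+\epsilon}(n_k/\epsilon) \rceil$.

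Next, I would derive $\lceil j/b \rceil \geq L$ from the hypothesis $k \geq bL$ together with $j > k - b$. Combining these two bounds gives
\[
j \;>\; k - b \;\geq\; bL - b \;=\; b(L-1),
\]
hence $j/b > L-1$. Because $\lceil j/b \rceil$ is the least integer at least $j/b$, and $j/b$ is strictly larger than the integer $L-1$, we obtain $\lceil j/b \rceil \geq L$. Chaining back through the reductions yields $p_j \cdot n_j \leq (1+\epsilon)^{-L} \cdot n_k \leq \epsilon$, which is the claim.

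The only subtlety is that the bound $j/b > L-1$ must be strict; if it were only $\geq L-1$, the ceiling could equal $L-1$ and the argument would fail by a factor of $1+\epsilon$. Fortunately, using $j > k-b$ (equivalently $j \geq k-b+1$) rather than $j \geq k-b$ provides exactly the $1/b$ slack needed to produce the strict inequality. Everything else is mechanical unpacking of the definitions of $p_i$ and $b$, so I expect no significant obstacle.
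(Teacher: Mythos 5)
Your proof is correct and follows essentially the same route as the paper: bound the exponent $\lceil j/b\rceil$ from below using the hypothesis on $k$, then use $n_j \leq n_k$ and the definition of $p_j$ to conclude $p_j \cdot n_j \leq (1+\epsilon)^{-\log_{1+\epsilon}(n_k/\epsilon)} n_k = \epsilon$. If anything, your derivation of $\lceil j/b\rceil \geq \lceil \log_{1+\epsilon}(n_k/\epsilon)\rceil$ from $j > k-b$ is slightly more careful than the paper's step $\lceil j/b\rceil = \lceil k/b\rceil$, which implicitly relies on $k$ being a multiple of $b$ (true in all uses of the lemma).
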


\begin{proof}
We have
\[
\lceil j / b \rceil = \lceil k / b \rceil \geq \log_{1+\epsilon}(n_k / \epsilon),
\]
which gives
\[
p_j \cdot n_j = (1+\epsilon)^{-\lceil j/b \rceil} \cdot n_j \leq (1+\epsilon)^{-\log_{1+\epsilon} (n_k / \epsilon)} \cdot n_k = \epsilon.
\]\end{proof}

We can also show that probabilities and the expected number of sampled elements do not increase much in any consecutive $b$ steps.
\begin{observation}\label{obs:slow-increase}
 For each $i \in [0, k)$, and any $j \in [i+1, i+b]$, $p_i \leq (1+\epsilon) p_j$ and $p_i \cdot n_i \leq (1+\epsilon) p_j\cdot n_j$.
\end{observation}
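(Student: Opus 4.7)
The plan is to prove both inequalities by directly unpacking the definitions of $p_i$ and $n_i$, using the monotonicity of the ceiling function and the nesting structure of the sets $A_i$. The observation is essentially a bookkeeping check that the geometric schedule $p_i = (1+\epsilon)^{-\lceil i/b \rceil}$ grows slowly over any window of $b$ consecutive steps. I do not anticipate any substantial obstacle; this is a routine calculation whose whole purpose is to make later telescoping/geometric-sum arguments go through.

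For the first inequality, I would start by analyzing $\lceil j/b \rceil$ for $j \in [i+1, i+b]$. Since $\lceil \cdot \rceil$ is nondecreasing, $\lceil j/b \rceil \ge \lceil i/b \rceil$. On the other hand, $j \le i+b$ implies $j/b \le i/b + 1$, and hence $\lceil j/b \rceil \le \lceil i/b \rceil + 1$. Plugging these bounds into the definition $p_j = (1+\epsilon)^{-\lceil j/b \rceil}$ from~\eqref{eq:p} gives
\[
p_j \;\ge\; (1+\epsilon)^{-\lceil i/b \rceil - 1} \;=\; \frac{p_i}{1+\epsilon},
\]
which rearranges to $p_i \le (1+\epsilon)\, p_j$, as desired.

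For the second inequality, I would combine the first inequality with the monotonicity of the family $\{A_i\}$. Recall that the set sampling process was defined so that $A_0 \subseteq A_1 \subseteq \cdots \subseteq A_k$, and therefore $n_i = |A_i| \le |A_j| = n_j$ whenever $i \le j$. In particular, for $j \in [i+1, i+b]$ we have $n_i \le n_j$. Multiplying the bound $p_i \le (1+\epsilon) p_j$ on both sides by $n_i$ and then using $n_i \le n_j$ on the right yields
\[
p_i \cdot n_i \;\le\; (1+\epsilon)\, p_j \cdot n_i \;\le\; (1+\epsilon)\, p_j \cdot n_j,
\]
which completes the proof of the observation.
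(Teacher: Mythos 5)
Your proof is correct and matches the paper's argument, which simply notes that the first claim follows from the definition of $p_i$ and the second additionally uses that $n_0, \ldots, n_k$ is non-decreasing (equivalently, $A_0 \subseteq \cdots \subseteq A_k$). You have merely spelled out the ceiling-function bookkeeping that the paper leaves implicit.
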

\begin{proof}
This first claim follows from the definition of $p_i$.
The second claim additionally uses the fact that $n_0, \ldots, n_k$ is a non-decreasing sequence.
\end{proof}

This observation now allows us to show that if we have a relatively large expected number of elements in $R_i$, then the probability that we have not yet sampled any elements in $R_j$ for $j > i$ is notably smaller than the probability that we haven't sampled any elements in $R_j$ for $j > i+b$.

\begin{lemma}\label{lem:reaching-probability}
Assume that $p_i \cdot n_i \geq \epsilon (1+\epsilon)$ for some $i \in [0, k-b]$.
Then, $P(z \leq i) \leq P(z \leq i+b) / (2+2\epsilon)$.
\end{lemma}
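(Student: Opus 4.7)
The plan is to reduce the statement to a product of per-step non-sampling probabilities and then bound that product using the observations already established.

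First, I will use \cref{obs:nonadaptive} to assume that the sequence $A_k, A_{k-1}, \dots, A_0$ is fixed. In this setting the sampling choices across rounds are independent, so the events $\{R_j = \emptyset\}$ are mutually independent. The event $\{z \leq i\}$ is exactly $\bigcap_{j > i} \{R_j = \emptyset\}$ and similarly for $\{z \leq i+b\}$. This gives the clean identity
\[
\frac{P(z \leq i)}{P(z \leq i+b)} = \prod_{j=i+1}^{i+b} P(R_j = \emptyset) = \prod_{j=i+1}^{i+b}(1-p_j)^{n_j},
\]
so it suffices to show this product is at most $1/(2+2\epsilon)$.

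Next, I would pass to the exponential bound $(1-p_j)^{n_j} \leq e^{-p_j n_j}$, reducing the task to showing $\sum_{j=i+1}^{i+b} p_j n_j \geq \log(2+2\epsilon)$. Here is where the hypothesis $p_i n_i \geq \epsilon(1+\epsilon)$ comes in: by \cref{obs:slow-increase}, for every $j \in [i+1, i+b]$ we have $p_i n_i \leq (1+\epsilon) p_j n_j$, and hence $p_j n_j \geq \epsilon$. Summing over the $b$ indices gives $\sum_{j=i+1}^{i+b} p_j n_j \geq b \epsilon$.

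Finally, I invoke the definition $b = \lceil \log(2+2\epsilon)/\epsilon \rceil$, which yields $b\epsilon \geq \log(2+2\epsilon)$. Combining this with the exponential bound gives $\prod_{j=i+1}^{i+b}(1-p_j)^{n_j} \leq \exp(-\log(2+2\epsilon)) = 1/(2+2\epsilon)$, and the lemma follows. There is no real obstacle here: the only conceptual move is recognizing that fixing the $A_j$'s (via \cref{obs:nonadaptive}) makes the relevant events independent, after which \cref{obs:slow-increase} and the choice of $b$ combine routinely.
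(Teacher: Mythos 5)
Your proposal is correct and follows essentially the same route as the paper: decompose $\{z \leq i\}$ into the independent events $\{R_j = \emptyset\}$, use \cref{obs:slow-increase} together with the hypothesis to get $p_j n_j \geq \epsilon$ for $j \in [i+1, i+b]$, and then apply $(1-p_j)^{n_j} \leq e^{-p_j n_j}$ and the choice of $b$ to bound the product by $1/(2+2\epsilon)$. The only cosmetic difference is that you phrase the decomposition as a ratio identity while the paper writes it as a product, which changes nothing.
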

\begin{proof}
Denote by $E_j$ the event that $R_j = \emptyset$.
Recall that $z$ is the maximum index such that $R_z \neq \emptyset$.
Observe that the event that $z \leq x$ is equivalent to $\bigcap_{j > x} E_j$ and the individual events $E_j$ are independent.
Hence $P(z \leq i) = P(z \leq i + b)\cdot \prod_{j=i+1}^{i+b} P(E_j) $.
To complete the proof we will show that $\prod_{j=i+1}^{i+b} P(E_j) \leq 1/(2+2\epsilon)$.

By \cref{obs:slow-increase}, we have that for $j \in [i+1, \ldots, b]$, $(1+\epsilon) p_j \cdot n_j \geq p_i \cdot n_i \geq \epsilon (1+\epsilon)$,
which implies $p_j \cdot n_j \geq \epsilon$. Hence,
\[
P(E_j) = (1-p_j)^{n_j} \leq e^{-p_j \cdot n_j} \leq e^{-\epsilon}.
\]
By using the above, we get
\[
\prod_{j=i+1}^{i+b} P(E_j) \leq e^{-b \cdot \epsilon} = e^{-\lceil \log (2(1+\epsilon)) / \epsilon \rceil \epsilon} \leq e^{-\log(2(1+\epsilon))} = \frac1{2+2\epsilon}
\]
which finishes the proof.
\end{proof}

This now allows us to give an absolute bound on the probability that we have not sampled any elements before we sample $R_j$, assuming that we have a pretty high probability of sampling an element in $R_j$.
\begin{lemma}\label{lem:boundpini}
Let $j \in [0, k]$ be such that $p_j \cdot n_j \geq \epsilon (1+\epsilon)^c$ for a nonnegative integer $c$.
Then $P(z \leq j) \leq (2+2\epsilon)^{-c}$.
\end{lemma}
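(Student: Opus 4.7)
I will prove the lemma by induction on $c$, using Lemma~\ref{lem:reaching-probability} once per inductive step to gain a factor of $1/(2+2\epsilon)$. The base case $c=0$ is immediate since probabilities are at most $1$. For the inductive step, the key observation is that the premise $p_j n_j \geq \epsilon(1+\epsilon)^c$ propagates in a controlled way from index $j$ to index $j+b$: by Observation~\ref{obs:slow-increase} we have $p_{j+b} \geq p_j/(1+\epsilon)$, and since $A_0 \subseteq A_1 \subseteq \dots \subseteq A_k$ the sequence $n_i = |A_i|$ is nondecreasing, so $n_{j+b} \geq n_j$. Combining these gives $p_{j+b} n_{j+b} \geq p_j n_j/(1+\epsilon) \geq \epsilon(1+\epsilon)^{c-1}$, which is precisely the hypothesis needed to apply the inductive claim at index $j+b$ with parameter $c-1$.

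Before invoking the induction, I must verify that Lemma~\ref{lem:reaching-probability} is applicable at index $j$, that is, that $p_j n_j \geq \epsilon(1+\epsilon)$ (immediate from $c \geq 1$) and that $j \leq k-b$. For the latter I will use the contrapositive of Observation~\ref{obs:low-initial}: since $p_j n_j \geq \epsilon(1+\epsilon) > \epsilon$, the index $j$ cannot lie in $(k-b,k]$, hence $j \leq k-b$. This in particular also ensures that $j+b \leq k$, so that the probability $P(z \leq j+b)$ to which we apply the inductive hypothesis makes sense as a nontrivial statement.

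Putting it together, the inductive step reads
\[
P(z \leq j) \;\leq\; \frac{P(z \leq j+b)}{2+2\epsilon} \;\leq\; \frac{1}{(2+2\epsilon)^{c-1}} \cdot \frac{1}{2+2\epsilon} \;=\; \frac{1}{(2+2\epsilon)^c},
\]
where the first inequality uses Lemma~\ref{lem:reaching-probability} and the second uses the inductive hypothesis on $(j+b, c-1)$. I do not anticipate any real obstacle here; the proof is essentially a bookkeeping argument, and the only subtlety is justifying that the boundary condition $j \leq k-b$ is free each time via Observation~\ref{obs:low-initial}, and that the premise on $p \cdot n$ is preserved under moving $b$ steps forward, which is exactly what Observation~\ref{obs:slow-increase} (combined with monotonicity of $n_i$) provides.
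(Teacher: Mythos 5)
Your proof is correct and follows essentially the same route as the paper: induction on $c$, with the trivial base case, the use of Observation~\ref{obs:low-initial} (in contrapositive form) to get $j \leq k-b$, Lemma~\ref{lem:reaching-probability} for the factor $1/(2+2\epsilon)$, and Observation~\ref{obs:slow-increase} (whose second claim you re-derive from monotonicity of $n_i$) to verify the hypothesis at $j+b$ with parameter $c-1$.
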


\begin{proof}
We prove the claim using induction on $c$.
For $c = 0$, $(2+2\epsilon)^{-c}$ is $1$, and the claim is trivially true.

Now, fix $c \geq 1$.
By \cref{obs:low-initial}, we have that $j \leq k-b$, and so we apply \cref{lem:reaching-probability} to obtain $P(z \leq j) \leq P(z \leq j+b) / (2+2\epsilon)$.
Hence, to complete the proof, it suffices to show $P(z \leq j+b) \leq (2+2\epsilon)^{-c+1}$.

We achieve that by applying the induction hypothesis to $j' = j+b$.
Indeed, by \cref{obs:slow-increase}, $p_{j'} \cdot n_{j'} \geq \epsilon(1+\epsilon)^{c-1}$, and so the assumptions of the inductive hypothesis hold.
As a result, we obtain $P(z \leq j+b) = P(z \leq j') \leq (2+2\epsilon)^{-c+1}$, as required.
\end{proof}

Before finally proving~\cref{lem:main-sampling}, we first show two more useful claims.

\begin{claim}\label{cl:condition-nonempty}
For each $i \in [0, k]$, it holds that $\E{|R_i| \mid R_i \neq \emptyset} \leq 1 + p_i \cdot n_i$.
\end{claim}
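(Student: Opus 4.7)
The plan is to compute the conditional expectation directly by picking a canonical ``first sampled'' element under an arbitrary fixed ordering of $A_i$. By \cref{obs:nonadaptive}, we may assume $A_i$ is fixed, so $R_i$ is obtained by including each element of $A_i$ independently with probability $p_i$, and we have standard independent Bernoulli trials to work with.

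Fix an arbitrary ordering $a_1, a_2, \ldots, a_{n_i}$ of $A_i$. Let $J$ be the smallest index $j$ such that $a_j \in R_i$, setting $J = \infty$ if $R_i = \emptyset$. The key observation is that the event $\{J = j\}$ fixes the outcomes of the Bernoulli trials for $a_1, \ldots, a_j$ (the first $j-1$ trials fail and the $j$-th succeeds) while placing no constraint on the trials for $a_{j+1}, \ldots, a_{n_i}$, which remain independent Bernoulli with parameter $p_i$. Consequently
\[
\E{|R_i| \mid J = j} \;=\; 1 + p_i (n_i - j) \;\le\; 1 + p_i n_i.
\]

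The event $\{R_i \neq \emptyset\}$ is the disjoint union of the events $\{J = j\}$ for $j \in \{1, \ldots, n_i\}$, hence
\[
\E{|R_i| \mid R_i \neq \emptyset} \;=\; \sum_{j=1}^{n_i} \E{|R_i| \mid J = j}\cdot \prob{J = j \mid R_i \neq \emptyset} \;\le\; (1 + p_i n_i) \sum_{j=1}^{n_i} \prob{J = j \mid R_i \neq \emptyset} \;=\; 1 + p_i n_i,
\]
which gives the desired bound.

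The only subtlety is carefully identifying the conditional distribution of $R_i$ given $\{J = j\}$, but once one recognizes that this event only fixes the first $j$ trials and leaves the later trials independent, the rest is immediate. A shorter alternative route would write $\E{|R_i| \mid R_i \neq \emptyset} = p_i n_i / (1 - (1-p_i)^{n_i})$ and invoke $1 - e^{-x} \ge x/(1+x)$, but the canonical-first-element argument above is more self-contained and fits naturally with the rest of the \ssp analysis.
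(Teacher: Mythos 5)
Your proof is correct, but it takes a genuinely different route from the paper. The paper's argument is a direct two-line computation: it writes $\E{|R_i| \mid R_i \neq \emptyset} = \E{|R_i|}/\prob{R_i \neq \emptyset} = \frac{p_i n_i}{1-(1-p_i)^{n_i}}$ and then applies the analytic inequalities $1-p_i \leq e^{-p_i}$ and $e^{x} \geq 1+x$ to bound the denominator from below by $\frac{p_i n_i}{1+p_i n_i}$ — exactly the ``shorter alternative route'' you mention at the end. Your argument instead decomposes the event $\{R_i \neq \emptyset\}$ according to the index $J$ of the first sampled element under a fixed ordering, uses independence of the Bernoulli indicators to get $\E{|R_i| \mid J = j} = 1 + p_i(n_i - j)$, and averages. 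Both are valid (conditioning on a fixed $A_i$ via \cref{obs:nonadaptive} so that $R_i$ is a product of independent Bernoulli trials is exactly the right setup); your route is more combinatorial and self-contained, avoids the exponential estimates entirely, and in fact yields the marginally sharper bound $1 + p_i(n_i-1)$, while the paper's ratio computation is shorter and reuses the same $(1-p)^{n} \leq e^{-pn}$ manipulations that appear elsewhere in the \ssp analysis (e.g., in \cref{lem:reaching-probability}).
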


\begin{proof}
We have
$$\E{|R_i| \mid R_i \neq \emptyset} = \frac{\E{|R_i|}}{P(R_i \neq \emptyset)} = \frac{p_i\cdot n_i}{1 - (1-p_i)^{n_i}} \leq \frac{p_i\cdot n_i}{1 - \frac{1}{e^{p_i \cdot n_i}}} \leq \frac{p_i\cdot n_i}{1 - \frac{1}{1+p_i\cdot n_i}} = \frac{p_i \cdot n_i}{\frac{p_i \cdot n_i}{1+p_i \cdot n_i}} = 1 + p_i \cdot n_i.
$$
Note that in the first inequality we used the fact that $1-p_i \leq e^{-p_i}$, while in the second we used $e^{p_i\cdot n_i} \geq 1 + p_i \cdot n_i$.
\end{proof}

\begin{claim}\label{cl:pini}
Recall that $z$ is the maximum index such that $R_z \neq \emptyset$.
It holds that $\E{|R_z|} \leq 1 + \E{p_z \cdot n_z}$.
\end{claim}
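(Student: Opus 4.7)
The plan is to directly decompose $\E{|R_z|}$ by conditioning on the value of $z$ and then apply Claim~\ref{cl:condition-nonempty} to each term. First, I would invoke Observation~\ref{obs:nonadaptive} so that the sequence $A_k, A_{k-1}, \dots, A_0$ may be taken as fixed; with the $A_i$'s fixed, the sampled sets $R_k, R_{k-1}, \dots, R_0$ become mutually independent random variables, which is the crucial structural property I want to exploit.

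The next step is the conditioning. For each $i \in [0,k]$, the event $\{z = i\}$ is exactly $\{R_i \neq \emptyset\} \cap \bigcap_{j > i} \{R_j = \emptyset\}$. By independence of the $R_j$'s, conditioning on the events involving $R_j$ for $j > i$ has no effect on the conditional distribution of $R_i$, so
\[
\E{|R_i| \mid z = i} = \E{|R_i| \mid R_i \neq \emptyset} \leq 1 + p_i \cdot n_i,
\]
where the inequality is Claim~\ref{cl:condition-nonempty}. I would also separately handle the degenerate event $\{z = -1\}$ on which $R_z = R_{-1} = \emptyset$, contributing $0$ to $\E{|R_z|}$.

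Summing over the possible values of $z$ then yields
\[
\E{|R_z|} = \sum_{i=0}^{k} \E{|R_i| \mid z=i} \cdot P(z=i) \leq \sum_{i=0}^{k} (1 + p_i n_i)\, P(z=i) \leq 1 + \E{p_z \cdot n_z},
\]
where the final step uses $\sum_{i \geq 0} P(z = i) \leq 1$ and recognizes the remaining sum as $\E{p_z \cdot n_z \cdot \mathbb{1}[z \geq 0]} \leq \E{p_z \cdot n_z}$ (adopting the natural convention that the $z = -1$ term contributes $0$).

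There is no substantial obstacle here; the only subtlety is justifying that the conditioning on $\{z = i\}$ reduces to conditioning on $\{R_i \neq \emptyset\}$ for the purpose of computing $\E{|R_i| \mid z = i}$, and this is an immediate consequence of the independence granted by Observation~\ref{obs:nonadaptive}. Once that is noted, the claim reduces to a one-line application of Claim~\ref{cl:condition-nonempty} followed by summing.
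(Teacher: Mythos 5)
Your proposal is correct and follows essentially the same route as the paper's proof: both decompose $\E{|R_z|}$ by conditioning on $\{z=i\}$, use the independence of the $R_i$'s (granted by \cref{obs:nonadaptive}) to replace conditioning on $\{z=i\}$ with conditioning on $\{R_i \neq \emptyset\}$, and then apply \cref{cl:condition-nonempty} and sum. Your explicit handling of the degenerate event $\{z=-1\}$ is a minor extra care not spelled out in the paper, but the argument is otherwise identical.
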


\begin{proof}
Denote by $X_i$ the event that $R_j = \emptyset$ \emph{for all} $j \in [i+1, k]$.
Note that $z = i$ is the intersection of events $R_i \neq \emptyset$ and $X_i$.
\begin{align*}
\E{|R_z|} & = \sum_{i=0}^{k} \E{|R_i| \mid z = i} \cdot P(z = i) \\
& = \sum_{i=0}^{k} \E{|R_i| \mid R_i \neq \emptyset \cap X_i} \cdot P(z = i) = \sum_{i=0}^{k} \E{|R_i| \mid R_i \neq \emptyset} \cdot P(z = i)\\
& \leq \sum_{i=0}^{k} (1 + p_i \cdot n_i)\cdot P(z = i) = \sum_{i=0}^{k} P(z = i) + \sum_{i=0}^{k} p_i \cdot n_i \cdot P(z = i) = 1 + \E{p_z \cdot n_z},
\end{align*}
where the final inequality is from~\cref{cl:condition-nonempty}.  
We used the fact that $\E{|R_i| \mid R_i \neq \emptyset \cap X_i} = \E{|R_i| \mid R_i \neq \emptyset}$, which follows from the fact that $R_i$ is independent from $X_i$.
\end{proof}

We are now ready to prove~\cref{lem:main-sampling}.
\begin{proof}[Proof of \cref{lem:main-sampling}]
We know from \cref{obs:nonadaptive} that without loss of generality, the sequence $A_k, A_{k-1}, \dots, A_0$ is fixed.
By \cref{cl:pini} it suffices to show that $\E{p_z \cdot n_z} \leq 4 \epsilon$.
Let us define $X := p_z \cdot n_z$ to shorten notation.
\begin{align*}
\E{X} & \leq  P(X < \epsilon) \cdot \epsilon + \sum_{c=0}^\infty P\left(X \in \left[\epsilon(1+\epsilon)^c, \epsilon(1+\epsilon)^{c+1}\right)\right)\cdot \epsilon(1+\epsilon)^{c+1}\\
 & \leq  \epsilon + \sum_{c=0}^\infty P\left(X \geq \epsilon(1+\epsilon)^c\right)\cdot \epsilon(1+\epsilon)^{c+1}\\
 & \leq \epsilon + \sum_{c=0}^\infty 2^{-c}(1+\epsilon)^{-c} \cdot \epsilon (1+\epsilon)^{c+1}\\
 & \leq \epsilon (1 + 2(1+\epsilon)) \leq 4\epsilon.
\end{align*}
Note that we used the bound on $P(X \geq \epsilon(1+\epsilon)^c) \leq (2+2\epsilon)^{-c}$ which follows directly from \cref{lem:boundpini}.
\end{proof}

\subsection{Probability of the Sampled Element Being Not Unique}
We use the following lemma to analyze our \hm algorithm.
Specifically, it upper bounds the probability that an edge is sampled in \cref{alg:approx-matching}, but \emph{not} included in the final matching. In the proof of \cref{lemma:approx-matching}, we specify how to map our \hm algorithm to the setup in this section.

\singlesample*

\begin{proof}
As for the previous proof in this section, first assume that the sets $A_0, \ldots, A_k$ are fixed.

Our goal is to upper bound
\begin{align}
P(|R_z| > 1 \mid a \in R_z) & = \frac{P(|R_z| > 1 \cap a \in R_z)}{P(a \in R_z)}\label{l1}
\end{align}

Let $m_a = \min \{i \in [0, k] \mid a \in A_i\}$ be the index of the last step before $a$ is removed from the sets $A_0, \ldots, A_k$.
We obtain:

\begin{align*}
P(|R_z| > 1 \cap a \in R_z) & = \sum_{i=m_a}^k P(z = i \cap |R_i| > 1 \cap a \in R_i) & \\
& =  \sum_{i=m_a}^k P(z \leq i \cap  |R_i| > 1 \cap a \in R_i) & \textrm{since } a \in R_i \textrm{ implies  } z \geq i\\
& = \sum_{i=m_a}^k P(z \leq i) P(|R_i| > 1 \cap a \in R_i) & \substack{z \leq i \textrm{ is equivalent to } R_j = \emptyset \textrm{ for all } j > i\\\textrm{these events are independent of } R_i}
\end{align*}

Observe that the event $|R_i| > 1 \cap a \in R_i$ happens when $a$ is sampled and at least one out of the remaining $n_i - 1$ elements of $A_i$ are sampled.
Hence,
\[
P(|R_i| > 1 \cap a \in R_i) = p_i \cdot (1 - (1-p_i)^{n_i-1}) \leq p_i \cdot (1 - (1 - p_i \cdot (n_i-1))) \le p_i^2 \cdot n_i,
\]
and so we finally obtain $P(|R_z| > 1 \cap a \in R_z) \leq \sum_{i=m_a}^k P(z \leq i)p_i^2 \cdot n_i$. For the first inequality above, we used Bernoulli's inequality which states that $(1+x)^r\ge 1 + rx$ for every integer $r \ge 1$ and a real number $x \ge -1$.
Analogous reasoning allows us to show a similar identity for the denominator:
\[
P(a \in R_z) = \sum_{i=m_a}^k P(z \leq i) P(a \in R_i) =  \sum_{i=m_a}^k P(z \leq i) p_i
\]

Hence we can upper bound \cref{l1} as follows
\begin{align}\label{l2}
P(|R_z| > 1 \mid a \in R_z) \leq \frac{\sum_{i = m_a}^k P(z \leq i)P(a \in A_i)p_i^2 \cdot n_i}{\sum_{i = m_a}^k P(z \leq i)P(a \in A_i)p_i}.
\end{align}

\begin{claim}
Let $I = \{i \in [m_a, k] \mid p_i \cdot n_i < \epsilon (1+\epsilon)\}$ be a set of indices.
Then
$\sum_{i = m_a}^k P(z \leq i)p_i^2 \cdot n_i \leq 2 / (1-\epsilon) \sum_{i \in I} P(z \leq i)p_i^2 \cdot n_i$
\end{claim}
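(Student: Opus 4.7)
The plan is to define $f(i) \eqdef P(z \leq i) p_i^2 n_i$ and first show the key one-step decay: for every $i \in [m_a, k] \setminus I$, we have $f(i) \leq \tfrac{1+\epsilon}{2}\, f(i+b)$. To see why this should hold, note that $i \notin I$ means $p_i n_i \geq \epsilon(1+\epsilon)$, and that $i \leq k-b$ by \cref{obs:low-initial}, so both hypotheses of \cref{lem:reaching-probability} are met and $P(z\leq i) \leq P(z \leq i+b)/(2+2\epsilon)$. For the other factor, $p_i = (1+\epsilon) p_{i+b}$ directly from the definition of $p_i$ in \cref{eq:p}, and $n_i \leq n_{i+b}$ since the adversary's sets are nested, giving $p_i^2 n_i \leq (1+\epsilon)^2 p_{i+b}^2 n_{i+b}$. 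Combining these yields exactly $f(i) \leq \tfrac{(1+\epsilon)^2}{2(1+\epsilon)} f(i+b) = \tfrac{1+\epsilon}{2} f(i+b)$.

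Next, I would iterate this decay. For each $i \in [m_a,k]\setminus I$, let $t(i)$ be the smallest positive integer with $i + t(i)b \in I$; this exists and satisfies $i+t(i)b \leq k$ because all indices in $(k-b,k]$ lie in $I$ by \cref{obs:low-initial}. Applying the one-step inequality repeatedly along $i, i+b, i+2b, \dots, i+t(i)b$ gives $f(i) \leq \bigl(\tfrac{1+\epsilon}{2}\bigr)^{t(i)} f\bigl(i+t(i)b\bigr)$, where the target index lies in $I$.

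Finally, I would sum over $i \in [m_a,k]\setminus I$ by charging to the target index: the map $\phi(i) = i+t(i)b$ sends each bad $i$ to some $j \in I$, and for a fixed $j \in I$ the preimage $\phi^{-1}(j)$ consists of at most one index $j - tb$ for each $t \geq 1$ (namely those $t$ for which $j-b,\dots,j-tb$ are all outside $I$). Therefore
\[
\sum_{i \in [m_a,k]\setminus I} f(i) \;\leq\; \sum_{j \in I} f(j) \sum_{t=1}^{\infty}\Bigl(\tfrac{1+\epsilon}{2}\Bigr)^{t} \;=\; \frac{1+\epsilon}{1-\epsilon}\sum_{j\in I} f(j),
\]
where we used $\epsilon < 1$ to sum the geometric series. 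Adding $\sum_{i \in I} f(i)$ to both sides gives
\[
\sum_{i=m_a}^{k} f(i) \;\leq\; \Bigl(1 + \tfrac{1+\epsilon}{1-\epsilon}\Bigr)\sum_{i\in I} f(i) \;=\; \frac{2}{1-\epsilon}\sum_{i\in I} f(i),
\]
which is exactly the claim. The main substantive step is the one-step decay, and the only subtlety is bookkeeping the charging argument; no step looks like a serious obstacle since the two ingredients (\cref{lem:reaching-probability} and \cref{obs:slow-increase}) are already in place.
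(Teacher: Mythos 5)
Your proposal is correct and follows essentially the same route as the paper's proof: the one-step decay $P(z \leq i)p_i^2 n_i \leq \tfrac{1+\epsilon}{2} P(z \leq i+b)p_{i+b}^2 n_{i+b}$ via \cref{lem:reaching-probability} and \cref{obs:slow-increase}, followed by charging each index outside $I$ to the first index of the form $i+tb$ lying in $I$ and summing the geometric series to get the factor $2/(1-\epsilon)$. The only cosmetic difference is that the paper defines the charging map on all of $[m_a,k]$ (with indices of $I$ charged to themselves) and bounds the total charge per $j \in I$ by $\sum_{t\ge 0}(\tfrac{1+\epsilon}{2})^t$, whereas you handle the $I$-terms separately and add them at the end — the arithmetic is identical.
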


\begin{proof}
Consider the sum
\[
\sum_{i = m_a}^k P(z \leq i)p_i^2 \cdot n_i.
\]
We are going to charge the summands indexed by $[m_a,k] \setminus I$ to the summands indexed by $I$.
Formally, the charging is defined by a function $f : [m_a, k] \rightarrow [m_a,k]$.
We define $f(i)$ to be the smallest index $j \in \{i, i+b, i+2b, \ldots \}$ such that $j \in I$, that is $p_j \cdot n_j < \epsilon (1+\epsilon)$.
We note that $f(i)$ is well-defined since by \cref{obs:low-initial} we have that $(k-b, k] \subseteq I$.

Now we charge each summand $i$ to $f(i) \in I$ and show that the total charge of each summand in $I$ increases by at most a constant factor.
Let us now fix any $j \in I$ and consider the sum $\sum_{i \in f^{-1}(j)} P(z \leq i)p_i^2 \cdot n_i$.
Let $h := |f^{-1}(j)|$.
Then $f^{-1}(j) = \{j, j-b, \ldots, j-(h-1)b\}$.
We now show that the summands in the considered sum are geometrically decreasing (if we consider the indices in decreasing order).
Indeed, consider $x \in f^{-1}(j) \setminus \{j\}$.
We are now going to use the following facts.
\begin{itemize}
\item By \cref{lem:reaching-probability} we have $P(z \leq x) \leq P(z \leq x+b) / (2+2\epsilon)$.
\item By \cref{obs:slow-increase}, $p_x \cdot n_x \leq (1+\epsilon) p_{x+b} \cdot n_{x+b}$.
\item By \cref{obs:slow-increase}, $p_x \leq (1+\epsilon) p_{x+b}$.
\end{itemize}
These three facts together imply that for any $x \in f^{-1}(j) \setminus \{j\}$
\[
P(z \leq x)p_x^2 \cdot n_x \leq (1+\epsilon)/2\cdot P(z \leq x+b)p_{x+b}^2 \cdot n_{x+b}.
\]
Hence, the summands in $f^{-1}(j)$ can be arranged into a sequence in which the largest element is the summand corresponding to $j$, and each subsequent summand is at least a factor of $(1+\epsilon)/2$ smaller.
As a result, the total charge of the summand $j$ is $1 / (1-(1+\epsilon) / 2) = 2 / (1-\epsilon)$.
\end{proof}

Using the above claim, we upper bound \cref{l2}.

\begin{align*}
\frac{\sum_{i = m_a}^k P(z \leq i)p_i^2 \cdot n_i}{\sum_{i = m_a}^k P(z \leq i)p_i} & \leq  \frac{2 \cdot \sum_{i \in I} P(z \leq i)p_i^2 \cdot n_i}{(1-\epsilon) \cdot \sum_{i \in I} P(z \leq i)p_i}
 < \frac{2 \cdot \epsilon (1+\epsilon) \cdot \sum_{i \in I} P(z \leq i)p_i}{(1-\epsilon) \cdot \sum_{i \in I} P(z \leq i)p_i}
 \leq 6\epsilon.
\end{align*}

The proofs are stated while assuming that the sets $A_0, \ldots, A_k$ are fixed. As given by \cref{obs:nonadaptive}, this assumption can be made without loss of generality.
\end{proof}

\section{MPC Algorithms}
\label{sec:MPC-algorithms}
In this section, we show how to implement our algorithms in the MPC model.
Some of the approximation and total space results that we provide in this section are stated in expectation. Nevertheless, by applying Markov's inequality and the Chernoff bound, these can be turned into high probability guarantees by executing $O(\log n / \eps)$ instances of the algorithm in parallel and choosing the best output.
We present more details of this transformation in \cref{sec:expectation-to-whp} for completeness.

\begin{lemma}
\label{lemma:hypergraph-sparsification}
    Let $G = (V, E)$ be a rank $h$ hypergraph and $p \in [0, 1]$ a parameter. Let $H = (V, E_H)$ be a rank $h$ hypergraph, such that $E_H$ is obtained by taking each edge of $E$ independently with probability $p$. Let $n_H$ be the number of non-isolated vertices in $H$, and $\bh \le h$ the average edge-rank of $G$. Then,
    \[
        \E{n_H} \le p \cdot \bh \cdot |E|.
    \]
\end{lemma}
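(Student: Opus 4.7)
The plan is to apply linearity of expectation over vertices, using a simple union bound on the event that a given vertex is non-isolated in $H$.

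First, I would write $n_H = \sum_{v \in V} X_v$, where $X_v$ is the indicator variable for the event that $v$ is non-isolated in $H$, i.e., at least one edge of $E$ incident to $v$ is included in $E_H$. By linearity of expectation,
\begin{equation*}
\E{n_H} = \sum_{v \in V} \Pr[X_v = 1].
\end{equation*}
Next, since each edge $e$ of $E$ is included in $E_H$ independently with probability $p$, a union bound over the edges incident to $v$ gives
\begin{equation*}
\Pr[X_v = 1] \le \sum_{e \in E \colon v \in e} p = p \cdot \deg_G(v),
\end{equation*}
where $\deg_G(v)$ denotes the number of hyperedges of $G$ containing $v$.

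Finally, I would swap the order of summation using the standard degree-sum identity for hypergraphs: $\sum_{v \in V} \deg_G(v) = \sum_{e \in E} |e|$, because each hyperedge $e$ contributes $|e|$ to the left-hand side. By the definition of the average edge-rank $\bar{h}$, we have $\sum_{e \in E} |e| = \bar{h} \cdot |E|$. Combining these gives
\begin{equation*}
\E{n_H} \le \sum_{v \in V} p \cdot \deg_G(v) = p \cdot \sum_{e \in E} |e| = p \cdot \bar{h} \cdot |E|,
\end{equation*}
as desired. There is no real obstacle here — the argument is essentially union bound plus the hypergraph handshake identity — so I would mainly just be careful to state the union bound rather than equality (since a vertex can lie in multiple sampled edges) and to invoke the definition of $\bar{h}$ explicitly.
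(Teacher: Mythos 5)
Your proposal is correct and follows essentially the same route as the paper: linearity of expectation over the indicators $X_v$, a bound $\Pr[X_v=1]\le p\cdot\deg(v)$ (you get it by a union bound, the paper via $1-(1-p)^{\deg(v)}\le p\cdot\deg(v)$ using Bernoulli's inequality — the same estimate), and the degree-sum identity $\sum_v \deg(v)=\bh\cdot|E|$. No gaps.
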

\begin{proof}
    Consider a vertex $v$ whose degree is $\deg(v)$. Let $X_v$ be a $0/1$ random variable being equal to $1$ if $v$ is non-isolated.
    Then, by Bernoulli's inequality,
    \begin{equation}
        \label{eq:expected-non-isolated}
        \prob{X_v = 1} \le 1 - (1 - p)^{\deg(v)} \stackrel{\text{Bernoulli's ineq.}}{\le} 1 - (1 - p \cdot \deg(v)) = p \cdot \deg(v).
    \end{equation}
    
    Observe that we also have $\sum_{v \in V} \deg(v) =  \bh \cdot |E|$. 
    From \cref{eq:expected-non-isolated} we further derive
    \[
        \E{n_H} = \sum_{v \in V} \E{X_v} = \sum_{v \in V} \prob{X_v = 1} \le \sum_{v \in V} p  \cdot \deg(v) = p \cdot \bh \cdot |E|. \qedhere
    \]
\end{proof}

\subsection{\cref{alg:f-apx} in the sub-linear space regime}
\label{sec:f-set-cover-sub-linear}
The analysis in this and \cref{sec:matching-sub-linear} (for~\cref{alg:approx-matching}) are almost identical -- they are presented in different languages, and different claims are used to conclude their approximation guarantees.
\begin{theorem}\label{thm:f-approx-MPC}
    Let $\eps \in (0, 1/2)$ be an absolute constant. There is an MPC algorithm that, in expectation, computes an $(1+\eps) f$ approximate set cover in the sub-linear space regime. 
    This algorithm runs in
    \[
        \tO\rb{\eps^{-1} \cdot \sqrt{\log \Delta} + \eps^{-2} \cdot  \log f + \eps^{-4} \cdot \log^2 \log n}
    \]
    MPC rounds, and each round, in expectation, uses total space of $O\rb{m}$.
\end{theorem}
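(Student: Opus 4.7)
The plan is to implement Algorithm~\ref{alg:f-apx-fixed} (the equivalent fixed-randomness version of Algorithm~\ref{alg:f-apx}) in MPC. The approximation guarantee is immediate from Lemma~\ref{lem:f-apx}, so the task reduces to efficient simulation. My main mechanism is to group the $k = O(\log \Delta / \epsilon^2)$ sequential iterations of the outer loop into \emph{phases} of $r$ consecutive iterations and to simulate each phase in $O(\log r)$ MPC rounds via graph exponentiation. Concretely, within a phase starting at index $i_0$ I would first expose only the edges of the bipartite graph whose elements fall in the buckets $B_{i_0}, B_{i_0 - 1}, \dots, B_{i_0 - r + 1}$, obtaining a sparse ``phase graph'' $G_r$; then each machine gathers the $r$-hop ball of its active vertices in $G_r$ using the standard doubling trick (each vertex first learns its $1$-hop neighborhood, then $2$-hop, then $4$-hop, etc.), and finally simulates the $r$ inner iterations locally on its ball.

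The heart of the analysis is a degree-growth invariant. I would first show that at the start of a phase whose initial sampling probability is $p$, every surviving set $s$ has $|N(s)| = \tilde{O}(1/p)$ with high probability: if a set had degree much larger than $c \log n / p$, then in the previous phase at least one of its elements would have been sampled w.h.p.\ (since probabilities inside a phase are at least $p$), which would have deleted $s$. Within a phase the probability grows by at most a factor $(1+\epsilon)^{r/b} = 2^{O(\epsilon r)}$, so the expected number of edges of $G_r$ incident to any vertex is $\tilde{O}(1/p) \cdot p \cdot 2^{O(\epsilon r)} = 2^{\tilde{O}(\epsilon r)}$; a Chernoff bound (Lemma~\ref{lemma:chernoff}) upgrades this to a whp bound. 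Choosing $r = \tilde{\Theta}(\epsilon^{-1}\sqrt{\log S})$ makes each $r$-hop ball fit in $S$ bits per machine, yielding one phase in $O(\log r) = O(\log \log S)$ rounds.

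For the total space budget of $\tilde{O}(m)$, I would invoke Lemma~\ref{lemma:hypergraph-sparsification}: the number of vertices \emph{touched} by any edge sampled in a phase with initial probability $p$ is $\tilde{O}(pm)$ in expectation, so the per-active-vertex budget is $S = \tilde{\Omega}(1/p)$. Since $p$ grows from $\epsilon/\Delta$ to $1$, the available space per active vertex grows with it; plugging $S = 1/p$ into the phase length gives $r = \tilde{\Theta}(\epsilon^{-1}\sqrt{\log(1/p)})$, and each phase multiplies $p$ by $2^{\Omega(\sqrt{\log(1/p)})}$. Telescoping this recursion over the $\log(1/p_0) = O(\log \Delta)$ decades of probability gives $\tilde{O}(\epsilon^{-1}\sqrt{\log \Delta})$ phases total, each taking $O(\log \log n)$ rounds. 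The $\epsilon^{-2}\log f$ summand appears in the tail of the process: once $p$ becomes so large that $p \cdot f \gg \log n$ the set-side degree argument becomes vacuous (elements can contribute up to $f$ to set degrees independently of $p$), and we must simulate the remaining $O(\log f / \epsilon^2)$ iterations using the element-side degree bound $f$, which gives $r = \tilde{\Theta}(\sqrt{\log S}/\epsilon)$ with a different starting point.

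The main obstacle I anticipate is the joint book-keeping: I need the degree invariant to hold \emph{at every phase boundary} (requiring a union bound over $\tilde{O}(\sqrt{\log \Delta})$ phases and hence high enough constants in the Chernoff tails), and simultaneously the expected total space summed over phases must stay $\tilde{O}(m)$ rather than just $\tilde{O}(m)$ per phase. Handling this cleanly is what pushes the round complexity from the naive $O(\sqrt{\log \Delta}\cdot \log\log n)$ up to the $\epsilon^{-4}\log^2 \log n$ lower-order term in the statement, and it is the part of the argument where I would spend the most care.
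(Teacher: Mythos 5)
Your high-level plan coincides with the paper's proof of \cref{thm:f-approx-MPC}: simulate \cref{alg:f-apx-fixed} in phases of $r$ consecutive iterations via graph exponentiation, maintain the invariant that surviving sets have degree $\tO(1/p)$ at each phase boundary, and use \cref{lemma:hypergraph-sparsification} to argue that only $O(r\,p\,m)$ vertices are active in a phase, so each gets a space budget of roughly $1/(rp)$; the leading term $\tO(\eps^{-1}\sqrt{\log\Delta})$ is obtained the same way there. However, there is a genuine gap in how you bound the $r$-hop balls and, consequently, in how the $\eps^{-2}\log f$ term arises. Your bound $2^{\tO(\eps r)}$ controls only the set-side degree of the phase graph (relevant elements per set); the ball also expands through elements, each of which lies in up to $f$ sets, so its size is of order $\rb{f\cdot r(1+\eps)^{r/b}\,\mathrm{polylog}\,n}^{r}$. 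The admissible phase length is therefore $r = O\rb{\min\cb{\eps^{-1}\sqrt{\log\tau},\ \log\tau/\log f,\ \log\tau/\log\log n}}$, where $\tau$ is the current degree bound, and with your choice $r=\tilde{\Theta}(\eps^{-1}\sqrt{\log S})$ the balls overflow machine memory whenever $f$ is large. Your proposed repair is also not correct as stated: the relevant threshold is not ``$pf\gg\log n$'' but $\log f \gtrsim \eps\sqrt{\log\tau}$, and at that point the number of remaining \emph{iterations} is $\Theta(\eps^{-2}\log\tau)$ (possibly as large as $\eps^{-4}\log^2 f$), not $O(\eps^{-2}\log f)$; what is $\tO(\eps^{-2}\log f)$ is the number of remaining \emph{phases} of length $\Theta(\log\tau/\log f)$. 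This is exactly the separate Case~3 in the paper's proof, and it needs its own accounting rather than a brief tail remark.

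A second, more minor, misattribution: the $\eps^{-4}\log^2\log n$ term is not caused by union-bound or space bookkeeping across phases (that bookkeeping is done exactly as you suggest, at no asymptotic cost). It comes from the endgame: once $\tau \le (\log n)^{O(\eps^{-2}\log\log n)}$, the per-hop $\mathrm{polylog}\,n$ factors cap $r$ at $O(\log\tau/\log\log n)$, so long phases buy nothing, and the paper switches to $r=1$, paying one round for each of the remaining $O(\eps^{-4}\log^2\log n)$ iterations (its Case~1). Without this case and Case~3, your telescoping argument does not yield the stated round bound.
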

\begin{proof}
    The MPC algorithm simulates \cref{alg:f-apx-fixed}, which executes $k = b \ceil{\log_{1 + \eps}\rb{\Delta / \eps}} = O\rb{\log (\Delta / \eps) / \eps^2}$ iterations, where $b$ is the constant defined in \cref{eq:define-p_i}.
        We split those $k$ iterations into multiple \emph{phases}, where Phase~$j$ consists of $r_j$ consecutive iterations of \cref{alg:f-apx-fixed}.
        We set the value of $r_j \le \log n$ in the rest of this proof.

    \paragraph*{Executing a single phase.}
    We leverage the idea of collecting the \emph{relevant} $r_j$-hop neighborhood around each element, i.e., a vertex $v \in T$ of the graph, which suffices to simulate the outcome of \cref{alg:f-apx-fixed} for $v$ over that phase.
    This relevant $r_j$-hop neighborhood is collected via graph exponentiation in $O(\log r_j)$ MPC rounds.
    Let the current phase correspond to iterations $J = \{i, i - 1, \ldots, i - r_j + 1\}$.
    Then, only elements in buckets $B_q$ for $q \in J$ are (potentially) relevant; those that appeared in the previous buckets or appear in the subsequent ones do not affect the state of $v$ in this phase.
    Note that when an element $e \in B_q$ is relevant, it does not necessarily mean that by iteration $q$ the element $e$ would still be unmarked, e.g., $e$ might have been marked in the iteration $q - 1 \in J$.

    Let $E_j$ be the set of unmarked elements at the beginning of phase $j$.
    We have that with probability at most $\min\cb{1, \sum_{q \in J} p_q} \le \min\cb{1, r_j \cdot p_{i - r_j + 1}}$ an element in $E_j$ is relevant for this phase.
    If the maximum set size concerning $E_j$ is $O\rb{\log n / p_{i}}$, then with a direct application of the Chernoff bound, we get that the maximum set size in $\bigcup_{q \in J} B_q$ is $O\rb{r_j \frac{p_{i - r_j + 1}}{p_{i}} \log n}$ with high probability. 
    Following our definition of $p_q$, see \cref{eq:define-p_i}, we have $O\rb{r_j \frac{p_{i - r_j + 1}}{p_{i}} \log n} = O\rb{r_j (1+\eps)^{r_j/b} \log n}$.
    Hence, with high probability, the number of relevant elements in a radius $r_j$ around a given set or an element is $O\rb{\rb{f r_j (1+\eps)^{r_j/b} \log n}^{r_j}} = O\rb{\rb{f (1+\eps)^{r_j/b} \log^2 n}^{r_j}}$, where we are using that $r_j \le \log n$.

    Recall that we assumed that the set size at the beginning of a phase is $O((1 / p_{i}) \cdot \log n)$. We now justify this assumption.
    Regardless of the assumption, a direct application of the Chernoff bound implies that whp each set $v$ that has $\Omega(1 / p_{i - r_j + 1} \cdot \log n)$ unmarked elements in $E'$ in iteration ${i - r_j + 1}$ will have at least one element in $B_{i - r_j + 1}$. 
    Therefore, $v$ will be removed at the end of this phase. This implies that the maximum number of unmarked elements in a set in the next phase is $O(1 / p_{i - r_j + 1} \cdot \log n)$, as we desire. Since the assumption holds in the very first phase by our choice of $p$, then the assumption on the maximum set size holds in each phase.

    \paragraph*{Linear total space.}
        Let $H$ be the subgraph of $G$ consisting of all the relevant elements of $E_j$ and the non-removed sets at the beginning of phase $j$ corresponding to iterations in $J$. 
        Now, we want to upper-bound the number of the non-zero degree sets and the elements in $H$. We then tie that upper bound to the size of $G$, providing the space budget per set/element.

        We can think of $H$ as a hypergraph in which an element is a hyper-edge connecting the sets it belongs to. Therefore, by \cref{lemma:hypergraph-sparsification}, the number of non-zero degree sets in $H$ is $O(r_j p_{i - r_j + 1} \bar{f} \cdot |E_j|)$, where $\bar{f}$ is the average element degree.
        The expected number of the relevant elements of $E_j$ is at most $r_j p_{i - r_j + 1} \cdot |E_j|$.

        Since the input size is $\Omega(\bar{f} \cdot |E_j|)$, our algorithm can assign $1/(r_j p_{i - r_j + 1})$ space to each non-empty set and relevant element in phase $j$ while assuring that in expectation the total used space is linear in the input size. Based on this, we compute the value of $r_j$.

        From our discussion on the relevant $r_j$-hop neighborhood and the fact that each set and relevant element has the space budget of $1/(r_j p_{i - r_j + 1})$, our goal is to choose $r_j$ such that
        \begin{equation}\label{eq:condition-neighborhood}
            O\rb{\rb{f (1+\eps)^{r_j/b} \log^2 n}^{r_j}} \le \frac{1}{r_j p_{i - r_j + 1}},
        \end{equation}
        where $b$ is the constant defined in \cref{eq:define-p_i}.
        
        Our final round complexity will be tied to the drop of the maximum set degree. 
        So, it is convenient to rewrite the condition \cref{eq:condition-neighborhood}.
        Let $\tau_j$ be the upper bound on the maximum set degree at the beginning of phase~$j$. 
        Recall that we showed that with high probability $\tau_j = O(1 / p_{i} \cdot \log n)$.
        This enables us to rewrite \cref{eq:condition-neighborhood} as
        \begin{equation}\label{eq:condition-neighborhood-stronger}
            O\rb{\rb{f (1+\eps)^{r_j/b} \log^2 n}^{r_j}} \le \frac{\tau_j}{r_j (1+\eps)^{r_j/b} \log n},
        \end{equation}
        where we used that $p_{i - r_j + 1} = O((1 + \eps)^{r_j / b} \cdot p_{i})$. 
        To upper-bound $r_j$, we consider several cases.

        \begin{itemize}
            \item \textbf{Case 1: $\tau_j \le (\log n)^{O(\eps^{-2} \cdot \log \log n)}$.} In this case, we let $r_j = 1$. 
            \item \textbf{Case 2: $f \le (1+\eps)^{r_j/b} \log^2 n$ and Case~1 does not hold.}
            The condition in \cref{eq:condition-neighborhood-stronger} implies
            \begin{equation}\label{eq:condition-Case-2-first}
                O\rb{r_j (1+\eps)^{r_j/b} \log n \cdot \rb{f (1+\eps)^{r_j/b} \log^2 n}^{r_j}} \le \tau_j.
            \end{equation}
            Since $r_j \le \log n$, the leftmost $r_j$ in the condition above we replace by $\log n$, obtaining
            \begin{equation}\label{eq:condition-Case-2-second}
                O\rb{(1+\eps)^{r_j/b} \log^2 n \cdot \rb{f (1+\eps)^{r_j/b} \log^2 n}^{r_j}} \le \tau_j.                
            \end{equation}
            Note that, as long as $r_j \le \log n$, the upper-bound on $r_j$ derived from \cref{eq:condition-Case-2-second} is smaller/stricter than the one derived from \cref{eq:condition-Case-2-first}.
            Applying the same reasoning, we replace $f$ by $(1+\eps)^{r_j/b} \log^2 n$ deriving
            \begin{equation}\label{eq:condition-Case-2-third}
                O\rb{(1+\eps)^{r_j/b} \log^2 n \cdot \rb{(1+\eps)^{r_j/b} \log^2 n}^{2 r_j}} \le \tau_j.                
            \end{equation}
            Finally, by using that $r_j \ge 1$, we turn \cref{eq:condition-Case-2-third} into an event stricter upper-bound on $r_j$ by the following
            \[
                O\rb{\rb{(1+\eps)^{r_j/b} \log^2 n}^{3r_j}} \le \tau_j.
            \]
            By applying $\log$ on both sides and using that $\log (1 + \eps) = \Theta(\eps)$, this condition yields
            \[
                O\rb{r_j^2 \cdot \eps^{2} + r_j \cdot \log \log n} \le \log \tau_j.
            \]
            This implies two requirements: $r_j \le O\rb{\eps^{-1} \cdot \sqrt{\log \tau_j}}$ and $r_j \le O\rb{\log \tau_j / \log \log n}$. The former requirement is more stringent except for ``small'' values of $\tau_j$, i.e., except when $\eps^{-1} \cdot \sqrt{\log \tau_j} > \log \tau_j / \log \log n$. Hence, these ``small'' values of $\tau_j$ are when $\eps^{-1} \cdot \log \log n > \sqrt{\log \tau_j}$ holds. This implies $\tau_j < (\log n)^{\eps^{-2} \cdot \log \log n}$. 
            
            Since these values of $\tau_j$ are handled by Case~1, we conclude that Case~2 provides a condition $r_j = O\rb{\eps^{-1} \cdot \sqrt{\log \tau_j}}$.

            \item \textbf{Case 3: $f > (1+\eps)^{r_j/b} \log^2 n$ and Case~1 does not hold.}
                By re-arranging the terms, \cref{eq:condition-neighborhood-stronger} directly implies
                \[
                    r_j (1+\eps)^{r_j/b} \log n \cdot O\rb{\rb{f (1+\eps)^{r_j/b} \log^2 n}^{r_j}} \le \tau_j.
                \]
                Using that $r_j \le \log n$, the condition above is more constraining for $r_j$ after replacing the leftmost $r_j$ by $\log n$, which derives
                \begin{equation}\label{eq:Case-3-condition-r_j}
                    (1+\eps)^{r_j/b} \log^2 n \cdot O\rb{\rb{f (1+\eps)^{r_j/b} \log^2 n}^{r_j}} \le \tau_j.
                \end{equation}
                Using the lower-bound on $f$ that this case assume, from \cref{eq:Case-3-condition-r_j} we derive a stricter condition on $r_j$ which is
                \[
                    (1+\eps)^{r_j/b} \log^2 n \cdot O\rb{\rb{f^2}^{r_j}} \le \tau_j.
                \]
                Finally, this further yields
                \[
                    O\rb{f^{2 r_j + 1}} \le \tau_j,
                \]
                implying $r_j \le O\rb{\frac{\log \tau_j}{\log f}}$.
        \end{itemize}

        \paragraph*{Round complexity.}
            We now analyze the round complexity by building on the cases we studied above.
            
        \begin{itemize}
            \item \textbf{Case 1.} If $\tau_j \le (\log n)^{O(\eps^{-2} \cdot \log \log n)}$, and since $p_{i} = \Theta(\tau_j^{-1} / \log n)$, there are only $O(\eps^{-3} \cdot b \cdot \log^2 \log n)$ iterations remaining.
            Since, in this case, our MPC algorithm executes each iteration of \cref{alg:f-apx-fixed} in $O(1)$ MPC rounds, this case requires $O(\eps^{-3} \cdot b \cdot \log^2 \log n)$ MPC rounds.

            \item \textbf{Case 2.}
            To analyze Case~2, assume that $\tau_j \in [2^{2^t}, 2^{2^{t + 1}}]$ for an integer $t$. 
            Then, $\tau_{x} > 2^{2^{t-1}}$ holds for $O(\sqrt{b \eps^{-1}} \cdot 2^{t/2})$ different phases $x$ succeeding Phase $j$. This is the case as, from our analysis of Case~2, each of those Phases~$x$ simulates $\Omega(\sqrt{b \eps^{-1}} \cdot \sqrt{\log \tau_x}) = \Omega(\sqrt{b \eps^{-1}} \cdot 2^{(t-1) / 2})$ iterations of \cref{alg:f-apx-fixed}, where we use that $\sqrt{\log \tau_x} \ge \sqrt{\log 2^{2^{t-1}}} = \sqrt{ 2^{t-1}} = 2^{(t-1)/2}$. 
            So, all those phases simulate $\Omega(b \eps^{-1} \cdot 2^{t/2 + (t-1)/2}) = \Omega(b \eps^{-1} \cdot 2^{t + 1})$ iterations. After those many iterations, the upper bound on the maximum degree drops by a factor $2^{2^{t + 1}}$.
 
            Recall that each Phase $j$ is simulated in $O(\log r_j) = O(\log( b \cdot \eps^{-1} \cdot \log \tau_j)) = O(\log (b \cdot \eps^{-1} \cdot \log \Delta))$ MPC rounds.

            Before we provide the total round complexity, it remains to upper-bound $t$. 
            As a reminder, we set $\tau_j$ to $\Theta(1/p_i \cdot \log n)$, for appropriately chosen $p_i$.
            Hence, in Phase~$1$, we have that the corresponding $p_i$ equals $\Theta(\eps / \Delta)$, and hence $\tau_1 = c \Delta / \eps \cdot \log n$, for appropriately chosen constant $c \ge 1$. This leaves us with $t = \log\rb{\log (c \Delta / \eps) + \log \log n} \le \log(2 \log (c \Delta / \eps))$, where we are using that $\Delta \ge \log n$.
            
            Therefore, the overall round complexity when Case~2 holds is
            \[
                O(\log (b \cdot \eps^{-1} \cdot \log \Delta)) \sum_{t = 0}^{\log (2 \log (c\Delta / \eps))} \sqrt{b \cdot \eps^{-1}} \cdot 2^{t/2} = O\rb{\log (b \cdot \eps^{-1} \cdot \log \Delta) \cdot \sqrt{b \cdot \eps^{-1}} \cdot \sqrt{\log(\Delta / \eps)}}.
            \]

            \item \textbf{Case 3.}
                The round complexity analysis for this case is almost identical to Case~2. 
                In Case~3, our algorithm simulates $\Theta\rb{\frac{\log \tau_j}{\log f}}$ iterations in Phase~$j$.
                As in Case~2, assume that $\tau_j \in [2^{2^t}, 2^{2^{t + 1}}]$ for an integer $t$. 
                Then, using the same reasoning as in Case~2, $\tau_{x} > 2^{2^{t-1}}$ holds for $O(b \eps^{-1} \cdot \log f)$ different phases $x$ succeeding Phase $j$.

                Hence, the round complexity in this case is
                \[
                    O(\log (\log \Delta / \log f) \sum_{t = 0}^{\log (2 \log (c\Delta / \eps))} b \cdot \eps^{-1} \cdot \log f = O\rb{\log^2 \log (\Delta/\eps) \cdot b \cdot \eps^{-1} \cdot \log f}.
                \]
        \end{itemize}

    \paragraph*{Approximation analysis.} Since we fully simulate \cref{alg:f-apx}, our MPC algorithm is also a $(1+\eps) f$-approximate one.
\end{proof}

\subsection{\cref{alg:log-apx} in the nearly-linear space regime}
This section shows that \cref{alg:log-apx} can be simulated in $O(\log \Delta)$ MPC rounds in the nearly-linear space regime.

\begin{theorem}\label{thm:MPC-log-approx-linear}
    Let $\eps \in (0, 1/2)$ be an absolute constant. There is an MPC algorithm that, in expectation, computes an $(1+\eps) H_{\Delta}$ approximate set cover with $\tO(|S| + |T|^\delta)$ space per machine, where $\delta > 0$ is an arbitrary constant. This algorithm succeeds with high probability, runs in $O\rb{\log \Delta}$ MPC rounds, and uses a total space of $\tO\rb{m}$.
\end{theorem}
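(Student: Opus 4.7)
The plan is to simulate \cref{alg:log-apx} in MPC so that each of the $O(\log_{1+\eps}\Delta) = O(\log \Delta)$ outer iterations runs in $O(1)$ MPC rounds, yielding the desired $O(\log \Delta)$ total round complexity (with $\eps$ constant). We first apply the randomness-fixing technique of \cref{sec:fixing} to the inner loop: independently pre-assign each set $s \in S$ a bucket index $i_s$ drawn from the prescribed distribution $(\tilde p_i)$ via \cref{lem:aliasmethod}. Because the per-machine space is $\tO(|S|+|T|^\delta)$, all bucket assignments and per-set metadata fit at a designated coordinator machine, while the bipartite edges are distributed so that each element knows its incident sets.

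For each outer iteration $j$, processed in decreasing order, we plan to execute the following in $O(1)$ MPC rounds. First, compute the current size $|N(s)|$ of every active set by one-round aggregation over its uncovered elements. Second, identify at the coordinator the candidate set $\mathcal{C}_j = \{s : |N(s)| \geq (1+\eps)^j\}$ together with each candidate's bucket $i_s$. Third, for each uncovered element $t$, compute $i^*(t) = \max\{i_s : s \in \mathcal{C}_j,\, s \ni t\}$ and gather, for each candidate $s$, a compact summary of how its elements are claimed by higher-bucket candidates (for each pair $(s,i)$ with $i > i_s$, the number of $t \in N(s)$ with $i^*(t) = i$). This summary has size $\tO(|S|)$ and fits on the coordinator. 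Fourth, the coordinator locally simulates the sequential inner loop of \cref{alg:log-apx}, producing the set of candidates to be added to the cover. Fifth, these additions are broadcast and the corresponding elements marked as covered for subsequent outer iterations.

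The main obstacle will be Step~4. The coordinator holds only set-level information, but the inner loop's add/drop decision for $s$ depends recursively on whether each higher-bucket candidate that claims an element of $s$ is itself added: if a higher-bucket $s'$ is dropped, its claims must be reassigned to the next-highest bucket containing each affected element. The plan is to show that the per-candidate summary collected in Step~3 suffices for the coordinator to carry out this reassignment entirely in local memory in $\tO(|S|)$ work. The approximation analysis then reduces to verifying the two hypotheses of \cref{lem:logn-properties}: claim~(a) holds by construction since the outer loop only admits sets with residual size at least $(1+\eps)^j$, and claim~(b) follows by applying \cref{lem:main-sampling} to the SSP induced on each element $t$'s incident candidates (with the adversary choosing $A_i$ to be those candidates of bucket at least $i$ and the randomness supplied by the $(i_s)$), which bounds the expected number of batch-added sets covering any newly covered element by $1+4\eps$. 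The $(1+\eps)H_\Delta$-approximation then follows from \cref{lemma:set-cover-log-approx}, and the total space of $\tO(m)$ follows from the bipartite representation together with the $\tO(|S|)$ coordinator-side auxiliary data; every remaining element is eventually covered because the last outer iteration $j=0$ has every set containing any uncovered $t$ as a candidate, forcing at least one candidate incident to $t$ to be added.
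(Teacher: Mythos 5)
There is a genuine gap, and it sits exactly where you flagged it: Step~4. The summary you collect in Step~3 (for each candidate $s$ and each bucket $i > i_s$, the \emph{count} of elements $t \in N(s)$ with $i^*(t) = i$) is information-theoretically insufficient to simulate the inner loop of \cref{alg:log-apx} at the coordinator. Whether a candidate $s$ is still above the threshold $(1+\eps)^j$ at its own step $i_s$ depends on exactly which of its elements were covered in earlier (higher-index) steps, and this cascades: if a higher-bucket candidate $s'$ is dropped because \emph{its} residual size fell below the threshold, the elements it ``claimed'' may or may not still be covered by other candidates whose buckets lie strictly between $i_s$ and $i^*(t)$. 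Your per-$(s,i)$ counts record only the maximum bucket over candidates containing each element; they do not identify which specific candidate attains the maximum, how claims split among several candidates in the same bucket, or what the second-highest bucket of an affected element is. Reconstructing the reassignment therefore requires (essentially) the full incidence structure between candidates and their elements, which is $\Theta(m)$ data in the worst case and does not fit in the coordinator's $\tO(|S|+|T|^{\delta})$ memory. Recomputing exact residual sizes by a fresh aggregation round per inner step is also not an option, since the inner loop has $\Theta(\log f/\eps^2)$ steps and that would bring you back to $O(\log\Delta\cdot\log f)$ rounds rather than $O(\log\Delta)$.

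The paper sidesteps exact residual sizes entirely. For each outer iteration $j$ it pre-draws $k+1$ independent element samples $X_0,\dots,X_k$, each element included with probability $q_j = \min\{\tfrac{100}{\eps^2}\cdot\tfrac{\log n}{(1+\eps)^j},1\}$, and ships to one machine only the edges between the sets and these samples; since every surviving set has residual size $O((1+\eps)^{j+1})$ at the start of round $j$ whp (\cref{lemma:set-size-after-iteration-j}), this is $\tO(|S|)$ data whp (\cref{lemma:MPC-log-space}). The coordinator then simulates the whole inner loop locally (\cref{alg:MPC-set-cover-log-approx}), using $|s\cap X_i|/q_j$ (taken as a running minimum across steps, to keep estimates monotone) as a proxy for the residual size; removals of sampled elements caused by chosen sets can be tracked exactly on this sampled subgraph. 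A Chernoff bound shows the batch of sets added in round $j$ has residual sizes within a $(1+3\eps)$ factor of the maximum whp, and \cref{lem:main-sampling} still bounds the expected multiplicity of coverage, giving \cref{lem:logn-properties-MPC} and hence the $(1+O(\eps))H_\Delta$ guarantee via \cref{lemma:set-cover-log-approx}, with the extra slack absorbed into $\eps$. So the fix to your plan is not a cleverer set-level summary of exact claims, but replacing exact sizes with pre-sampled degree estimates that can be maintained wholly in the coordinator's memory; your randomness-fixing of the $i_s$'s and the approximation argument via \cref{lem:logn-properties} and \cref{lem:main-sampling} are otherwise aligned with the paper.
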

Our MPC algorithm performs round compression of the inner loop of \cref{alg:log-apx}; that MPC approach is given as \cref{alg:MPC-set-cover-log-approx}.
Each iteration of the outer loop of \cref{alg:log-apx} is executed by invoking \cref{alg:MPC-set-cover-log-approx}.
The main part of round compression, i.e., \cref{alg:MPC-set-cover-log-approx}, is performed on a single machine. In one round of compression, and for a \emph{fixed} $j$ of the for-loop on \cref{line:log-n-alg-loop-j}, our goal is to execute the entire for-loop on \cref{line:log-n-alg-loop-i}. 
\begin{algorithm}[H]
\caption{MPC simulation of the inner-loop of \cref{alg:log-apx} in the nearly-linear space regime}\label{alg:MPC-set-cover-log-approx}
\begin{algorithmic}[1]

\Function{MPC-SetCover}{$G, \epsilon, j$}\Comment{$G=(S \cup T, E)$}
    \State Let $k = b \ceil{\log_{1+\epsilon} (f / \epsilon)}$ \label{line:define-A}
    \State Let $q_j \eqdef \min \cb{\frac{100}{\eps^2} \cdot \frac{\log n}{(1+\eps)^j}, 1}$ \label{line:define-q_j}

    \State Prepare $k+1$ independent samples $X_0, \ldots, X_k$ as follows. For each $i = 0 \ldots k$, sample $X_i$ is a subset of $T$ such that each element in $T$ is included in $X_i$ with probability $q_j$ and independently of other elements.

    \State $C \gets \emptyset$

    \For{$i = k$ \textbf{down to} $0$}
        \State Use sample $X_{i}$ to estimate the current degree $\tN(s)$ of each element in $S$. That is, let $\tN(s)$ be the \emph{minimum} of $\frac{|S \cap X_t|}{q_j}$ and, if $i < A$, the value of $\tN(s)$ in the iteration $i + 1$. \label{line:MPC-setcover-estimate-degree}
        \State{$D \gets \sample\rb{\cb{s \in S \mid \tN(s) \geq (1+\epsilon)^j}, p_i}$ \label{line:MPC-log-n-sample-D}}
        \State{$C \gets C \cup D$}
        \State{Remove from $G$ all sets in $D$ and all elements they cover}
    \EndFor
    \State \Return{C}
\EndFunction

\end{algorithmic}
\end{algorithm}

The for-loop of \cref{alg:MPC-set-cover-log-approx} is executed on a single machine, i.e., all the sets $S$, all the samples $X_i$, and all the edges between the sets and the samples are placed on one machine. Observe that \cref{line:MPC-setcover-estimate-degree} takes $\tN(s)$ to be the minimum estimate across all iterations $t \ge i$. This ensures that if $\tN(s)$ does not pass the threshold $(1+\eps)^j$ once, it will remain below it. In particular, such property will be important in our proof of \cref{thm:log-approx-MPC-sub-linear}.

\paragraph*{Space requirement.}
To analyze the space requirement, we first upper bound the degree of sets as a function of $j$.
\begin{lemma}\label{lemma:set-size-after-iteration-j}
    Consider an execution of \cref{alg:MPC-set-cover-log-approx} with parameter $G, \eps$ and $j$. After this execution, with high probability, the degree of each remaining set is $(1+\eps) \cdot (1+\eps)^j$.
\end{lemma}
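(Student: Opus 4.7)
The plan is to prove the contrapositive: if a set $s$ has final current degree $|N(s)| > (1+\eps)^{j+1}$, then with high probability $s$ has already been removed during the execution. First I extract a necessary condition for $s$ to remain. Since $p_0 = 1$, in the final iteration $i = 0$ every set still in $G$ with $\tN(s) \ge (1+\eps)^j$ is sampled into $D$ on \cref{line:MPC-log-n-sample-D} and removed. Hence if $s$ remains at the end, then during iteration $i = 0$ we must have $\tN(s) < (1+\eps)^j$. By the running-minimum definition of $\tN(s)$ on \cref{line:MPC-setcover-estimate-degree}, this is equivalent to the existence of some iteration $t \in [0,k]$ with $|N_t(s) \cap X_t| / q_j < (1+\eps)^j$, where $N_t(s)$ is the current neighborhood of $s$ at the start of iteration $t$. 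Neighborhoods only shrink over the execution, so if $s$ has final degree greater than $(1+\eps)^{j+1}$ then $|N_t(s)| > (1+\eps)^{j+1}$ for every such $t$.

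The main subtlety is that the samples $X_t$ and the shrinking neighborhoods $N_t(s)$ are not independent in general, since the algorithm's removals depend on the samples. However, because the for-loop iterates from $i = k$ down to $i = 0$ and $X_t$ is first consulted at iteration $t$, the set $N_t(s)$ is a function of $X_{t+1},\ldots,X_k$ and of the subsampling on \cref{line:MPC-log-n-sample-D} performed in iterations $k,\ldots,t+1$; it is independent of $X_t$. Therefore, conditioning on any realization $N'$ of $N_t(s)$ with $|N'| > (1+\eps)^{j+1}$, the quantity $|N' \cap X_t|$ is a sum of $|N'|$ independent Bernoulli($q_j$) variables with mean $\mu = q_j |N'| > q_j (1+\eps)^{j+1}$.

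Finally I apply \cref{lemma:chernoff}. The event $|N' \cap X_t|/q_j < (1+\eps)^j$ is a lower-tail deviation from $\mu$ by a factor of $1 - \eps/(1+\eps) = \Theta(\eps)$, so its probability is at most $\exp(-\Theta(\eps^2) \cdot q_j (1+\eps)^{j+1})$. Plugging in $q_j \ge 100\log n / (\eps^2 (1+\eps)^j)$ from \cref{line:define-q_j} gives a bound of $n^{-c}$ for any desired constant $c$ (if instead $q_j = 1$, the estimate equals the true degree and the bad event cannot occur at all). A union bound over all $s \in S$ and all $t \in \{0,\ldots,k\}$, noting that $k = O(\eps^{-2}\log f) = O(\log n)$, yields overall failure probability $n^{-c + O(1)} = 1/\poly(n)$, completing the proof. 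The only real obstacle, namely the dependence between $X_t$ and $N_t(s)$, is resolved cleanly by the decreasing iteration order: the neighborhood $N_t(s)$ is already fixed before $X_t$ is ever looked at.
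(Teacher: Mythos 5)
Your proof is correct and follows essentially the same route as the paper's: since $p_0 = 1$, any set surviving iteration $i=0$ must have had its estimate $\tN(s)$ fall below $(1+\eps)^j$, and a Chernoff lower-tail bound with $q_j$ as on \cref{line:define-q_j}, together with a union bound, shows this is a low-probability event for sets of residual degree above $(1+\eps)^{j+1}$. If anything, you are more careful than the paper's write-up, which only analyzes the sample at iteration $i=0$, whereas you union bound over all iterations $t$ (needed because $\tN(s)$ is a running minimum over the per-iteration estimates) and explicitly justify that $N_t(s)$ is independent of $X_t$ via the decreasing iteration order.
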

\begin{proof}
    In \cref{alg:MPC-set-cover-log-approx}, consider the iteration $i = 0$. In that case, by \cref{eq:define-p_i}, it holds $p_0 = 1$. Hence, each set $s$ for which it holds $\tN(s) \ge (1+\eps)^j$ is included in $D$ on \cref{line:MPC-log-n-sample-D} and afterward removed from $G$.
    If $q_j = 1$, then $\tN(s) = N(s)$, and hence a set $s$ such that $|N(s)| \ge (1 + \eps)^j$ is removed from graph. So, assume $q_j = \frac{100}{\eps^2} \cdot \frac{\log n}{(1+\eps)^j}$.

    Consider a set that at the beginning of the iteration $i = 0$ has at least $(1+\eps) \cdot (1+\eps)^j$ many elements. We derive
    \[
        \E{|s \cap X_i|} = |s| \cdot q_j \ge (1+\eps) \cdot (1+\eps)^j \cdot q_j = \frac{100}{\eps^2} \cdot (1+\eps) \cdot \log n.
    \]
    Observe that for $\eps \in (0, 1)$ it holds that $(1 - \eps / 2) \cdot \E{|s \cap X_i|} \ge \frac{100}{\eps^2} \cdot \log n$.
    Hence, by the Chernoff bound (\cref{lemma:chernoff}~\cref{item:delta-at-most-1-le} for $\delta = \eps/2$), we have
    \[
        \prob{|s \cap X_i| \le \frac{100}{\eps^2} \cdot \log n} < n^{-10}.
    \]
    Hence, with probability $1 - n^{-10}$ at least, it holds that $\tN(s) \ge (1+\eps)^j$ and therefore $s$ is included in $D$ on \cref{line:MPC-log-n-sample-D}.

    By taking the Union bound over all the sets, which there are at most $n$, we get that, after the execution of \cref{alg:MPC-set-cover-log-approx} with parameter $G, \eps$ and $j$, with probability at least $1 - n^{-9}$ each non-removed set has degree $(1+\eps) \cdot (1 + \eps)^j$.
\end{proof}

\begin{lemma}\label{lemma:MPC-log-space}
    Let $k$ be defined on \cref{line:define-A} of \cref{alg:MPC-set-cover-log-approx}. Then, with high probability, \cref{alg:MPC-set-cover-log-approx} uses $O\rb{|S| \cdot \log n \cdot k / \eps^2}$ space.
\end{lemma}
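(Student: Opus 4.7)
\medskip

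\noindent\textbf{Proof plan for \cref{lemma:MPC-log-space}.} The algorithm stores the sets in $S$, the samples $X_0,\dots,X_k$, and, crucially, the bipartite edges between sets and sampled elements (which are required to compute each $\tN(s)$ on \cref{line:MPC-setcover-estimate-degree}). Since $|S|$ is already a term in the target bound, and a sampled element of $T$ can be ``charged'' to any incident edge (elements with no incident edges into surviving sets can be dropped), the task reduces to proving the edge bound
\[
\sum_{s \in S}\sum_{i=0}^{k}\abs{N(s) \cap X_i} \;=\; O\!\rb{|S|\cdot k \cdot \log n / \eps^2} \quad \text{w.h.p.}
\]

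\medskip

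\noindent The first step is to pin down the maximum surviving set size at the moment \cref{alg:MPC-set-cover-log-approx} is invoked with parameter $j$. By \cref{lemma:set-size-after-iteration-j} applied to the \emph{preceding} call (with parameter $j+1$), w.h.p.\ every remaining set satisfies $|N(s)| \le (1+\eps)^{j+2}$. Combined with the definition $q_j = \min\{100 \log n / (\eps^2 (1+\eps)^j),\,1\}$ on \cref{line:define-q_j}, this yields, for every set $s$ and every sample index $i$,
\[
\E{\abs{N(s)\cap X_i}} \;=\; |N(s)|\cdot q_j \;\le\; (1+\eps)^{j+2}\cdot q_j \;=\; O(\log n / \eps^2),
\]
where we split into the two cases $q_j = 1$ (so $(1+\eps)^j = O(\log n / \eps^2)$ and $|N(s)| \cdot q_j \le |N(s)| = O(\log n / \eps^2)$) and $q_j < 1$ (so $(1+\eps)^{j+2}\cdot q_j = O(\log n / \eps^2)$ directly). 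Summing over all $|S|(k+1)$ (set, sample) pairs gives the right expectation.

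\medskip

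\noindent For the high-probability bound, note that for fixed $s$ and $i$, the random variable $\abs{N(s) \cap X_i}$ is a sum of $|N(s)|$ independent Bernoulli$(q_j)$ variables, so a Chernoff bound (\cref{lemma:chernoff}~\cref{item:delta-at-most-1-ge}) yields $\abs{N(s)\cap X_i} = O(\log n / \eps^2)$ with probability $1 - n^{-c}$ for any desired constant $c$; a union bound over the at most $n^2$ (set, sample) pairs, together with the w.h.p.\ event from \cref{lemma:set-size-after-iteration-j}, then gives $\sum_{s,i}\abs{N(s)\cap X_i} = O(|S|\cdot k\cdot \log n / \eps^2)$ with high probability. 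The storage of the samples themselves is dominated by this edge count, since an element of $X_i$ not incident to any surviving set need not be kept.

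\medskip

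\noindent The main subtlety to be careful about is the dependence between successive calls: \cref{lemma:set-size-after-iteration-j} controls the set-size cap at the \emph{start} of the call, and we must apply it to the previous outer-loop iteration (not the current one, whose guarantee only holds post-execution). Once this is handled cleanly, the rest of the argument is a standard Chernoff-plus-union-bound calculation, and the casework on whether $q_j = 1$ or $q_j < 1$ is the only mildly non-routine step.
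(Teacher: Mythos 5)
Your proposal is correct and follows essentially the same route as the paper's proof: bound the residual set sizes at the start of the call (the paper assumes each set has size $O((1+\eps)^j)$, which is exactly the guarantee from \cref{lemma:set-size-after-iteration-j} applied to the previous outer-loop iteration, as you make explicit), conclude $\E{|s \cap X_i|} = O(\log n/\eps^2)$ from the definition of $q_j$, apply a Chernoff bound plus a union bound, and multiply by the $k+1$ samples. Your additional details (the explicit $q_j=1$ casework, charging element storage to incident edges) are just careful elaborations of the same argument.
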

\begin{proof}
Recall that each element in $X_i$ is sampled with probability $q_j$, defined on \cref{line:define-q_j}. Hence, in expectation and assuming that each set $s$ has size $O\rb{(1+\eps)^j}$, $\E{|s \cap X_k|} = O(\log n / \eps^2)$. Moreover, by the Chernoff bound, we have that $|s \cap X_i| = O(\log n / \eps^2)$ holds with high probability.

The claim follows since there are $k + 1$ samples $X_i$.
\end{proof}

Note that \cref{thm:MPC-log-approx-linear} also requires that each machine has at least $|T|^{\delta}$ space for an arbitrary constant $\delta > 0$. As \cref{lemma:MPC-log-space} shows, this is not necessary to execute \cref{alg:MPC-set-cover-log-approx} itself, but it is required by the standard MPC primitives~\cite{goodrich2011sorting} that would fetch data that \cref{alg:MPC-set-cover-log-approx} needs.

\paragraph*{Approximation analysis.}
To analyze the approximation guarantee of \cref{alg:MPC-set-cover-log-approx}, we re-use the proof of \cref{lemma:set-cover-log-approx}. The only properties of \cref{alg:log-apx} that \cref{lemma:set-cover-log-approx} relies on are those stated by \cref{lem:logn-properties}. So, we prove such properties for \cref{alg:MPC-set-cover-log-approx} as well.

\begin{lemma}\label{lem:logn-properties-MPC}
\cref{alg:MPC-set-cover-log-approx} adds sets to the solution in batches.
When a batch of sets $D$ is added to the solution, we have that with high probability (a) the residual size of each set in $D$ is at most $(1+3\epsilon)$ smaller than the maximum residual size of any set at that moment, and (b) for each newly covered element $t$, the expected number of sets in a batch that cover it is at most $(1+12\epsilon)$.
\end{lemma}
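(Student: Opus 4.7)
The plan is to reduce to \cref{lem:logn-properties} by showing that the estimate $\tN(s)$ closely tracks $|N(s)|$ throughout \cref{alg:MPC-set-cover-log-approx}. Since the MPC algorithm differs from \cref{alg:log-apx} only in that it uses $\tN(s)$ in place of $|N(s)|$ to decide which sets are eligible to be sampled in iteration $i$, the natural approach is to (i) prove a multiplicative concentration bound on $\tN(s)$ via Chernoff, (ii) condition on this good event and transfer property (a) directly, and (iii) reuse the set-sampling-process argument to obtain property (b) via \cref{lem:main-sampling}, absorbing the small slack from the estimates into the stated constants.

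For the concentration step, I would observe that for any set $s$ with $|N(s)| \geq (1+\eps)^j$ and any inner-loop iteration $i$, $|s \cap X_i|$ is a sum of $|N(s)|$ independent Bernoullis of mean $q_j$, so by the definition of $q_j$ on \cref{line:define-q_j} the expectation is at least $100 \log n / \eps^2$. A standard multiplicative Chernoff bound (\cref{lemma:chernoff}) then yields $\tN(s) \in [(1-\eps)|N(s)|, (1+\eps)|N(s)|]$ with failure probability at most $n^{-10}$, and a union bound over the polynomially many pairs $(s, i)$ shows that this good event $E$ holds with high probability. Failures of $E$ contribute only $o(1)$ to any expectation and can be absorbed into the slack of the claimed bounds.

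Conditional on $E$, property (a) is essentially immediate: any $s \in D$ satisfies $\tN(s) \geq (1+\eps)^j$, and hence $|N(s)| \geq \tN(s)/(1+\eps) \geq (1+\eps)^{j-1}$; meanwhile, applying \cref{lemma:set-size-after-iteration-j} to outer iteration $j+1$ (and using that subsequent inner-loop steps only remove sets) bounds the maximum residual set size at the moment $D$ is added by $(1+\eps)^{j+2}$. The ratio is $(1+\eps)^3$, which is at most $1+3\eps$ for small $\eps$; in general, tightening the Chernoff guarantee from $(1\pm\eps)$ to $(1\pm c\eps)$ for an appropriate constant $c$ and using $\eps \leq 1/2$ closes any constant gap.

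For property (b), fix a newly covered element $t$, and consider the sets in the batch $D$ that contain $t$. These form precisely $R_z$ in the set sampling process with $A_i$ equal to the still-uncovered sets containing $t$ whose estimates satisfy $\tN(s) \geq (1+\eps)^j$; the randomness coming from the $X_i$'s and from prior sampling decisions is folded into a randomized adaptive adversary, which is permitted by \cref{obs:nonadaptive}. Applying \cref{lem:main-sampling} yields $\E{|R_z|} \leq 1+4\eps$, and combining this with the $o(1)$ contribution from $E$ failing and the modest additional slack from estimate inaccuracy gives the claimed $1+12\eps$ comfortably. The main obstacle I anticipate is verifying that conditioning on $E$ does not invalidate the independence structure needed by \cref{lem:main-sampling}; this is handled by viewing $E$ as a property of the adversary's (randomized) state rather than of the per-iteration Bernoulli sampling used to form $R_i$.
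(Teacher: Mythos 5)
Your proposal follows essentially the same route as the paper: Chernoff concentration on $|s \cap X_i|$ (with a union bound over sets and iterations) to pin the sizes of sets entering $D$ into a narrow multiplicative band for property (a), and an invocation of \cref{lem:main-sampling} through the \ssp{} abstraction, with the $X_i$ randomness folded into the (randomized, adaptive) adversary via \cref{obs:nonadaptive}, for property (b). One small slip to fix in step (ii): your good event $E$ is quantified only over sets with $|N(s)| \geq (1+\eps)^j$, but a set $s \in D$ is only known to have a large \emph{estimate} $\tN(s) \geq (1+\eps)^j$, so the inference $|N(s)| \geq \tN(s)/(1+\eps)$ is circular as stated -- a small set could in principle be overestimated, and $E$ says nothing about it. The correct patch is exactly what the paper does: a one-sided upper-tail Chernoff bound showing that any set with $|N(s)| \leq (1+\eps)^{j-1}$ has $|s \cap X_i| < \frac{100}{\eps^2}\log n$ (hence $\tN(s) < (1+\eps)^j$) with probability $1 - n^{-8}$, so w.h.p.\ no such set enters $D$. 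Combined with \cref{lemma:set-size-after-iteration-j} this gives the ratio $(1+\eps)^2 \leq 1+3\eps$ directly, whereas your accounting lands at $(1+\eps)^3$, which exceeds $1+3\eps$ and forces the constant-tightening you mention; for (b) the paper likewise absorbs the extra slack by applying \cref{lem:main-sampling} with $\eps$ replaced by $3\eps$, matching your $1+12\eps$.
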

Note that when \cref{lem:logn-properties-MPC} and \cref{lem:logn-properties} are compared, then the former claim holds with high probability. It implies that our approximation guarantee for \cref{alg:MPC-set-cover-log-approx} also holds with high probability.
\begin{proof}[Proof of \cref{lem:logn-properties-MPC}]
    To prove Property~(a), we want to show that the ratio of the sizes of the sets sampled in $D$ is ``close'' to $1$. \cref{lemma:set-size-after-iteration-j} already upper bounds the largest set added to $D$; w.h.p.\ it has size $(1+\eps)^{j+1}$ at most. By carrying an almost identical analysis to the one in the proof of \cref{lemma:set-size-after-iteration-j}, we show that whp $D$ does not contain a set with a size smaller than $(1+\eps)^{j-1}$. To that end, consider a set $s$ of size $(1+\eps)^{j-1}$ at most. We have
    \[
        \E{|s \cap X_i|} \le \frac{100}{\eps^2} \cdot (1+\eps)^{-1} \cdot \log n.
    \]
    Observe that for $\eps \in (0, 1)$ it holds that $(1+\eps/2) / (1+\eps) < 1$ and hence $(1 + \eps / 2) \cdot \E{|s \cap X_i|} < \frac{100}{\eps^2} \cdot \log n$
    Therefore, by the Chernoff bound (\cref{lemma:chernoff}~\cref{item:delta-at-most-1-ge} for $\delta = \eps/2$), we have
    \[
        \prob{|s \cap X_i| \ge \frac{100}{\eps^2} \cdot \log n} < n^{-8}.
    \]
    So, w.h.p.\ the ratio of the set sizes sets included in $D$ on \cref{line:MPC-log-n-sample-D} is at most $(1+\eps)^2 \le 1 + 3\eps$.

    The proof of Property~(b), just as in \cref{lem:logn-properties}, follows from \cref{lem:main-sampling} when $\eps$ is replaced by $3\eps$.
\end{proof}

Coupling \cref{lem:logn-properties-MPC} with the proof for \cref{lemma:set-cover-log-approx} yields our desired approximation guarantee for \cref{alg:MPC-set-cover-log-approx}.
\begin{lemma}\label{lemma:approximation-MPC-log}
    With probability $1 - 1/n$ at least, \cref{alg:MPC-set-cover-log-approx} in expectation computes a $(1+O(\eps)) H_{\Delta}$-approximate solution to \setcover.
\end{lemma}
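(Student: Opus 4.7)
The plan is to combine \cref{lem:logn-properties-MPC} with the dual-fitting analysis already carried out in the proof of \cref{lemma:set-cover-log-approx}. First I would define $\mathcal{E}$ as the event that Property (a) of \cref{lem:logn-properties-MPC} holds in every iteration of every invocation of \cref{alg:MPC-set-cover-log-approx}. Each inner iteration contributes only inverse-polynomial failure probability (this is where the Chernoff bounds inside the proof of \cref{lem:logn-properties-MPC} are used), so a union bound over the $\poly\log n$ iterations across the entire execution yields $\Pr[\mathcal{E}] \ge 1 - 1/n$. The factor of $n^{-1}$ can be made arbitrarily small by boosting the constant inside the Chernoff bound for Property (a), so the slack is harmless.

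Next, conditioning on $\mathcal{E}$, I would replay the argument of \cref{lemma:set-cover-log-approx} essentially verbatim, substituting the constants $1+\eps$ and $1+4\eps$ by $1+3\eps$ and $1+12\eps$ from \cref{lem:logn-properties-MPC}. Concretely, I would define the prices $p_i$ exactly as in the proof of \cref{cl:prices}. Property (a) yields the pointwise bound $p_i \le X_i(1+3\eps)/(|s|-i+1)$ for each set $s$, and Property (b) gives $\E{X_i} \le 1+12\eps$. Summing and following the calculation of \cref{cl:prices} produces $\sum_{i=1}^{|s|} \E{p_i} \le (1+12\eps)(1+3\eps) H_{|s|}$. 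Setting $y_t = \E{p_t} / ((1+12\eps)(1+3\eps) H_\Delta)$ then gives a feasible dual of the LP in \cref{lp}, and weak LP duality yields the $(1+12\eps)(1+3\eps) H_\Delta = (1+O(\eps)) H_\Delta$ approximation in expectation.

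The main technical subtlety is arguing that conditioning on $\mathcal{E}$ does not distort the expectation used in Property (b). To handle this, I would observe that $\mathcal{E}$ is determined solely by the degree-estimation samples $X_0, \ldots, X_k$ constructed at the top of \cref{alg:MPC-set-cover-log-approx}, whereas the randomness in Property (b) comes from the independent \ssp sampling used on \cref{line:MPC-log-n-sample-D} to produce $D$. Since these two sources of randomness are independent by construction, conditioning on $\mathcal{E}$ leaves the distribution of the sampled $D$ (and therefore the $X_i$ counters used in the dual-fitting argument) unchanged, so the bound from \cref{lem:main-sampling} carries through verbatim under the conditioning. With this independence in hand, the dual-fitting bookkeeping is routine and gives the claimed conclusion: on the event $\mathcal{E}$ (which occurs with probability at least $1-1/n$), the expected size of the output cover is at most $(1+O(\eps)) H_\Delta$ times the size of an optimal cover. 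The only genuine obstacle is this independence argument, and everything else is a direct translation of the analysis already developed for \cref{alg:log-apx}.
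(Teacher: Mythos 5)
Your proposal is correct and takes essentially the same approach as the paper: the paper's entire proof is the one-line observation that coupling \cref{lem:logn-properties-MPC} with the dual-fitting argument in the proof of \cref{lemma:set-cover-log-approx} (with the constants $1+3\eps$ and $1+12\eps$ in place of $1+\eps$ and $1+4\eps$) gives the claim, which is exactly what you carry out in more detail. Your extra discussion of conditioning on the good event is more careful than the paper itself; the only small imprecision is that this event is not determined by the samples $X_0,\ldots,X_k$ alone (the residual set sizes also depend on the earlier $D$-samples), but since the Chernoff bounds in \cref{lem:logn-properties-MPC} hold conditionally on any history, the union bound and your conclusion go through unchanged.
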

\cref{lemma:approximation-MPC-log,lemma:MPC-log-space} yield the approximation guarantee and the round complexity stated by \cref{thm:MPC-log-approx-linear}. To conclude the round complexity analysis, recall that we designed \cref{alg:MPC-set-cover-log-approx} to replace the entire inner loop of \cref{alg:log-apx}. 
Hence, the complete MPC algorithm corresponding to \cref{thm:MPC-log-approx-linear} executes \cref{alg:MPC-set-cover-log-approx} for $j = \floor{\log_{1+\eps} \Delta}$ down to $0$, where those invocation correspond to the outer loop of \cref{alg:log-apx}. This yields the advertised $O(\log \Delta)$ round complexity.

\subsection{\cref{alg:log-apx} in the sub-linear space regime}
In this section, we describe how to simulate \cref{alg:log-apx} in $O_\eps\rb{\log{\Delta} \cdot (\sqrt{\log f} + \log^2 \log n)}$ rounds in the sub-linear space regime. To that end, we follow the same graph-exponentiation recipe as the one we used to prove \cref{thm:f-approx-MPC}, but this time the graph-exponentiation is applied to \cref{alg:MPC-set-cover-log-approx}. As such, the approximation guarantee directly carries over from \cref{lemma:approximation-MPC-log}.
\begin{theorem}\label{thm:log-approx-MPC-sub-linear}
    Let $\eps \in (0, 1/2)$ be an absolute constant. There is an MPC algorithm that, in expectation, computes an $(1+\eps) H_{\Delta}$ approximate set cover in the sub-linear space regime. 
    This algorithm runs in
    \[
        \tO\rb{\log \Delta \cdot \rb{\eps^{-1} \cdot \sqrt{\log f} + \eps^{-4} \cdot \log^2 \log n}}
    \]
    MPC rounds, and each round, in expectation, uses total space of $O\rb{m}$.
\end{theorem}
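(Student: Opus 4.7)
The plan is to combine the nearly-linear space simulation \cref{alg:MPC-set-cover-log-approx} from \cref{thm:MPC-log-approx-linear} with the graph-exponentiation framework developed in the proof of \cref{thm:f-approx-MPC}. Concretely, I would execute the outer loop of \cref{alg:log-apx} sequentially for $j = \floor{\log_{1+\eps}\Delta}$ down to $0$, and for each fixed $j$ simulate one invocation of \cref{alg:MPC-set-cover-log-approx} entirely in the sub-linear regime. Because we only wrap the already-correct simulation of the inner loop, the approximation guarantee transfers verbatim from \cref{lemma:approximation-MPC-log}; the work of the proof is entirely in bounding rounds and space.

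For a single invocation of \cref{alg:MPC-set-cover-log-approx} with parameter $j$, the $k+1 = O(\eps^{-2}\log(f/\eps))$ inner iterations would be grouped into phases of length $r_\ell \le \log n$, each simulated in $O(\log r_\ell)$ MPC rounds by collecting the relevant $r_\ell$-hop neighborhood of every surviving set via graph exponentiation and then replaying the phase locally. The random objects consumed in the phase -- the samples $X_i$ used to estimate degrees on \cref{line:MPC-setcover-estimate-degree} and the $p_i$-biased coins used to form $D$ on \cref{line:MPC-log-n-sample-D} -- would be fixed upfront in the style of \cref{sec:fixing}, so that each set's trajectory in the phase is determined by the contents of its $r_\ell$-neighborhood in the sparsified graph consisting only of elements and sets that are either sampled into some bucket of the phase or whose presence is needed to estimate $\tN(s)$.

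The choice of $r_\ell$ and the accompanying space accounting closely mirrors the three-case analysis in the proof of \cref{thm:f-approx-MPC}, with the roles of ``element'' and ``set'' swapped. At the start of each phase, \cref{lemma:set-size-after-iteration-j} guarantees whp that every surviving set has size $O((1+\eps)^{j+1})$, and \cref{lemma:hypergraph-sparsification} applied to the subgraph induced by the phase's sampled elements bounds the number of non-isolated sets and relevant elements by $O(r_\ell p_{i-r_\ell+1} \cdot m)$ in expectation, so that $1/(r_\ell p_{i-r_\ell+1})$ local space per touched object suffices while keeping the total space at $\tO(m)$. A Chernoff bound then shows that the size of the $r_\ell$-neighborhood of any object is at most $(f \cdot (1+\eps)^{r_\ell/b} \cdot \log^{O(1)} n)^{r_\ell}$ whp. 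Requiring that this fits inside the per-object budget yields $r_\ell = \Omega(\eps^{-1}\sqrt{\log \tau_\ell})$ in the $(1+\eps)^{r_\ell/b}\log^2 n \ge f$ regime and $r_\ell = \Omega(\log \tau_\ell / \log f)$ in the complementary one, where $\tau_\ell$ is the current degree upper bound; in the tail $\tau_\ell \le (\log n)^{O(\eps^{-2}\log\log n)}$ we simply set $r_\ell = 1$.

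Summing $O(\log r_\ell)$ over the phases exactly as in \cref{thm:f-approx-MPC}, using the doubling-trick telescoping over $\log \tau_\ell$, gives $\tO(\eps^{-1}\sqrt{\log f} + \eps^{-4}\log^2\log n)$ rounds per invocation of \cref{alg:MPC-set-cover-log-approx}, and multiplying by the $O(\log_{1+\eps}\Delta)$ outer iterations of $j$ delivers the claimed round complexity. The main subtlety I expect to have to argue carefully is that the running-minimum definition of $\tN(s)$ on \cref{line:MPC-setcover-estimate-degree} behaves correctly under round compression: because $\tN(s)$ in iteration $i$ may depend on $X_{i'}$ for every $i'$ in the phase with $i' \ge i$, the $r_\ell$-neighborhood that a machine assembles for $s$ must contain, for each in-phase iteration, the intersection of $s$ with the still-alive portion of $X_{i'}$. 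This is automatically provided by the upfront-fixing of all phase randomness and only inflates the per-object memory by an $O(r_\ell)$ factor, which is absorbed inside the same $\log^{O(1)} n$ slack used above.
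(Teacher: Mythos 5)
Your overall architecture matches the paper's proof exactly: run the outer loop of \cref{alg:log-apx} sequentially, simulate each invocation of \cref{alg:MPC-set-cover-log-approx} in the sub-linear regime by graph exponentiation over phases of $r_\ell \le \log n$ iterations with randomness fixed upfront, inherit the approximation from \cref{lemma:approximation-MPC-log}, and do the space accounting via \cref{lemma:hypergraph-sparsification} and the $1/(r_\ell p_{i-r_\ell+1})$ per-object budget. You also correctly flag the running-minimum subtlety for $\tN(s)$, which the paper indeed relies on for degree monotonicity across phases.

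However, there is a genuine gap in the quantitative core. You bound the $r_\ell$-hop neighborhood by $\rb{f\cdot(1+\eps)^{r_\ell/b}\cdot\log^{O(1)}n}^{r_\ell}$, i.e., you keep the true frequency $f$ as a per-hop branching factor, and then run the three-case analysis of \cref{thm:f-approx-MPC} with that $f$. With this bound your own case split gives $r_\ell = O(\log\tau_\ell/\log f)$ whenever $f$ exceeds $(1+\eps)^{r_\ell/b}\log^2 n$, and since here $\tau_\ell$ starts at roughly $(f/\eps)\log n$, the Case-3-type contribution is of order $\eps^{-2}\log f$ (times $\log\log$ factors) rounds \emph{per inner invocation} --- exactly the analogue of the $\eps^{-2}\log f$ term in \cref{thm:f-approx-MPC}. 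Your final summation silently drops this term; but for, say, $f = 2^{\log^{0.9} n}$ it dominates $\eps^{-1}\sqrt{\log f}+\eps^{-4}\log^2\log n$, so the claimed bound $\tO\rb{\log\Delta\cdot\rb{\eps^{-1}\sqrt{\log f}+\eps^{-4}\log^2\log n}}$ does not follow from your neighborhood estimate. The missing idea, which the paper uses, is that $f$ should not appear in the branching at all: since the decisions in \cref{alg:MPC-set-cover-log-approx} depend only on the estimates $\tN(s)$, a set needs to exponentiate only through its $O(\eps^{-3}\log^2 n)$ elements in $\bigcup_i X_i$ (by \cref{lemma:MPC-log-space}), and an element only through the sets actually sampled during the phase, of which there are $O\rb{r_\ell(1+\eps)^{r_\ell/b}\log n}$ whp given the element-degree invariant $O(\log n/p_i)$. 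Hence the per-hop branching is polylogarithmic, and the case analysis of \cref{thm:f-approx-MPC} is invoked with ``$f$'' replaced by $\eps^{-3}\log^2 n$, so the would-be $\log f$ term degenerates into $\log\log n$ factors and only $\sqrt{\log\tau}\approx\sqrt{\log f}$ survives. Without this substitution your argument proves a weaker round bound of roughly $\tO(\log\Delta\cdot\log f)$, not the stated theorem.
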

\begin{proof}
    Our proof consists of simulating \cref{alg:MPC-set-cover-log-approx} in $\tO_\eps\rb{\sqrt{\log f}}$ rounds as follows. Recall that \cref{alg:MPC-set-cover-log-approx} is executed in place of the inner loop of \cref{alg:log-apx}.
    \cref{alg:MPC-set-cover-log-approx} executes $k = b \ceil{\log_{1 + \eps}\rb{f / \eps}}$ iterations, where $b$ is the constant defined in \cref{eq:define-p_i}.
    We split those $k$ iterations into multiple \emph{phases}, where Phase~$j$ consists of $r_j$ consecutive iterations of \cref{alg:f-apx}.
    We set the value of $r_j \le \log n$ in the rest of this proof.    

    \paragraph*{Executing a single phase.}
    We follow the idea of collecting the \emph{relevant} $r_j$-hop neighborhood around each element and set, i.e., a vertex $v$ of the graph, which suffices to simulate the outcome of \cref{alg:MPC-set-cover-log-approx} for $v$ over that phase.
    This relevant $r_j$-hop neighborhood is collected via graph exponentiation in $O(\log r_j)$ MPC rounds.
    Let the current phase correspond to iterations $J = \{i, i + 1, \ldots, i - r_j + 1\}$.
    Then, only those sets sampled during the iterations $J$ and only those elements in $X_k$ sets are (potentially) relevant; those that got removed in the previous phases or would be sampled in the subsequent phases do not affect the state of $v$ in this phase.
    So, to determine whether a set $s$ is potentially relevant for this phase, we first draw randomness for $s$ for $\sample$ concerning probability $p_q$ for each $q \in J$. 
    If any of those drawn random bits shows that $s$ should be in the corresponding $\sample$, we consider $s$ as relevant. 
    We remark that this process simply fixes the randomness used by $\sample$. 
    Also, if it turns out that $s$ is relevant because it \emph{would be} sampled by $\sample$ for $p_q$, it does not mean that by iteration $q$ the set $s$ would still have the appropriate size.

    Next, we want to upper-bound the size of $r_j$-hop neighborhood of each relevant element and set. 
    First, recall that by our proof of \cref{lemma:MPC-log-space}, with high probability, a set has $O(\log n \cdot k / \eps^2) = O\rb{\log^2 n / \eps^3}$ elements in $\cup_i X_i$. (In the graph exponentiation, a set exponentiates only through those elements in its $\cup_i X_i$.)
    This upper-bounds the ``relevant'' degree of a set. Second, we upper-bound the ``relevant'' degree of an element.
    Let $S_j$ be the family of sets at the beginning of a given phase whose estimated sizes still did not fall below the threshold given by \cref{line:MPC-setcover-estimate-degree,line:MPC-log-n-sample-D} of \cref{alg:MPC-set-cover-log-approx}.
    With probability at most $\min\cb{1, \sum_{q \in J} p_q} \le \min\cb{1, r_j \cdot p_{i - r_j + 1}}$ a set in $S_j$ is relevant for this phase.
    If the maximum element degree in $S_j$ is $O\rb{\log n / p_i}$, then with a direct application of the Chernoff bound, we get that the maximum element degree in $\bigcup_{q \in J} \sample(S_j, p_q)$ is $O\rb{r_j \frac{p_{i - r_j + 1}}{p_{i}} \log n}$ with high probability. 
    Following our definition of $p_j$, see \cref{eq:define-p_i}, we have $O\rb{r_j \frac{p_{i - r_j + 1}}{p_{i}} \log n} = O\rb{r_j (1+\eps)^{r_j/b} \log n}$.

    Combining these two upper-bounds, whp, the number of relevant sets and elements in the $r_j$-hop neighborhood of a given set or an element is $O\rb{\rb{r_j (1+\eps)^{r_j/b} \log^3 n / \eps^3}^{r_j}}$, which by using $r_j \le \log n$ simplifies to $O\rb{\rb{(1+\eps)^{r_j/b} \log^4 n / \eps^3}^{r_j}}$.

    Recall that we assumed that the relevant degree of an element at the beginning of a phase is $O(1 / p_{i} \cdot \log n)$. We now justify this assumption.
    Regardless of the assumption, a direct application of the Chernoff bound implies that, with high probability, each element $e$ that has $\Omega(1 / p_{i - r_j + 1} \cdot \log n)$ neighboring sets in $S_j$, such that those sets pass the size threshold at the end of that phase, will have at least one element in $\sample(S_j, p_{i - r_j + 1})$.
    Therefore, $e$ will be removed at the end of this phase. We remark that our size estimates on \cref{line:MPC-setcover-estimate-degree} of \cref{alg:MPC-set-cover-log-approx} result in a monotonic element-degree with respect to $S_j$ in subsequent phases. This is helpful because once the relevant element degree falls below $\log n / p_{i-r_j+1}$, it will remain below that threshold for a fixed $j$.
    This implies that the maximum element degree in the next phase is $O(\log n / p_{i - r_j + 1})$, as we desire. 
    Since the assumption holds in the very first phase by our choice of $p$, then the assumption on the maximum set size holds in each phase.

    \paragraph*{Linear total space.}
        Let $H$ be the subgraph of $G$ consisting of all the relevant sets of $S_j$ and the non-removed elements at the beginning of phase~$j$ corresponding to iterations in $J$.
        Now, we want to upper-bound the number of the non-zero degree elements and the sets in $H$. 
        We then tie that upper bound to the size of $G$, providing the space budget per set/element.

        The expected number of relevant sets is $O(r_j p_{i - r_j + 1} \cdot |S_j|)$. By the proof of \cref{lem:logn-properties-MPC}, the sizes of the sets in $S_j$ at the beginning of phase~$j$ are $\Theta(\Delta_j)$, for some $\Delta_j$. 
        Therefore, $G$ has size $\Omega\rb{\Delta_j \cdot |S_j|}$. This implies we have at least $1/(r_j p_{i - r_j + 1})$ space budget per relevant set.
        
        To provide the space budget per non-zero degree element in $H$, we think of $H$ as a hypergraph in which a set is a hyper-edge connecting the elements it contains.
        Therefore, by \cref{lemma:hypergraph-sparsification}, the number of non-zero degree sets in $H$ is $O(r_j p_{i - r_j + 1} \cdot \Delta_j \cdot |S_j|)$.

        Since $G$ has size $\Omega(\Delta_j \cdot |S_j|)$, our algorithm can assign $1/(r_j p_{i - r_j + 1})$ space to each relevant set and non-zero degree element in phase $j$ while assuring that in expectation the total used space is linear in the input size. 
        Based on this, we compute the value of $r_j$.

        From our discussion on the relevant $r_j$-hop neighborhood and the fact that each relevant set and element has the space budget of $1/(r_j p_{i - r_j + 1})$, our goal is to choose $r_j$ such that
        \begin{equation}\label{eq:condition-neighborhood-log-delta}
            O\rb{\rb{(1+\eps)^{r_j/b} \frac{\log^4 n}{\eps^3}}^{r_j}} \le \frac{1}{r_j p_{i - r_j + 1}},
        \end{equation}
        where $b$ is the constant defined in \cref{eq:define-p_i}.
        
        Our final round complexity will be tied to the drop of the maximum element degree. 
        So, it is convenient to rewrite the condition \cref{eq:condition-neighborhood-log-delta}.
        Let $\tau_j$ be the upper bound on the maximum element degree at the beginning of phase $j$. 
        Recall that we showed that with high probability $\tau_j = O(1 / p_{i} \cdot \log n)$.
        This enables us to rewrite \cref{eq:condition-neighborhood-log-delta} as
        \begin{equation}\label{eq:condition-neighborhood-stronger-log-delta}
            O\rb{\rb{(1+\eps)^{r_j/b} \frac{\log^4 n}{\eps^3}}^{r_j}} \le \frac{\tau_j}{r_j (1+\eps)^{r_j/b} \log n},
        \end{equation}
        where we used that $p_{i - r_j + 1} = O((1 + \eps)^{r_j / b} \cdot p_{i})$. 

        The analysis of this is now identical to that presented in the proof of \cref{thm:f-approx-MPC} for $f = \frac{\log^2 n}{\eps^3}$.
\end{proof}

\subsection{\cref{alg:approx-matching} in the sub-linear space regime}
\label{sec:matching-sub-linear}
We observe that \cref{alg:approx-matching} is \cref{alg:f-apx} with an additional step of removing non-isolated edges, which can be easily implemented in $O(1)$ MPC rounds.
Hence, all but the approximation analysis we provided in \cref{sec:f-set-cover-sub-linear} carries over to \cref{alg:approx-matching}. The main difference is that the approximation guarantee in \cref{sec:f-set-cover-sub-linear} uses \cref{lem:main-sampling} while, in this section, we use \cref{lemma:approx-matching}. Recall that \cref{lemma:approx-matching} states that \cref{alg:approx-matching} outputs a $(1 - \eps h) / h$ approximation. Since our goal is to output a $(1 - \eps') / h$ approximation, we invoke \cref{alg:approx-matching} with $\eps = \eps' / h$. This, together with the analysis given in \cref{sec:f-set-cover-sub-linear}, yields the following claim. (The parameter $h$ in \cref{thm:hypermatching-MPC} corresponds to $f$ in \cref{thm:f-approx-MPC}.)

\begin{theorem}
\label{thm:hypermatching-MPC}
    Let $\eps \in (0, 1/2)$ be an absolute constant. There is an MPC algorithm that, in expectation, computes a $(1-\eps) / h$ approximate maximum matching in a rank $h$ hypergraph in the sub-linear space regime.
    This algorithm runs in
    \[
        \tO\rb{h \eps^{-1} \cdot \sqrt{\log \Delta} + h^2 \eps^{-2} + h^4 \eps^{-4} \cdot \log^2 \log n}
    \]
    MPC rounds, and each round, in expectation, uses total space of $O\rb{m}$.
\end{theorem}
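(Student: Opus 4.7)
\bigskip

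\noindent\textbf{Proof proposal for \cref{thm:hypermatching-MPC}.}
The plan is to reduce this theorem to the already-proved \cref{thm:f-approx-MPC} by exploiting the observation (stated in the paragraph opening \cref{sec:matching-sub-linear}) that \cref{alg:approx-matching} is structurally identical to \cref{alg:f-apx}, with only one additional post-processing step: filter the accumulated set $C$ down to the subset of sampled edges that are pairwise disjoint. First I will note that the simulation framework of \cref{thm:f-approx-MPC} (randomness fixing via buckets $B_i$ in the spirit of \cref{alg:f-apx-fixed}, phase-by-phase round compression, graph-exponentiation on $r_j$-hop neighborhoods, the linear-total-space case analysis via \cref{lemma:hypergraph-sparsification}, and the degree-drop schedule that bounds $r_j$) applies verbatim in the hypergraph setting, treating each hyperedge as an ``element'' and each vertex as a ``set'' of rank $h$. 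The only part of that simulation whose analysis references the structural parameter $f$ is the round-complexity bookkeeping, where $f$ controls the branching factor in the $r_j$-hop neighborhoods. Substituting $h$ for $f$ throughout yields the same three cases (small $\tau_j$, rank-dominated, and degree-dominated) with identical bounds.

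Next I would handle the final filter. Once the MPC simulation of the main loop of \cref{alg:approx-matching} terminates and produces the multiset $C$ of sampled hyperedges together with their endpoints, selecting those $e \in C$ that are independent (i.e., share no endpoint with any other edge in $C$) is a simple primitive: sort incidences by vertex, detect vertices incident to more than one sampled edge, and mark the corresponding edges as non-independent. This is implementable in $O(1)$ MPC rounds using standard sorting primitives and does not affect the total-space bound.

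For the approximation guarantee, I will apply \cref{lemma:approx-matching}, which states that \cref{alg:approx-matching} run with parameter $\epsilon$ on a rank-$h$ hypergraph produces, in expectation, a matching of approximation ratio $(1-6h\epsilon)/h$. To achieve ratio $(1-\epsilon)/h$ for a target parameter $\epsilon$, I invoke the algorithm with parameter $\epsilon' \eqdef \epsilon/(6h)$; this preserves the qualitative $\Theta(1)$ bound on the approximation while reparametrizing every $\epsilon$-dependent round-complexity term. Plugging $\epsilon \mapsto \epsilon/h$ and $f \mapsto h$ into the round bound of \cref{thm:f-approx-MPC},
\[
\tO\!\left(\epsilon^{-1}\sqrt{\log\Delta} + \epsilon^{-2}\log f + \epsilon^{-4}\log^2\log n\right),
\]
gives exactly $\tO(h\epsilon^{-1}\sqrt{\log\Delta} + h^2\epsilon^{-2} + h^4\epsilon^{-4}\log^2\log n)$, where the stray $\log h$ factor from the middle term is absorbed into $\tO(\cdot)$.

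The main thing to verify carefully is that the linear-total-space argument still goes through when ``sets'' are hyperedges of rank $h$: specifically, the hypergraph-sparsification bound from \cref{lemma:hypergraph-sparsification} must be applied with the correct direction (vertices as ``vertices'', hyperedges as ``hyperedges'' of average rank $\bar h \le h$), so that the expected number of non-isolated vertices touched by sampled hyperedges in a phase is still $O(r_j p_{i-r_j+1} \cdot h \cdot |E_j|)$, which, against an input of size $\Omega(h \cdot |E_j|)$, gives each relevant vertex and edge a space budget of $1/(r_j p_{i-r_j+1})$. Once this is in place, the case-by-case derivation of $r_j$ is unchanged and the stated round complexity follows, completing the proof.
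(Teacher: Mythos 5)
Your proposal is correct and follows essentially the same route as the paper: the paper likewise treats \cref{alg:approx-matching} as \cref{alg:f-apx} plus an $O(1)$-round independence filter, reuses the simulation and space analysis of \cref{thm:f-approx-MPC} with $h$ playing the role of $f$, and obtains the approximation by invoking \cref{lemma:approx-matching} with $\eps$ scaled down by a factor of $h$. Your slightly more careful choice $\eps' = \eps/(6h)$ and the explicit check of \cref{lemma:hypergraph-sparsification}'s orientation are fine refinements of the same argument.
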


\subsection{From expectation to high probability}
\label{sec:expectation-to-whp}
Some of the results in this section are presented so that the approximation and the total space usage are given in expectation.
This guarantee can be translated to ``with high probability'' using standard techniques, i.e., execute $O(\log n / \eps)$ instances in parallel, and, among the ones that do not overuse the total space by a lot, output the one with the best objective.
We now provide details.

Each of the algorithms outputs a $1+\eps$ approximation in expectation.
By Markov's inequality, with probability at most $1/(1+\eps) \le 1 - \eps/2$, the corresponding approximation is worse than $(1 + \eps)^2 \le 1 + 3\eps$.

Similarly, an algorithm uses $O(m)$ total space in expectation in each of the $t \le \poly \log n$ rounds. Hence, by Markov's inequality again, it uses more than $O(3 \cdot t m / \eps)$ total space in any round with probability $\eps / 3$ at most. Hence, with probability at least $\eps/2 - \eps/3 = \eps/6$ an algorithm both outputs $1+3\eps$ approximation and uses at most $\tO(m/\eps)$ total space.

Hence, with a high probability, at least one of $O(\log n / \eps)$ executed instances has all desired properties.

\section{PRAM Algorithms}\label{sec:pram}

Next, we discuss how to implement \cref{alg:f-apx} and \cref{alg:log-apx} in the concurrent-read concurrent write (CRCW) PRAM model (see \cref{sec:prelim}).
Most steps in our algorithms work in the weaker EREW model, with the only steps requiring concurrent reads or writes being those that broadcast coverage information from elements to sets (or vice versa).

\subsection{\cref{alg:f-apx} in the CRCW PRAM}

\begin{theorem}
\label{thm:f-approx-PRAM}
Let $\epsilon \in (0, 1/2)$ be an absolute constant. There is an algorithm that, in expectation, computes an $(1+\epsilon)f$-approximate set cover, where each element appears in at most $f$ sets. The algorithm deterministically runs in $O(n + m)$ work and $O(\log n)$ depth on the CRCW PRAM.
\end{theorem}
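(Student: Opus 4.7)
The plan is to implement \cref{alg:f-apx-fixed} (the pre-sampled, equivalent reformulation of \cref{alg:f-apx}) in the CRCW PRAM, invoked with internal parameter $\epsilon/4$. Its expected approximation guarantee then follows from \cref{lem:f-apx}, so only work and depth need to be argued. Since $\epsilon$ is constant and $\Delta \le n$, the loop count $k = b\lceil \log_{1+\epsilon}(\Delta/\epsilon)\rceil$ is $O(\log n)$.

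First I would precompute Walker's alias structure (\cref{lem:aliasmethod}) for the distribution $\{\tilde p_i\}_{i=0}^{k}$ in $O(\log n)$ sequential work, then in parallel draw $X_t$ for every $t\in T$ in $O(1)$ depth and $O(n)$ total work, and finally apply a parallel integer sort on the keys $X_t \in [0,k]$ to produce contiguous buckets $B_0,\ldots,B_k$ in $O(n)$ work and $O(\log n)$ depth. The main loop $i = k, k-1, \ldots, 0$ would run in $O(1)$ depth per iteration as follows: allocate one processor to each edge $(t,s)$ with $t \in B_i$; the processor halts if $t$ is already covered; otherwise it performs an arbitrary-write CRCW attempt on a ``claim'' field at $s$; the unique winner (identified by reading the field back) adds $s$ to $C$ and in turn spawns $|N(s)|$ processors, one per edge of $s$, each of which marks its incident element as covered. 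Every sub-step is constant depth on CRCW, so the full loop contributes $O(\log n)$ depth, matching the preprocessing.

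Work-efficiency follows by a two-sided amortization. The work in iteration $i$ is $O\bigl(\sum_{t\in B_i,\, t \text{ uncovered}} |N(t)|\bigr) + O\bigl(\sum_{s \text{ first added in iter.\ }i} |N(s)|\bigr)$. Summing the first term over $i$: each element lies in exactly one bucket and is processed uncovered at most once, contributing $|N(t)|$ work at most once, for an overall total of $O(m)$. Summing the second term: once $s$ enters $C$ it is skipped thereafter, so each set contributes $|N(s)|$ at most once, again $O(m)$. Combined with the $O(n)$ cost of preprocessing and bucketing, the grand total is $O(n+m)$. The main obstacle is precisely the second term: when many unmarked elements in the same bucket simultaneously point to the same set $s$, naively expanding $N(s)$ from each of them would multiply the work and destroy work-efficiency. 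The arbitrary-CRCW claim mechanism resolves this in constant depth by electing a single responsible processor per newly added set, so the expansion of $N(s)$ happens exactly once throughout the entire algorithm; the losing processors detect their loss with a single read and halt. This gives the $O(n+m)$ deterministic work and $O(\log n)$ deterministic depth bounds claimed in the theorem, independent of the random choices driving the expected approximation guarantee.
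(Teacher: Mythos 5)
Your proposal is correct and follows essentially the same route as the paper: fix the randomness upfront with the alias method, bucket elements by their sampled step via a parallel counting/integer sort, execute each of the $O(\log \Delta)$ steps in $O(1)$ depth using concurrent writes, and amortize the work so that each element's and each chosen set's neighborhood is expanded at most once, giving $O(n+m)$ work and $O(\log n)$ depth with the approximation guarantee inherited from \cref{lem:f-apx}. The only cosmetic difference is that you elect a unique winner per newly chosen set via an explicit arbitrary-write claim field, whereas the paper marks chosen sets with a bit and then maps once over the sets chosen in the round; the two mechanisms serve the same purpose.
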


We first discuss how to fix the random choices made by our algorithms upfront in the PRAM model.
We note the Alias method can be parallelized~\cite{hubschle2022parallel}, although methods with depth linear in the number of steps ($k$) suffice for our algorithms since our PRAM implementations use at least unit depth per step and there are $k$ steps total.
Since the Alias method summarized in \cref{lem:aliasmethod} runs in $O(k)$ work and depth, we can use any efficient sequential Alias method algorithm, e.g., that of Walker~\cite{walker1974new} or Vose~\cite{vose1991linear} in our parallel algorithms.
The output is a query data structure that can then be queried in $O(1)$ work and depth to sample $X_t$, the round when element $t \in T$ will be sampled.

Our PRAM implementation of \cref{alg:f-apx} works as follows.
We start by precomputing the $X_t$ values, which can be done using $O(n)$ work and $O(k)$ depth.
We then use parallel integer sorting to group all elements processed in a particular round $i$ together.
The sorting can be done in $O(n)$ work and $O(\log n + k)$ depth using a parallel counting sort~\cite{rajasekaran1989optimal, blelloch23notes}.
We maintain an array of flags $B$ where $B_t = \mathsf{true}$ implies that element $t$ is covered.

In the $i$-th round of the algorithm, we process the set $D_i$ of all elements that are sampled in round $i$, which are stored consecutively after performing the integer sort.
Some elements in $D_i$ may be already covered, so we first check $B$ for each element in $D_i$ and mark any elements that are already covered.
For each element $t$ in $D_i$ that is not yet covered, we process all sets containing $t$ (none of which are already chosen) and mark this set as chosen by setting a bit for this set using concurrent writes.
We then map over all of the sets chosen in this round and mark all of their incident elements as covered in $B$ using concurrent writes.

The $i$-th round can therefore be implemented in $O(|D_i| + \sum_{t \in D_i} |N(t)|)$ work and $O(1)$ depth.
Summing over all steps, the work is $O(n + m)$ and the depth is $O(\log \Delta)$ since there are $k = b\lceil \log_{1+\epsilon}(\Delta / \epsilon)\rceil \in O(\log \Delta)$ steps total.
Overall, the depth of the algorithm is $O(\log n + \log \Delta) = O(\log n)$.

\subsection{\cref{alg:log-apx} in the CRCW PRAM}

Next, we show how to efficiently implement \cref{alg:log-apx} in the CRCW PRAM. 
Compared with other bucketing-based \setcover algorithms~\cite{berger1994efficient, blelloch2011linear}, our algorithm achieves work-efficiency and low depth while being significantly simpler.

\begin{theorem}\label{thm:PRAM-log-approx-setcover}
Let $\epsilon \in (0, 1/2)$ be an absolute constant. There is an algorithm that, in expectation, computes an $(1+\epsilon)\hdelta$-approximate set cover. 
The algorithm deterministically runs in $O(n + m)$ work and $O(\log^2 n \log\log n)$ depth on the CRCW PRAM.
\end{theorem}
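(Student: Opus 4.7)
The plan is to parallelize \cref{alg:log-apx} by processing the outer loop over $j$ sequentially and, within each outer iteration, applying the randomness-fixing transformation of \cref{sec:fixing} to the inner loop. Specifically, for a fixed $j$, the Alias method of \cref{lem:aliasmethod} assigns in $O(|S|)$ work and $O(\log f)$ depth a step $X_s \in [0,k]$ to each remaining set $s$, indicating the unique step at which $s$ will be considered for inclusion. I then process steps $i = k$ down to $0$ sequentially. Because set sizes decrease as elements are covered within the inner loop, exactly recomputing sizes at every step is too expensive in depth, so I instead use approximate sizes.

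At each step $i$, for each set $s$ in the batch assigned to step $i$, I compute a $(1+\delta)$-approximation to the current residual size $|N(s)|$ using the approximate prefix sums primitive listed in the preliminaries, which runs in $O(|N(s)|)$ work and $O(\log\log n)$ depth. I add $s$ to the output batch $D$ if this approximate size is at least $(1+\epsilon)^j / (1+\delta)$, and then mark all newly covered elements in a global bit array via CRCW concurrent writes in $O(1)$ depth. The total depth is $O(\log\Delta)$ outer iterations, each contributing $O(\log f)$ inner steps of $O(\log\log n)$ depth apiece, for a total of $O(\log^2 n \log \log n)$. For work, I charge the approximate-size computation of $s$ at step $(j,i)$ to the edges incident to $s$ at that moment; each set is sampled at most once per outer iteration by the Alias construction, and a fixed edge can appear in such a size computation only a bounded number of times before either the incident set is added to the solution or enough of its neighboring elements have been covered to drop the set below the next bucket threshold, yielding a telescoping bound of $O(n+m)$ total work.

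The main obstacle is twofold. First, the work analysis requires rigorously showing that the charging telescopes, which combines the bucket structure of \cref{alg:log-apx} with the single-sample-per-outer-iteration property of the Alias construction to bound how often a fixed edge can be touched across all outer iterations combined. Second, the approximation guarantee must be re-established in the presence of $(1+\delta)$-approximate sizes: \cref{lemma:set-cover-log-approx} relies on property (a) of \cref{lem:logn-properties}, which states that each added set has current size within a $(1+\epsilon)$ factor of the maximum, and with approximate sizes this slackens to a factor of roughly $(1+\epsilon)(1+\delta)^2$. Choosing $\delta = \Theta(\epsilon)$ and re-running the dual-fitting argument of \cref{lemma:set-cover-log-approx} yields a $(1+O(\epsilon))\hdelta$-approximation in expectation, in direct analogy with the MPC adaptation in \cref{lem:logn-properties-MPC}, where the slack from Chernoff concentration was absorbed by inflating the approximation constant and then rescaling $\epsilon$.
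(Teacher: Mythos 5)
Your high-level plan (fix the randomness per outer round via the Alias method, replace exact sizes by $(1+\delta)$-approximate prefix-sum estimates in $O(\log\log n)$ depth, and absorb the extra $(1+\delta)$ slack into the dual-fitting analysis with $\delta=\Theta(\epsilon)$) is the same as the paper's, and your depth and approximation accounting are essentially correct (the paper's \cref{lem:logn-approxsize} is exactly your slackened version of \cref{lem:logn-properties}). The genuine gap is in the work bound. As described, in \emph{every} outer iteration $j$ you run the Alias assignment over ``each remaining set'' and compute an approximate size for each such set at its assigned step. The assignment alone costs $\Omega(|S|)$ work per outer round, i.e.\ $\Omega(|S|\log\Delta)$ in total, which is not $O(n+m)$: with one set of size $n$ and $n$ singleton sets we have $m=O(n)$ but $|S|\log\Delta=\Omega(n\log n)$. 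Worse, your edge-charging claim does not hold in this scheme: a small set whose elements are never covered and which is never chosen until its own bucket's round is nevertheless scanned (step assignment plus size estimate) in every earlier outer round, so a fixed edge can be touched $\Omega(\log\Delta)$ times; and if the size estimate scans the set's original adjacency list rather than a packed one, the work is $\Omega(m\log\Delta)$. If instead you intend to restrict round $j$ to sets currently above the threshold $(1+\epsilon)^j$, you face a chicken-and-egg problem: determining which sets qualify at the start of each round requires knowing all current sizes, which is the very computation you cannot afford each round.

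The missing idea, which is the heart of the paper's work-efficiency argument, is the bucketing/rebucketing structure: bucket sets once by their \emph{initial} sizes (parallel counting sort), so a set $s$ first participates in round $\lfloor\log_{1+\epsilon}|N(s)|\rfloor$; at the start of a round, \emph{pack out} already-covered elements of the participating sets; run the Alias assignment and the per-step approximate size tests only on those sets; and when a set fails its size test at its step, mark it and rebucket it at the end of the round into the bucket matching its new (smaller) size. Since sizes only decrease, this is sound, and each time a set is re-processed its residual size has dropped by a $(1+\epsilon)$ factor, so the work charged to a set over all rounds it appears in is a geometric sum $\sum_{j'\in P(s)}(1+\epsilon)^{j'}=O(|N(s)|)$, giving $O(n+m)$ total. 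Without this lazy rebucketing (or an equivalent mechanism), the telescoping you appeal to does not materialize.
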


Call each iteration of the loop on \cref{line:log-n-alg-loop-j} as a round, and each iteration of the inner loop on \cref{line:log-n-alg-loop-i} a step.
We first bucket the sets based on their initial degrees using a parallel counting sort to determine the first round in which they will be considered.
Let $S_j$ denote the set of sets that start in the $j$-th round, i.e., sets that have size $|N(s)| \geq (1+\epsilon)^{j}$.

Our algorithm starts a round $j$ by first {\em packing out} the sets in $S_j$, the set of sets considered in this round.
Packing is necessary for work-efficiency, as the algorithm may process the same set across multiple rounds, and repeatedly processing both uncovered and already covered elements in each round will lead to $O(m \log n)$ work.
The {\em pack} operation removes (or packs out) all already covered elements for a set, which can be implemented in $O(|N(s)|)$ work and $O(\log |N(s)|)$ depth for a set $s$ using parallel prefix sum.

\paragraph*{Size Estimation.}
Consider a round $j$, and the set of sets $S_j$ considered in this round. 
The round needs to run a sequence of steps that sample all sets in this round that have size at least $(1+\epsilon)^{j}$.
The difficulty is that after some number of steps, a set $s \in S_j$ can drop below the size constraint due to some of their elements becoming covered by other sets.
Our implementation, therefore, needs some way to compute (or estimate) the size of a set as the elements are gradually covered in very low depth (ideally $O(1)$ depth).
We need to implement the size estimation accurately enough, e.g., within a $(1+\delta)$-factor; otherwise, we risk sacrificing accuracy.
We also need to rebucket sets that are kicked out of this round to later rounds, which we will deal with later.
Note that this size-estimation problem was not addressed by earlier bucketing-based \setcover algorithms~\cite{berger1994efficient, blelloch2011linear}, since the other operations applied in an inner step (e.g., an iteration of \manis{}) such as packing or filtering a graph representation of the \setcover instance already require logarithmic depth. 

We show how to accurately estimate the size of each set in $O(\log \log n)$ depth.
If the sets all have size $O(\log n)$, we can simply sum over their incident elements in $O(\log \log n)$ depth.
If the sets are larger, computing the exact sum is infeasible due to known lower bounds on EREW PRAMs stating that any algorithm requires $\Omega(\log n / \log\log n)$ depth to solve majority~\cite{beame1989optimal}.
Instead, we use a parallel approximate prefix-sum algorithm~\cite{goldberg1995optimal} summarized in the following lemma:
\begin{lemma}[\cite{goldberg1995optimal}]
\label{lem:sizeest}
Given a sequence of integers $A = a_1, a_2, \ldots, a_n$, a sequence $B = b_1, b_2, \ldots, b_i$ is an approximate prefix-sum of $A$ if for all $j \in [n]$, $\sum_{j=1}^{i} a_j \leq b_i \leq (1+\delta)\sum_{j=1}^{i} a_j$.
The approximate prefix sums of a sequence $A$ can be computed in $O(n)$ work and $O(\log\log n)$ depth on the CRCW PRAM.
\end{lemma}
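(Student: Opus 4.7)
The plan is to build a computation tree of depth $L = O(\log \log n)$ with linear total work, where each internal node stores an approximate sum of the leaves in its subtree, and then extract prefix sums via a top-down pass. The overall $(1+\delta)$ error budget is distributed across the $L$ levels so that each level introduces at most a $(1 + \delta/L)$-factor of multiplicative slack; compounding $L$ such slacks yields at most $(1+\delta)$ overall.

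First I would pick branching factors $B_1, B_2, \ldots, B_L$ growing doubly exponentially so that $\prod_\ell B_\ell = n$ and $L = O(\log\log n)$; for instance $B_\ell = 2^{2^{\ell-1}}$ works until the cumulative product saturates $n$. The leaves hold the input $a_1, \ldots, a_n$. At each internal node with $B$ children whose approximate subtree-sums are $s_1, \ldots, s_B$, the goal is to produce a value $\hat s$ with $\sum_i s_i \le \hat s \le (1 + \delta/L)\sum_i s_i$ in $O(1)$ depth and $O(B)$ work. Summing over the tree, the bottom-up pass costs $O(n)$ work and $O(L) = O(\log\log n)$ depth. A second top-down pass then propagates prefix information: at each node one computes approximate prefix sums of its $B$ children's subtree-sums using the same constant-depth primitive, and pushes the resulting cumulative prefix into each child's subtree.

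For the core primitive --- approximate sum of $B$ nonnegative reals within factor $(1+\delta/L)$ in $O(1)$ CRCW depth and $O(B)$ work --- I would round each $s_i$ to the nearest power of $(1+\delta/L)$, so the rounded values live in $O(\log(nU) \cdot L/\delta)$ buckets, where $U$ bounds the input magnitude. Then use a constant-depth CRCW approximate-counting primitive on a per-bucket register to estimate how many of the $s_i$ land in each bucket, and combine the bucket counts with their representatives. Rounding costs a $(1+\delta/L)$ slack; tuning the counting primitive's accuracy to the same slack keeps the per-level error inside budget.

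The main obstacle is constructing the constant-depth approximate counter across $B$ items within $O(B)$ work: naive concurrent counting risks write contention and arbitrary loss. One resolves this by exploiting that we only need an approximation, so concurrent writers can race on a logarithmically small set of counter registers (indexed by approximate count magnitude), with ties resolved by any CRCW write rule. A secondary subtlety is that the top-down prefix pass must avoid subtracting approximate quantities, since differences can amplify error arbitrarily; restricting the algorithm to running prefix sums of nonnegatives sidesteps this cleanly and preserves the $(1+\delta)$ guarantee.
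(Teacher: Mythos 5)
This lemma is not proved in the paper at all: it is imported verbatim as a black-box citation of Goldberg and Zwick~\cite{goldberg1995optimal}, and the paper only uses its statement (linear work, $O(\log\log n)$ depth, one-sided $(1+\delta)$ over-approximation). So there is no in-paper argument to compare yours against; what you have written is an attempt to reprove the cited result, and as such it has a genuine gap.

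The gap is the core primitive you assume at every tree node: a \emph{deterministic}, $O(1)$-depth, $O(B)$-work CRCW procedure that sums (or even approximately counts) $B$ nonnegative values to within a $(1+\delta/L)$ factor with $L=\Theta(\log\log n)$. Everything else in your outline (doubly-exponential branching factors, error budget $\delta/L$ per level, a subtraction-free top-down pass) is routine bookkeeping; this primitive \emph{is} the theorem, and your sketch of it does not work. Concurrent writes on a CRCW PRAM give you, in constant time, essentially OR/emptiness information per bucket, not multiplicities: having writers ``race on counter registers indexed by approximate count magnitude'' is circular, since a writer would need to know the (approximate) count in order to choose which register to write. Moreover, even granting per-bucket approximate counts, you must then combine $\Theta(\log(nU)\cdot L/\delta)$ weighted bucket contributions into one value, which is again an approximate-sum instance of non-constant size, so the reduction is circular a second time. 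Deterministic constant-time approximate counting is known only with polynomially many processors (via $AC^0$-type constructions), not with linear work, which is precisely why the actual Goldberg--Zwick algorithm is a carefully engineered $O(\log\log n)$-time construction rather than ``constant depth per level times $\log\log n$ levels.'' If each level of your tree instead costs $\omega(1)$ depth, your total depth exceeds $O(\log\log n)$ and the claimed bound is not established. To repair this you would either have to supply the missing primitive (effectively redoing~\cite{goldberg1995optimal}) or simply cite the result, as the paper does.
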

We can therefore compute approximate prefix sums over the elements for each set to approximately estimate its size in $O(|N(s)|)$ work and $O(\log \log n)$ depth.
Note that the approximate sizes estimated by \cref{lem:sizeest} are guaranteed to be over-approximations, i.e., for a set $s$, we have that $|N(s)| \leq \tilde{d}(s) \leq (1+\delta)|N(s)$ where $\tilde{d}(s)$ is its approximate size.

\begin{lemma}\label{lem:logn-approxsize}
Consider running Algorithm~\ref{alg:log-apx} where instead of using the exact size of every set $s \in S$, we use the approximate sizes, $\tilde{d}(s)$ where $|N(s)| \leq \tilde{d}(s) \leq (1+\delta)|N(s)|$. Then analogously to \cref{lem:logn-properties}, we have that (a) the residual size of each set in $D$ is at most $(1+\epsilon)(1+\delta)$ smaller than the maximum residual size of any set at that moment, and (b) for each newly covered element $t$, the expected number of sets in a batch that cover it at most $(1+4\epsilon)$.
\end{lemma}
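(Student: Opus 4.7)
The plan is to follow the template of the proof of \cref{lem:logn-properties} almost verbatim, substituting the approximate sizes $\tilde{d}(s)$ for $|N(s)|$ and tracking where the $(1+\delta)$ slack from over-approximation appears. Both parts will come from elementary bucketing observations plus the same invocation of \cref{lem:main-sampling} used in the original.

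For part (a), I will use the fact that $\tilde{d}(s)$ is an over-estimate, i.e., $|N(s)| \leq \tilde{d}(s) \leq (1+\delta)|N(s)|$. At the end of an iteration of the outer loop indexed by $j$ the inner loop reaches $i=0$ where $p_0=1$, so every set with $\tilde{d}(s) \geq (1+\epsilon)^j$ is sampled and removed. Hence at the start of round $j$ every remaining set $s$ satisfies $\tilde{d}(s) < (1+\epsilon)^{j+1}$, and since $|N(s)| \leq \tilde{d}(s)$ this gives $|N(s)| < (1+\epsilon)^{j+1}$. On the other hand, any set $s$ that is placed into a batch $D$ during round $j$ passes the threshold test $\tilde{d}(s) \geq (1+\epsilon)^j$, which combined with $\tilde{d}(s) \leq (1+\delta)|N(s)|$ yields $|N(s)| \geq (1+\epsilon)^j / (1+\delta)$. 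The ratio between the largest residual size of any set at the moment $D$ is chosen and the smallest residual size of a set actually placed into $D$ is therefore at most $(1+\epsilon)(1+\delta)$, as claimed.

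For part (b), I will fit the per-element sampling process into the \ssp{} framework of \cref{sec:sampling-lemma} and then invoke \cref{lem:main-sampling} exactly as in the proof of \cref{lem:logn-properties}. Fix an element $t$ that becomes newly covered by some batch $D$, and for each inner step $i \in \{k, k-1, \ldots, 0\}$ let $A_i$ be the subset of sets containing $t$ that are still present in $G$ and satisfy $\tilde{d}(s) \geq (1+\epsilon)^j$ at that step. The sequence $A_k \supseteq A_{k-1} \supseteq \cdots \supseteq A_0$ is monotone: sets leave $A_i$ only when they are removed from $G$ (because they were previously sampled) or because covering other elements shrinks their true degree, which can only decrease $\tilde{d}(s)$ (the algorithm's size estimate is monotone non-increasing across steps). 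The independent sampling $R_i = \sample(A_i, p_i)$ matches the \ssp{} exactly, with $n_k = |A_k| \leq f$. Since $k = b\lceil \log_{1+\epsilon}(f/\epsilon)\rceil$, the initial condition $p_k \cdot n_k \leq \epsilon$ needed by the lemma holds. Applying \cref{lem:main-sampling} gives $\E{|R_z|} \leq 1 + 4\epsilon$, and $|R_z|$ is precisely the number of sets in the batch that cover $t$.

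The main subtlety is purely bookkeeping: one must check that, even with the approximate size test, the candidate sets containing $t$ still form a monotone decreasing family indexed consistently with the sampling probabilities $p_i$. That reduces to the observation that, since every element removed from $G$ stays removed and since $\tilde{d}(s)$ is itself monotone non-increasing as elements get covered, a set that fails the threshold $\tilde{d}(s) \geq (1+\epsilon)^j$ at some step will never re-enter the candidate pool within the same round. Given this, both claims reduce to the same arguments as in the proof of \cref{lem:logn-properties}, with the only new ingredient being the $(1+\delta)$ slack in (a).
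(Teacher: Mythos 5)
Your proof is correct and follows essentially the same route as the paper's, which is just a condensed version of your two observations: part (a) from the bounds $(1+\epsilon)^{j+1}$ on the largest remaining (true) size and $(1+\epsilon)^{j}/(1+\delta)$ on the true size of any set passing the approximate threshold, and part (b) by invoking \cref{lem:main-sampling}. Your extra monotonicity bookkeeping for the \ssp{} pool is not literally implied by the hypothesis $|N(s)| \leq \tilde{d}(s) \leq (1+\delta)|N(s)|$ alone, but it is exactly what the paper's implementations enforce (running minima of the estimates in \cref{alg:MPC-set-cover-log-approx}, a single size estimate per set in the PRAM version), so it is a harmless clarification rather than a deviation.
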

\begin{proof}
Claim (a) follows since for $i = 0$ and the current value of $j$, the maximum set has size at most $(1+\epsilon)^{j+1}$, and the smallest set that can be placed in this bucket has size at least $(1+\epsilon)^{j} / (1+\delta)$. Claim (b) follows from \cref{lem:main-sampling}.
\end{proof}

An analogous argument to \cref{lemma:set-cover-log-approx} shows the following approximation guarantees on our implementation:
\begin{corollary}
Our implementation of \cref{alg:log-apx} in the CRCW PRAM computes an $(1+\epsilon)(1+\delta)(1+4\epsilon) \hdelta$-approximate (in expectation) solution to \setcover.
\end{corollary}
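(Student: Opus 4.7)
The plan is to mirror the dual-fitting argument of \cref{lemma:set-cover-log-approx} verbatim, tracking through an additional $(1+\delta)$ slack factor that arises from replacing exact set sizes with their $(1+\delta)$-approximations. Since \cref{lem:logn-approxsize} already packages the effect of the approximate sizes into an analog of \cref{lem:logn-properties}, the only real work is bookkeeping: showing that the dual-feasibility computation absorbs the new factor cleanly and that everything still rescales to a valid dual solution.

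Concretely, I would begin by defining element prices exactly as in the proof of \cref{lemma:set-cover-log-approx}: initialize $p_t = 0$ for all $t \in T$, and whenever a set $s$ is added to the solution covering $m$ currently-uncovered elements $t_1, \ldots, t_m$, increment each of their prices by $1/m$. The total price equals the size of the produced cover. Then, fixing a set $s$ and ordering its elements $t_1, \ldots, t_{|s|}$ by the time at which they are covered, I would bound $\E{p_i}$ as in \cref{cl:prices}, but with one substitution: when $t_i$ is covered, there are at least $|s|-i+1$ uncovered elements of $s$, and by claim (a) of \cref{lem:logn-approxsize} the set chosen to cover $t_i$ has residual size at least $(|s|-i+1)/((1+\epsilon)(1+\delta))$ (instead of $(|s|-i+1)/(1+\epsilon)$). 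Consequently $p_i \le X_i \cdot (1+\epsilon)(1+\delta)/(|s|-i+1)$, where $X_i$ counts the sets in the batch covering $t_i$, and by claim (b) of \cref{lem:logn-approxsize} we still have $\E{X_i} \le 1+4\epsilon$. Summing gives
\[
\sum_{i=1}^{|s|} \E{p_i} \le (1+4\epsilon)(1+\epsilon)(1+\delta)\, H_{|s|} \le (1+4\epsilon)(1+\epsilon)(1+\delta)\, H_{\Delta}.
\]

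Next, I would define dual variables $y_t = \E{p_t}/\bigl((1+4\epsilon)(1+\epsilon)(1+\delta)\, H_{\Delta}\bigr)$. The above display immediately yields $\sum_{t \in s} y_t \le 1$ for every set $s$ (and $y_t \ge 0$ is obvious), so these $y_t$ form a feasible dual solution in the LP of \cref{lp}. The expected size of the computed cover is $\sum_{t \in T} \E{p_t} = (1+4\epsilon)(1+\epsilon)(1+\delta)\, H_{\Delta} \cdot \sum_{t \in T} y_t$, so weak LP duality gives that this expected size is at most $(1+4\epsilon)(1+\epsilon)(1+\delta)\, H_{\Delta}$ times the optimal fractional (and therefore integral) cover size, which is the claimed approximation bound.

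There is no real obstacle here beyond careful tracking of the three multiplicative factors; the only place where the implementation differs from \cref{alg:log-apx} is in the size estimation, and the content of \cref{lem:logn-approxsize} was tailored precisely to preserve the two structural properties needed by the dual-fitting proof, with the single modification that the factor $(1+\epsilon)$ in property (a) becomes $(1+\epsilon)(1+\delta)$. Everything else, including the use of \cref{lem:main-sampling} to bound $\E{X_i}$ and the weak duality step, transfers without change.
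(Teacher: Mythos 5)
Your proposal is correct and is essentially the paper's own argument: the paper justifies this corollary only by saying it follows from ``an analogous argument to \cref{lemma:set-cover-log-approx}'', and your write-up is precisely that analogue, rerunning the price/dual-fitting proof with property (a) of \cref{lem:logn-approxsize} contributing the extra $(1+\delta)$ factor in the bound of \cref{cl:prices} and in the normalization of the dual variables. No gaps; the bookkeeping of the three factors and the weak-duality conclusion match the intended proof.
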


\paragraph*{Rebucketing Sets.}
Another question is how to deal with sets that drop below the size constraint, and should be rebucketed to a later round $j' < j$.
The number of such sets is at most $|S_j|$, and these sets must be rebucketed to rounds $[0, j)$ based on their new sizes.

We first provide more details on how the inner loop indexed by $i$ is implemented.
At the start of the round, we apply the Alias method (\cref{lem:aliasmethod}) to identify the step $i$ each set in $S_j$ gets sampled.
We then group the sets in $S_j$ based on their chosen steps using a parallel counting sort, similar to our implementation of Algorithm~\ref{alg:f-apx}.

At the start of each step, we estimate the size of each set that gets sampled in this step using approximate prefix-sums.
If a set in step $i$ no longer has sufficient size, we mark this set as needing rebucketing, but wait until the end of the round to rebucket it.
We mark the remaining sets in this step with a special flag to indicate that they are chosen in the set cover, and map over their incident elements to mark them all as covered.
At the end of a round, we collect all of the sets that were marked as needing rebucketing to obtain a set of sets $S'_j$ and rebucket them using a parallel counting sort.

\paragraph*{Parallel Cost in the CRCW PRAM.}
The initialization, which buckets the sets based on their initial degrees using parallel counting sort, can be implemented in $O(n)$ work and $O(\log n)$ depth. 

At the start of each round, we use the Alias method, which is called to partition the sets in $S_j$ into the $k$ steps with $k = O(\log f) = O(\log n)$.
We can use the sequential algorithm of Walker~\cite{walker1974new} or Vose~\cite{vose1991linear}, followed by a parallel counting sort to assign the sets in $S_j$ to the $k$ steps in $O(|S_j|)$ work and $O(\log n)$ depth.
The overall work of round $j$ is $O(|S_j| + \sum_{s \in S_j} |N(s)|)$ across all of its steps, since we process each set in $S_j$ in exactly one step, and in this step perform $O(|N(s)|)$ work on its elements.
Next, each step of a round uses $O(\log \log n)$ depth.
Thus, the overall work of a round is $O(|S_j|)$ and the overall depth is $O(\log n \log\log n)$.

One final issue is to account for the work of processing sets that are rebucketed. However, notice that any set that is rebucketed in a round $j$ initially had size at least $(1+\epsilon)^{j}$ at the time it was bucketed to round $j$, and subsequently fell below this size threshold.
Therefore, each time a set is rebucketed, its size must drop by at least a $(1+\epsilon)$ factor.
Let $P(s)$ denote the rounds that a set $s$ is processed in.
The largest value in $P(s)$ is $\lfloor \log_{1+\epsilon} |N(s)| \rfloor$ (i.e., the bucket corresponding to $s$'s initial size).
Summing over all rounds that a set $s$ is processed in gives a geometric sum $\sum_{j' \in P(s)} (1+\epsilon)^{j'}$, which evaluates to $O(|N(s)|)$ for any fixed constant $\epsilon$.

Therefore, the overall work of the algorithm is $O(n + m)$, and the overall depth of the algorithm is $O(\log^2 n \log\log n)$.

\paragraph*{Choice of Model and Other Parallel Models.}
We use the PRAM model to implement our algorithms to provide an accurate comparison with prior works on parallel \setcover, which all utilize the PRAM model.
However, most modern shared-memory parallel algorithms are designed in the Binary-Forking Model~\cite{blelloch2020optimal}, which is arguably a more realistic model that better captures how shared-memory algorithms are programmed and implemented today.
In particular, the binary-forking (BF) model more naturally captures nested-parallel programming, while getting rid of some unrealistic aspects of PRAMs, e.g., lock-step scheduling and the ability to perform arbitrary-way forking in constant depth.
Briefly, computation in the BF model can be viewed as a random-access machine (RAM) program with additional primitives to {\em fork} two sub-computations (themselves BF programs) and {\em join} back to the parent BF program once both sub-computations are finished.

Our PRAM implementations described in the previous sub-sections immediately give state-of-the-art BF implementations. 
In fact, some of the technical details about size estimation are completely eliminated since any BF computation that performs a parallel operation over $n$ objects requires $O(\log n)$ depth. Thus, we can afford to perform exact size computation in every step.
Implemented in the BF model, it is not hard to see that we require $O(\log^2 n)$ depth in BF for Algorithm~\ref{alg:f-apx}, and $O(\log^3 n)$ depth for Algorithm~\ref{alg:log-apx}.
Understanding whether one can achieve lower depth in the BF model for $f$ or $\hdelta$ approximation is an interesting direction for future work.

\section*{Acknowledgements}
We would like to thank Guy Blelloch and Mohsen Ghaffari for many insightful discussions during various stages of this project.
We are grateful to anonymous reviewers for their valuable feedback.
M.~Dinitz was in part supported by NSF awards CCF-1909111 and CCF-2228995.
S.~Mitrovi\'c was supported by the Google Research Scholar and NSF Faculty Early Career Development Program No.~2340048.

\bibliographystyle{alpha}
\bibliography{ref}

\end{document}